\documentclass{article}

\usepackage{PRIMEarxiv}

\usepackage[utf8]{inputenc}
\usepackage[T1]{fontenc}
\usepackage{hyperref}
\usepackage{url}
\usepackage{amsmath,amssymb,amsfonts,mathtools,amsthm}
\usepackage{bbm,blkarray}
\usepackage{graphicx}
\usepackage{booktabs,multirow,array,tabularx,longtable}
\usepackage{enumitem}
\usepackage{float}
\usepackage{pdflscape}
\usepackage{makeidx}
\usepackage{multicol}
\usepackage{comment}
\usepackage{adjustbox}
\usepackage{dirtree}
\usepackage{todonotes}
\usepackage{microtype}
\usepackage{nicefrac}

\usepackage{algorithm}
\usepackage{algorithmic}

\usepackage[compatibility=false]{caption}
\usepackage{subcaption}
\usepackage{tikz}
\usepackage{pgfplots}
\usetikzlibrary{calc,angles,quotes,arrows.meta,fit,positioning,backgrounds,trees}
\pgfplotsset{compat=1.18}

\usepackage[pageref]{backref}
\graphicspath{{./}}

\numberwithin{equation}{section}
\newtheorem{theorem}{Theorem}[section]

\newtheorem{lemma}[theorem]{Lemma}
\newtheorem{proposition}[theorem]{Proposition}
\newtheorem{corollary}[theorem]{Corollary}

\newtheorem{observation}[theorem]{Observation}

\theoremstyle{definition}
\newtheorem{definition}[theorem]{Definition}
\newtheorem{assumption}[theorem]{Assumption}

\newtheorem{example}[theorem]{Example}

\theoremstyle{remark}
\newtheorem{remark}[theorem]{Remark}

\setlist[itemize]{nosep,leftmargin=*,topsep=0pt,partopsep=0pt}

\makeindex

\renewcommand*{\backrefalt}[4]{%
\ifcase #1 %
(Not cited)%
\or
(Cited on p.~#2)%
\else
(Cited on pp.~#2)%
\fi
}

\title{Multiscale Exit--Join Dynamics: Tactical Consensus and Strategic Coalition Formation}

\author{
Quanyan Zhu\\
Department of Electrical and Computer Engineering\\
New York University Tandon School of Engineering\\
Brooklyn, NY, USA\\
\href{mailto:quanyan.zhu@nyu.edu}{\texttt{quanyan.zhu@nyu.edu}}
}

\begin{document}
\maketitle

\begin{abstract}
This paper develops a multiscale model of coalition formation in which
strategic exit--and--join decisions are coupled with tactical consensus
dynamics inside coalitions. Coalition value is generated endogenously from
within-coalition information aggregation, while Aumann--Dr\`eze payoffs,
switching frictions, and acceptance rules govern strategic reconfiguration. The
analysis characterizes joint tactical--strategic equilibria, conditions for
tactical unanimity, and mechanisms that sustain segregation or polarization.
Numerical experiments illustrate a central instability--consensus paradox: low
or negative switching barriers can prevent strategic convergence while
promoting enough temporal mixing to drive global tactical consensus.
\end{abstract}

\keywords{Coalition formation \and Consensus dynamics \and Exit--and--join dynamics \and Aumann--Dr\`eze value \and Strategic stability}

\section{Introduction}

Coalition processes are inherently multiscale. Agents form, leave, and
reconfigure coalitions at a strategic level, while the members of each coalition
execute tasks, exchange information, and coordinate at a tactical level. These
two layers are coupled: coalition structure determines who interacts with whom,
and the outcomes of interaction reshape the incentives for future coalition
formation. The result is a strategic--tactical dynamical architecture in which
structure and performance co-evolve
\cite{aumann1974cooperative,fele2017coalitional,chalkiadakis2004bayesian,hamed2023distributed}.
At the strategic level, agents decide whether to remain in their current
coalition, exit, or join another coalition, and these decisions reshape the
interaction network \cite{aumann1974cooperative,hart1989potential}. At the
tactical level, coalition members repeatedly communicate, aggregate
information, and often reach internal agreement through consensus-like dynamics
\cite{hegselmann2015opinion,proskurnikov2015opinion}. Strategic
reconfiguration therefore changes the tactical interaction pattern, while
tactical outcomes change the strategic payoff landscape
\cite{hamed2023distributed,fele2017coalitional}.

This coupling appears in many multi-agent systems. In robotic teams, autonomous
robots form coalitions for exploration, surveillance, transportation, or other
tasks, and distributed control protocols coordinate behavior within each team
\cite{fele2017coalitional}. In multi-agent AI systems, agents may collaborate
temporarily for distributed inference or learning and then reorganize
partnerships according to the utility of previous collaborations
\cite{chalkiadakis2004bayesian,hamed2023distributed}. In social networks,
individuals form professional circles, online communities, or friendship
groups; repeated interaction within these groups can produce shared views, while
restricted cross-group interaction can sustain echo chambers and polarization
\cite{hegselmann2015opinion,proskurnikov2015opinion,del2017modeling,baumann2020modeling}.
Segregated interaction patterns can be measured using network-based segregation
indices \cite{bojanowski2014measuring}, and large-scale polarization dynamics
have been documented empirically \cite{levin2021dynamics}.

Movement between coalitions is rarely frictionless. Switching can incur
explicit costs, such as time, resources, renegotiation, or institutional
barriers. It can also incur implicit costs, such as cognitive discomfort, social
resistance, or reluctance to leave familiar environments. Behavioral mechanisms
such as confirmation bias reinforce these frictions \cite{del2017modeling}. We
refer to these impediments collectively as \emph{coalition barriers}. Such
barriers regulate the amount of mixing between groups and therefore shape the
global connectivity of the interaction network over time
\cite{baumann2020modeling,bojanowski2014measuring}.

The exit--and--join dynamics studied in this paper provide the elementary
strategic move through which coalition structure changes. At a strategic date,
an agent identifies its current coalition as the origin coalition and considers
joining either another existing coalition or an empty destination, the latter
corresponding to the formation of a singleton. This candidate move changes only
the origin and destination coalitions: the agent is removed from the origin and
inserted into the destination, while all other coalitions remain fixed. The move
is implemented only when it passes two gates. First, the agent's Aumann--Dr\`eze
payoff after the move, net of switching costs, must strictly improve. Second,
the destination coalition must accept the entrant. Thus, the process is neither
unrestricted mobility nor centralized coalition optimization; it is a local,
incentive-driven reconfiguration rule constrained by frictions and acceptance.

\subsection{Poiesis and Praxis in Coalition Systems}

The strategic--tactical framework admits a natural interpretation through
Aristotle's distinction between \emph{poiesis} and \emph{praxis}. In the
\emph{Nicomachean Ethics}, poiesis denotes productive activity oriented toward
bringing something into being, whereas praxis denotes action whose end lies in
the activity itself \cite{gottlieb2015aristotle}. Coalition formation
corresponds to poiesis. At the strategic level, agents
constitute, dissolve, and reconfigure coalitions. This activity produces the
interaction structure within which collective behavior occurs; it is structural
and generative because it determines the topology of relations
\cite{aumann1974cooperative}. Tactical execution corresponds to praxis. Within
a formed coalition, agents
interact, deliberate, exchange information, and coordinate. Consensus formation
and collaborative task performance unfold inside the structure that has been
produced \cite{hegselmann2015opinion,proskurnikov2015opinion}. In this layer,
the activity of interaction and coordination is itself the locus of realization.
The two dimensions are inseparable. Poiesis determines who interacts with whom;
praxis determines how states, beliefs, or outputs evolve under that structure.
The outcomes of praxis feed back into subsequent poiesis, since performance and
agreement influence future coalition decisions
\cite{hart1989potential,hamed2023distributed}. Coalition systems therefore
exhibit a recurring cycle: structure is produced, action unfolds within it, and
action reshapes future structure.

\subsection{Related Work}

The strategic layer of the present model is rooted in cooperative game theory
and coalition formation. Classical value theory studies how a transferable
surplus should be allocated among agents, beginning with the Shapley value and
continuing through coalition-structure values such as the Aumann--Dr\`eze value
and the Owen value \cite{Shapley1953,aumann1974cooperative,owen1977values}.
Related stability notions appear in hedonic coalition formation, stable
partitions, and endogenous coalition-structure models
\cite{dreze1980hedonic,BogomolnaiaJackson2002,AptRadzik2009,ray1999theory}.
Dynamic coalition formation further emphasizes that coalition structures may be
the outcome of a sequential adjustment process rather than a static allocation
\cite{konishi2003coalition,apt2009coalition,filar2000dynamic,bauso2009robust,ZhuHan2026SplitMerge}.
In multi-agent systems, coalition methods have been developed for task
allocation, distributed problem solving, and coalitional control
\cite{shehory1998methods,rahwan2015coalition,fele2017coalitional}. These
literatures provide the strategic vocabulary used here, but they typically
treat the characteristic function or payoff primitives as exogenous. The
present paper instead derives coalition value from the tactical consensus
outcome generated inside each coalition.

The tactical layer builds on consensus and opinion dynamics. DeGroot averaging,
social-influence models, and distributed agreement protocols characterize how
agents aggregate information over a fixed or time-varying interaction graph
\cite{degroot1974reaching,friedkin1990social,jadbabaie2003coordination,moreau2005stability,olfati2007consensus}.
Work on bounded confidence, antagonistic interactions, echo chambers, and
polarization shows how network constraints and selective interaction can
produce fragmentation rather than unanimity
\cite{hegselmann2002opinion,proskurnikov2015opinion,altafini2013consensus,del2017modeling,baumann2020modeling}.
In contrast with models that take the communication graph as primitive, the
framework here makes the graph endogenous: strategic exit--and--join decisions
determine which agents interact tactically, while tactical consensus changes
the payoffs that drive future reconfiguration. The closest methodological
connection is therefore to work on dynamic and
learning-based coalitional games, where incentives, state evolution, and
allocation rules interact over time
\cite{smyrnakis2019game,hamed2023distributed}. The distinction is that our
state variable is not only a game payoff or resource level, but also a slow
opinion profile shaped by repeated within-coalition averaging. This yields a
multiscale mechanism in which strategic instability can either obstruct
coalition convergence or, through temporal mixing, support tactical consensus.

\subsection{Contributions}

This work asks what global patterns emerge when strategic coalition formation is
coupled with tactical within-coalition consensus dynamics. Depending on
connectivity patterns and barrier levels, the system may converge to global
unanimity, stabilize in segregated structures, or sustain polarization across
persistent clusters
\cite{proskurnikov2015opinion,baumann2020modeling}. The paper makes four main
contributions. First, it introduces a fast--slow model
in which transferable coalition value is generated endogenously by
within-coalition consensus. Second, it formulates state-dependent
exit--and--join dynamics with Aumann--Dr\`eze payoffs, switching frictions, and
acceptance constraints. Third, it characterizes tactical unanimity, strategic
stability, segregation, and polarization in terms of temporal connectivity,
coalition barriers, and the geometry of the performance landscape. Fourth, it
uses numerical experiments to illustrate a central phenomenon: strategic
instability can induce tactical unanimity.

The last point is the main conceptual message. When coalition barriers are low,
agents frequently reconfigure partnerships. Although the coalition structure may
remain strategically unstable, the union of interactions over time can become
connected, and repeated cross-group mixing can drive global consensus
\cite{hegselmann2015opinion}. Conversely, high barriers stabilize coalition
structure but can preserve disconnected interaction patterns, allowing
segregation or polarization to persist
\cite{bojanowski2014measuring,del2017modeling}. Coalition barriers therefore
become a design and control variable. Reducing
barriers promotes mixing and unanimity, while increasing barriers stabilizes
structure at the risk of fragmentation. By tuning incentives, mobility costs, or
communication constraints, one can influence the emergent
strategic--tactical dynamics \cite{fele2017coalitional,hamed2023distributed}.
The framework developed here provides a mathematical basis for analyzing these
multiscale coalition phenomena and for understanding how strategic decisions and
tactical consensus jointly determine long-run collective outcomes.

\section{Consensus--Induced Coalition Value with Separated Time Scales}
\label{sec:consensus_induced_value}

A central modeling step of our framework is to specify how the transferable
utility \(v(S)\) of a coalition \(S\subseteq N\) is generated endogenously from
internal (tactical-level) consensus dynamics, while coalition formation itself
evolves on a slower strategic time scale through exit--and--join deviations.
This fast--slow separation allows coalition value to emerge from information
aggregation and coordination processes internal to each coalition, while
preserving a decentralized noncooperative mechanism for coalition
reconfiguration.

\subsection{Tactical states and consensus representation}

\begin{definition}[Strategic and tactical states]
We explicitly distinguish two interacting time scales. On the strategic time
scale, \(\tau = 0,1,2,\dots\) indexes the moments at which coalition formation
decisions occur.
At each \(\tau\), the system is characterized by a coalition structure
\(
T(\tau)\in\Pi(N),
\)
where \(\Pi(N)\) denotes the set of all partitions of the agent set \(N\).
At this time scale, agents may unilaterally exit their current coalition and join
another coalition, subject to the exit--and--join rules. At strategic step
\(\tau\), internal consensus evolves over
\(K_\tau\in\mathbb{N}\) fast tactical rounds, indexed by
\(k=0,1,\dots,K_\tau\).
During this interval, the coalition structure \(T(\tau)\) is held fixed while
within-coalition opinion dynamics unfold independently across coalitions.
\end{definition}

Each agent \(i\in N\) possesses an intrinsic opinion (or internal state)
\(
x_i^0(\tau)\in\mathbb{R}^d,
\)
which represents the agent's private information, belief, preference, or local
estimate of a task-relevant quantity at strategic time \(\tau\). At the
beginning of the tactical phase associated with \(\tau\), these intrinsic
opinions initialize the within-coalition dynamics.
When an agent exits one coalition and joins another at strategic time
\(\tau+1\), its tactical state in the new coalition is initialized using the
updated intrinsic opinion \(x_i^0(\tau+1)\).
Thus, intrinsic opinions evolve slowly across strategic iterations, while
tactical states are transient.

Fix a strategic time \(\tau\) and a coalition \(S\in T(\tau)\).
Each agent \(i\in S\) holds a tactical opinion state
\(x_i(\tau,k)\in\mathbb{R}^d\), \(k=0,1,\dots,K_\tau\), and we define the
stacked coalition state \(x_S(\tau,k):=\mathrm{col}\bigl(x_i(\tau,k)\bigr)_{i\in S}
\in\mathbb{R}^{d|S|}\).

The interaction topology within coalition \(S\) is encoded by a row-stochastic
matrix \(P_S=[p_{ij}^S]\), with \(p_{ij}^S\ge0\) and
\(\sum_{j\in S}p_{ij}^S=1\), which is assumed fixed over the tactical window
\(\{0,\dots,K_\tau\}\).

The within-coalition consensus dynamics follow a DeGroot-type update:
\begin{equation}
\label{eq:degroot_tau_k}
x_i(\tau,k+1)
=
\sum_{j\in S} p_{ij}^S\, x_j(\tau,k),
\qquad i\in S,
\end{equation}
with initial condition
\begin{equation}
\label{eq:tactical_init}
x_i(\tau,0)=x_i^0(\tau),
\qquad i\in S.
\end{equation}
These dynamics evolve independently across coalitions in \(T(\tau)\).

\begin{assumption}[Primitive coalition interactions]
\label{ass:primitive_interactions}
For every coalition \(S\subseteq N\), the matrix \(P_S\) is
primitive.
\end{assumption}

\begin{proposition}[Coalition consensus representation]
\label{prop:coalition_consensus_representation}
Under Assumption~\ref{ass:primitive_interactions}, every coalition
\(S\subseteq N\) admits a unique stationary distribution
\(\pi^S\in\Delta^{|S|}\) satisfying \((\pi^S)^\top P_S=(\pi^S)^\top\) and
\(\sum_{i\in S}\pi_i^S=1\).
As \(K_\tau\to\infty\), the tactical dynamics converge to a
consensus state
\begin{equation}
\label{eq:consensus_limit}
\lim_{k\to\infty} x_i(\tau,k)
=
\bar x_S(\tau),
\qquad \forall i\in S,
\end{equation}
where
\begin{equation}
\label{eq:weighted_consensus}
\bar x_S(\tau)
=
\sum_{i\in S} \pi_i^S\, x_i^0(\tau).
\end{equation}

Thus, at the strategic level, each coalition \(S\) is represented by a single
consensus state \(\bar x_S(\tau)\in\mathbb{R}^d\).
\end{proposition}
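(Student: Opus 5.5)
The plan is to reduce the statement to the Perron--Frobenius theorem for primitive stochastic matrices and then apply it coordinatewise to the stacked dynamics \eqref{eq:degroot_tau_k}.

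\textbf{Step 1: the stationary distribution.} Since \(P_S\) is row-stochastic, \(P_S\mathbf{1}=\mathbf{1}\), so \(1\) is an eigenvalue. Gershgorin's theorem (or the bound \(\|P_S\|_\infty=1\)) gives spectral radius \(\rho(P_S)=1\). Under Assumption~\ref{ass:primitive_interactions}, Perron--Frobenius for primitive nonnegative matrices yields three facts:
\begin{itemize}
\item \(1\) is an algebraically simple eigenvalue;
\item every other eigenvalue satisfies \(|\lambda|<1\);
\item the corresponding left eigenvector can be chosen strictly positive.
\end{itemize}
Normalizing it to sum one gives \(\pi^S\in\Delta^{|S|}\) with \((\pi^S)^\top P_S=(\pi^S)^\top\). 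Uniqueness follows from simplicity: any other probability vector in the same one-dimensional left eigenspace must coincide with \(\pi^S\) after normalization. The singleton case \(|S|=1\) is trivial, with \(P_S=[1]\) and \(\pi^S=1\).

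\textbf{Step 2: convergence of powers.} I will show \(P_S^k\to \mathbf{1}(\pi^S)^\top\) as \(k\to\infty\). Use the spectral (Jordan) decomposition \(P_S=\mathbf{1}(\pi^S)^\top+R\), where \(R:=P_S-\mathbf{1}(\pi^S)^\top\). Using \(P_S\mathbf{1}=\mathbf{1}\), \((\pi^S)^\top P_S=(\pi^S)^\top\) and \((\pi^S)^\top\mathbf{1}=1\), one checks that \(\mathbf{1}(\pi^S)^\top R=R\,\mathbf{1}(\pi^S)^\top=0\). Hence
\[
P_S^k=\mathbf{1}(\pi^S)^\top+R^k .
\]
The spectrum of \(R\) consists of \(0\) together with the non-unit eigenvalues of \(P_S\), so \(\rho(R)<1\) and \(R^k\to0\) geometrically. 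This step is where primitivity, rather than mere irreducibility, is essential: irreducibility alone allows other eigenvalues on the unit circle, and then \(P_S^k\) oscillates. I therefore regard the spectral-gap claim as the main point to justify carefully, citing Perron--Frobenius (or, alternatively, \(P_S^m>0\) for some \(m\) together with a Dobrushin contraction argument) rather than rederiving it.

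\textbf{Step 3: conclusion for the tactical states.} Write the iteration in stacked form. Since \(x_S(\tau,k+1)=(P_S\otimes I_d)\,x_S(\tau,k)\), we get \(x_S(\tau,k)=(P_S^k\otimes I_d)\,x_S(\tau,0)\). Taking limits and using the initialization \eqref{eq:tactical_init}, each component converges to
\[
\lim_{k\to\infty} x_i(\tau,k)=\sum_{j\in S}\pi^S_j\, x_j^0(\tau)=\bar x_S(\tau)
\]
for every \(i\in S\), which is \eqref{eq:consensus_limit}--\eqref{eq:weighted_consensus}. The phrase ``as \(K_\tau\to\infty\)'' then means that the terminal tactical state \(x_i(\tau,K_\tau)\) approaches \(\bar x_S(\tau)\), at a geometric rate governed by the second-largest eigenvalue modulus of \(P_S\).
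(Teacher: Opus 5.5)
Your proof is correct and follows the same route as the paper. Perron--Frobenius for the primitive \(P_S\) gives a simple eigenvalue \(1\) with a unique positive stationary distribution \(\pi^S\), and \(P_S^k\to\mathbf 1(\pi^S)^\top\) is then applied to \(x_S(\tau,0)\); you additionally make explicit the decomposition \(P_S^k=\mathbf 1(\pi^S)^\top+R^k\) with \(\rho(R)<1\) and the Kronecker form of the \(d\)-dimensional iteration, both of which the paper leaves implicit.
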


\begin{proof}
Primitivity of \(P_S\) implies, by the Perron--Frobenius theorem, that \(1\) is
a simple eigenvalue and that the associated stationary distribution
\(\pi^S\) is unique. Moreover, \(P_S^k\to \mathbf 1(\pi^S)^\top\), and applying
this limit to the initial tactical state \(x_S(\tau,0)\) gives
\eqref{eq:consensus_limit}--\eqref{eq:weighted_consensus}.
\end{proof}

\subsection{Slow adaptation and induced coalition value}

At the end of the tactical phase, intrinsic opinions are updated according to a
slow adaptation rule that blends prior beliefs with coalition-level consensus.
For each agent \(i\in S\subseteq T(\tau)\), define
\begin{equation}
\label{eq:intrinsic_update}
x_i^0(\tau+1)
=
(1-\gamma_i)\,x_i^0(\tau)
+
\gamma_i\,\bar x_S(\tau),
\qquad
\gamma_i\in[0,1].
\end{equation}

This update captures learning, belief revision, or assimilation of collective
information. When \(\gamma_i=0\), agents retain fixed intrinsic opinions; when
\(\gamma_i=1\), intrinsic opinions fully reset to the coalition consensus.
The updated values \(x_i^0(\tau+1)\) serve as initial conditions for the tactical
dynamics at strategic time \(\tau+1\).

\begin{definition}[Consensus-induced characteristic function]
\label{def:consensus_induced_characteristic}
Coalition value is determined by the quality of the consensus outcome attained
within the coalition.
Let
\(
V:\mathbb{R}^d \to \mathbb{R}
\)
be a coalition performance functional.
We define the transferable-utility characteristic function
\begin{equation}
\label{eq:v_general_discrete}
v(S;\tau)
:=
V\!\left(\bar x_S(\tau)\right),
\qquad
v(\varnothing;\tau)=0.
\end{equation}
Hence, coalition value is generated endogenously by the steady-state outcome of
fast consensus dynamics, while slow updates of intrinsic opinions induce
intertemporal dependence across strategic steps.
\end{definition}

\section{State-Dependent Exit--and--Join Dynamics}
\label{sec:state_dependence_exit_join}

We now formalize the exit--and--join dynamics and characterize the induced
state dependence of coalition values and individual incentives.
Coalition-level consensus evolves endogenously as a fast process nested within
a slower strategic reconfiguration of coalition membership.
The resulting dynamics define a state-dependent cooperative game whose payoff
allocations are determined by the Aumann--Dr\`eze value.
The mechanism is local and sequential. At each strategic time, a single active
agent compares its current payoff with the payoff it would receive after moving
from its current coalition to a candidate destination. The destination may be an
existing coalition or the empty destination, which represents forming a
singleton. A candidate move first generates a trial coalition structure; it is
then evaluated using the consensus-induced characteristic function and the
Aumann--Dr\`eze payoff allocation in that trial structure. The move is executed
only if the entrant obtains a strict net payoff improvement after switching
costs and the destination coalition accepts the entrant. Once an implementable
move is executed, the next tactical phase recomputes coalition consensus under
the new membership structure.

\subsection{Strategic state, moves, and payoffs}

At each strategic time
\(
\tau = 0,1,2,\dots,
\)
the system state is the pair
\begin{equation}
\label{eq:system_state}
\mathcal{S}(\tau)
=
\bigl(T(\tau), \{\bar x_S(\tau)\}_{S\in T(\tau)}\bigr),
\end{equation}
where \(T(\tau)\in\Pi(N)\) is a coalition structure and
\(\bar x_S(\tau)\in\mathbb{R}^d\) denotes the steady-state consensus outcome of
coalition \(S\). For any agent \(i\in N\), let
\begin{equation}
\label{eq:current_coalition}
C_{T(\tau)}(i)\in T(\tau)
\end{equation}
denote the unique coalition containing agent \(i\) at time \(\tau\).
This coalition is the agent's origin coalition. An exit--and--join decision is
specified by choosing a destination coalition different from the origin, or by
choosing the empty destination when the agent forms a singleton.

\begin{definition}[Exit--and--join operator]
\label{def:exit_join_operator}
At strategic time \(\tau\), agent \(i\) may unilaterally select a destination
coalition
\begin{equation}
\label{eq:destination_set}
D \in \mathcal{D}_i\bigl(T(\tau)\bigr)
:=
T(\tau)\setminus\{C_{T(\tau)}(i)\}
\;\cup\;\{\varnothing\},
\end{equation}
where \(D=\varnothing\) corresponds to forming a singleton coalition.

An exit--and--join move by agent \(i\) from its current coalition
\(C=C_{T(\tau)}(i)\) to destination \(D\) induces a new coalition structure
\begin{equation}
\label{eq:exit_join_operator}
T(\tau+1)
=
F_i\bigl(T(\tau),D\bigr),
\end{equation}
where the exit--and--join operator \(F_i\) is defined by
\begin{equation}
\label{eq:exit_join_definition}
F_i(T,D)
:=
\bigl(T \setminus \{C,D\}\bigr)
\;\cup\;
\{C\setminus\{i\}\}
\;\cup\;
\{D\cup\{i\}\},
\end{equation}
with empty coalitions removed by convention.
All coalitions other than \(C\) and \(D\) remain unchanged.
\end{definition}

The operator \(F_i\) should be interpreted as a trial transition rather than as
an automatic update. It describes what the partition would become if the move
were carried out. The subsequent payoff and acceptance tests determine whether
this trial transition is actually implemented.

Coalition surplus at time \(\tau\) is generated by consensus outcomes according to
\begin{equation}
\label{eq:consensus_value}
v(S;\tau)
=
V\!\left(\bar x_S(\tau)\right),
\qquad
v(\varnothing;\tau)=0.
\end{equation}

Given a coalition structure \(T\), individual payoffs are allocated using the
Aumann--Dr\`eze value.
For each agent \(i\in N\),
\begin{equation}
\label{eq:ad_payoff}
\Omega_i\bigl(v(\cdot;\tau);T\bigr)
=
\phi_i\!\left(v_{|C_T(i)}(\cdot;\tau)\right),
\end{equation}
where \(\phi_i\) denotes the Shapley value and
\(v_{|C_T(i)}\) is the restriction of \(v\) to agent \(i\)'s current coalition.

Equivalently, if \(S=C_T(i)\),
\begin{equation}
\label{eq:ad_expansion}
\Omega_i
=
\sum_{R\subseteq S\setminus\{i\}}
\frac{|R|!\,(|S|-|R|-1)!}{|S|!}
\Bigl[
v(R\cup\{i\};\tau)-v(R;\tau)
\Bigr].
\end{equation}

Agent \(i\) evaluates a move to destination \(D\) according to the net payoff
\begin{equation}
\label{eq:exit_join_payoff}
u_i\bigl(D;T(\tau)\bigr)
=
\Omega_i\!\left(v(\cdot;\tau);F_i(T(\tau),D)\right)
-
c_i\!\left(T(\tau),D\right),
\end{equation}
where \(c_i(T,D)\ge 0\) is a switching cost.

The move is \emph{admissible} if
\begin{equation}
\label{eq:admissibility}
u_i\bigl(D;T(\tau)\bigr)
>
\Omega_i\!\left(v(\cdot;\tau);T(\tau)\right),
\end{equation}
and may additionally be subject to an acceptance constraint by the destination
coalition (e.g., Pareto non-decrease of incumbents' AD payoffs).
Admissibility is therefore the entrant's private incentive test. It asks
whether the move is profitable for the moving agent after the switching cost is
paid, before considering whether the destination coalition is willing to admit
the entrant.

\subsection{Acceptance, recomputation, and state dependence}
\label{subsec:acceptance_and_costs}

In addition to individual payoff improvement, exit--and--join moves may be
subject to acceptance by the destination coalition.
Acceptance rules capture institutional, technological, or social frictions that
prevent coalitions from being freely entered, even when unilateral incentives
signal improvement.

\begin{definition}[Acceptance correspondence]
\label{def:acceptance_correspondence}
For any coalition structure $T$ and destination coalition
$D \in T \cup \{\varnothing\}$, the acceptance correspondence
\(\mathcal A_D(T) \subseteq N\) specifies the set of agents whose entry into
$D$ is approved by its incumbent members. An exit--and--join move by agent $i$
to destination $D$ is \emph{accepted} if \(i \in \mathcal A_D(T)\).
\end{definition}

A canonical acceptance rule requires Pareto non-decrease of Aumann--Dr\`eze
payoffs for incumbents:
\begin{equation}
\label{eq:pareto_acceptance}
\Omega_j\!\left(v(\cdot;\tau);F_i(T,D)\right)
\ge
\Omega_j\!\left(v(\cdot;\tau);T\right),
\qquad
\forall j \in D.
\end{equation}
This rule ensures that no incumbent agent is made strictly worse off by admitting
the entrant.

More permissive or restrictive rules can be accommodated, including majority
approval, weighted consent, or threshold-based acceptance.
The singleton destination $D=\varnothing$ is assumed to accept all entrants by
convention.

Switching costs $c_i(T,D)$ capture frictions associated with leaving a coalition
and integrating into a new one.
These costs may represent coordination delays, renegotiation overhead,
loss of trust, or reinitialization of tactical processes.
We allow $c_i(T,D)$ to depend on both the origin and destination coalitions, and
to be heterogeneous across agents.

Switching costs affect admissibility but not acceptance.
In particular, even if a move is accepted by the destination coalition, it is
implemented only if the agent's net payoff strictly improves.
Conversely, a payoff-improving move may still be blocked if it fails the
acceptance rule.

\begin{definition}[Implementable move]
\label{def:implementable_move}
Combining payoff improvement, switching frictions, and acceptance, an
exit--and--join move by agent $i$ from $C$ to $D$ at time $\tau$ is
\emph{implementable} if and only if
\begin{equation}
\label{eq:implementable_move}
\begin{aligned}
&\Omega_i\!\left(v(\cdot;\tau);F_i(T(\tau),D)\right)
-
c_i(T(\tau),D)
>
\Omega_i\!\left(v(\cdot;\tau);T(\tau)\right),
\\[0.4em]
&\text{and}\qquad
i \in \mathcal A_D\bigl(T(\tau)\bigr).
\end{aligned}
\end{equation}
\end{definition}

This separation between individual incentives, switching frictions, and coalition
acceptance clarifies the strategic structure of exit--and--join dynamics and
allows the analysis to distinguish unilateral profitability from collective
admissibility.
In the terminology used below, an admissible move is individually profitable,
an accepted move passes the destination's entry rule, and an implementable move
satisfies both requirements. Only implementable moves change the coalition
structure.

Following an implementable exit--and--join move at strategic time $\tau$,
the coalition structure updates to $T(\tau+1)$.
The subsequent tactical phase (stage $\tau+1$) starts from the intrinsic
profile $x^0(\tau+1)$ produced by the reset map \eqref{eq:intrinsic_update}.
Thus, consensus at stage $\tau+1$ is computed from $x^0(\tau+1)$, not from
the previous intrinsic profile $x^0(\tau)$.

Specifically, let agent $i$ move from its origin coalition
$C=C_{T(\tau)}(i)$ to a destination coalition $D$ at time $\tau$.
Then $T(\tau+1)=F_i(T(\tau),D)$ and the coalitions whose membership changes are
\(C' := C\setminus\{i\}\) and \(D' := D\cup\{i\}\).
The coalition-wise consensus values at strategic time $\tau+1$ are given by
\begin{equation}
\label{eq:consensus_recompute_revised}
\begin{aligned}
\bar x_{C'}(\tau+1)
&=
\sum_{j\in C'}
\pi_j^{\,C'}\, x_j^0(\tau+1),\\[0.5em]
\bar x_{D'}(\tau+1)
&=
\sum_{j\in D'}
\pi_j^{\,D'}\, x_j^0(\tau+1),
\end{aligned}
\end{equation}
where $\pi^{\,C'}$ and $\pi^{\,D'}$ denote the stationary distributions
associated with the within-coalition interaction matrices for $C'$ and $D'$.

For every \emph{unaffected} coalition
$S \in T(\tau)\setminus\{C,D\}$,
the membership of $S$ is unchanged across the strategic transition,
that is, \(S\in T(\tau+1)\) and \(S\) contains the same set of agents as at
time \(\tau\).
Accordingly, the within-coalition interaction pattern governing
the DeGroot dynamics \eqref{eq:degroot_tau_k} remains the same.

However, the consensus \emph{value} at stage $\tau+1$
is computed from the updated intrinsic profile
$x^0(\tau+1)$ rather than $x^0(\tau)$.
Thus, even for unaffected coalitions,
\begin{equation}
\label{eq:consensus_unaffected_revised}
\bar x_S(\tau+1)
=
\lim_{k\to K_{\tau+1}}
x_i(\tau+1,k),
\qquad
\forall i\in S,
\end{equation}
where the tactical dynamics at stage $\tau+1$
are initialized according to
\(x_i(\tau+1,0)=x_i^0(\tau+1)\).

Consequently, an implementable exit--and--join move induces
a discrete transition in the strategic state
\(\mathcal S(\tau)=\bigl(T(\tau),x^0(\tau)\bigr)\mapsto
\mathcal S(\tau+1)=\bigl(T(\tau+1),x^0(\tau+1)\bigr)\),
and the tactical dynamics in stage $\tau+1$
evolve according to the DeGroot updates
\eqref{eq:degroot_tau_k}
with initial condition
\eqref{eq:tactical_init}
applied to the coalition structure $T(\tau+1)$.

In the present framework, coalition values are generated endogenously by
within-coalition consensus dynamics.
At strategic time \(\tau\), the induced transferable-utility characteristic
function is defined by
\begin{equation}
\label{eq:state_dependent_game}
v(S;\tau)
=
V\!\left(\bar x_S(\tau)\right),
\qquad S \subseteq N,
\end{equation}
where each coalition consensus state \(\bar x_S(\tau)\) is determined by the
current coalition structure \(T(\tau)\).

As a result, the cooperative game is intrinsically \emph{state dependent}:
different coalition structures induce different collections of consensus states,
and hence different characteristic functions, even though the underlying agent
set \(N\) remains fixed.
Coalition formation therefore gives rise to a sequence of cooperative games whose
payoff structure evolves endogenously with coalition membership through the
consensus mechanism.

Although all agents within a coalition \(S\) converge to a common tactical
consensus state \(\bar x_S(\tau)\), their strategic incentives are generally
heterogeneous.
Under the Aumann--Dr\`eze value, the payoff allocated to an agent
\(i \in S\) is determined by its expected marginal contribution across all
subcoalitions of \(S\).

For any subset \(Q \subseteq S \setminus \{i\}\), the marginal contribution of
agent \(i\) is given by
\begin{equation}
\label{eq:marginal_contribution}
v(Q \cup \{i\};\tau) - v(Q;\tau)
=
V\!\left(\bar x_{Q \cup \{i\}}(\tau)\right)
-
V\!\left(\bar x_Q(\tau)\right).
\end{equation}
This quantity captures how the intrinsic opinion of agent \(i\) alters the
consensus outcome of the subcoalition \(Q\), and thus quantifies the influence of
agent \(i\) on coalition-level performance.

Consequently, exit--and--join incentives are driven by comparative marginal
\emph{influence} on consensus outcomes across coalitions, rather than by
disagreement within a coalition.
Even when tactical consensus is fully achieved, strategic incentives for
coalition reconfiguration may persist due to asymmetries in agents'
contributions to collective performance.

\subsection{Algorithmic summary}
\label{subsec:algorithmic_summary}

Algorithm~\ref{alg:exit_join_ad_multiscale} summarizes the coupled evolution of
coalition structure, tactical consensus, and individual incentives under the
proposed fast--slow framework.
The algorithm explicitly separates \emph{tactical-level information aggregation}
from \emph{strategic-level coalition reconfiguration}, while allowing slow
intertemporal learning through updates of intrinsic opinions.
Strategic transitions are governed jointly by individual payoff improvement,
switching frictions, and coalition acceptance constraints.

\begin{algorithm}[p]
\caption{Consensus-Induced Exit--and--Join Dynamics with Tactical and Strategic Updates}
\label{alg:exit_join_ad_multiscale}
\footnotesize
\begin{algorithmic}[1]
\STATE Initialize coalition structure \(T(0)\)
\STATE Initialize intrinsic opinions \(\{x_i^0(0)\}_{i\in N}\)

\FOR{\(\tau = 0,1,2,\dots\)}

    \STATE \textit{Tactical-level consensus dynamics}
    \FOR{each coalition \(S \in T(\tau)\)}
        \STATE Initialize tactical states:
        \(x_i(\tau,0) \leftarrow x_i^0(\tau)\) for all \(i \in S\)
        \FOR{\(k = 0,1,\dots,K_\tau-1\)}
            \FOR{each \(i \in S\)}
                \STATE Update \(x_i(\tau,k+1)
                \leftarrow \sum_{j\in S} p_{ij}^S\,x_j(\tau,k)\)
            \ENDFOR
        \ENDFOR
        \STATE Compute \(\bar x_S(\tau)
        \leftarrow \sum_{i\in S} \pi_i^S\,x_i^0(\tau)\)
    \ENDFOR

    \STATE \textit{Strategic-level payoff evaluation}
    \STATE Compute coalition values
    \(v(S;\tau) = V(\bar x_S(\tau))\) for all \(S \in T(\tau)\)
    \STATE Compute Aumann--Dr\`eze payoffs
    \(\Omega_i(v(\cdot;\tau);T(\tau))\) for all \(i \in N\)

    \STATE \textit{Exit--and--join decision}
    \STATE Select an active agent \(i \in N\)
    \STATE Initialize \textsc{MoveFound} \(\leftarrow\) \textbf{false}

    \FOR{each destination \(D \in \mathcal{D}_i(T(\tau))\)}
        \STATE Compute net payoff
        \(u_i(D;T(\tau))\) via~\eqref{eq:exit_join_payoff}
        \IF{\(
        u_i(D;T(\tau)) >
        \Omega_i(v(\cdot;\tau);T(\tau))
        \ \textbf{and}\
        i \in \mathcal A_D(T(\tau))
        \)}
            \STATE Set \(D^\star \leftarrow D\)
            \STATE \textsc{MoveFound} \(\leftarrow\) \textbf{true}
            \STATE \textbf{break}
        \ENDIF
    \ENDFOR

    \STATE \textit{Strategic update}
    \IF{\textsc{MoveFound}}
        \STATE Update coalition structure:
        \(T(\tau+1) \leftarrow F_i(T(\tau),D^\star)\)
    \ELSE
        \STATE \textbf{terminate}
        \STATE \textit{Exit--and--join equilibrium reached}
    \ENDIF

    \STATE \textit{Slow intrinsic-opinion update}
    \FOR{each coalition \(S \in T(\tau+1)\)}
        \FOR{each agent \(i \in S\)}
            \STATE Update \(x_i^0(\tau+1)
            \leftarrow (1-\gamma_i)x_i^0(\tau)+\gamma_i\bar x_S(\tau)\)
        \ENDFOR
    \ENDFOR

\ENDFOR
\end{algorithmic}
\end{algorithm}

At the beginning of each strategic iteration \(\tau\), the coalition structure
\(T(\tau)\) and intrinsic opinions \(\{x_i^0(\tau)\}_{i\in N}\) are treated as
fixed.
The algorithm first executes a \emph{tactical consensus phase}, during which
within-coalition DeGroot dynamics aggregate private information and yield
coalition-level consensus states \(\{\bar x_S(\tau)\}_{S \in T(\tau)}\).
Once tactical consensus is formed, the algorithm proceeds to the
\emph{strategic evaluation phase}.
Coalition values are computed by applying the performance functional \(V\) to
each consensus outcome, and individual payoffs are allocated using the
Aumann--Dr\`eze value.
A single agent is then selected to consider a unilateral exit--and--join
deviation.
A deviation is implemented only if it yields a strictly higher net payoff after
accounting for switching costs and is accepted by the destination coalition.
If no such admissible and accepted deviation exists, the process terminates at an
exit--and--join equilibrium.
Following any strategic update, the algorithm performs a
\emph{slow intrinsic-opinion update}.
Each agent partially assimilates the coalition consensus into its intrinsic
opinion, introducing intertemporal dependence across coalition formations.
Through the repeated interaction of fast tactical consensus dynamics, strategic
incentive-driven coalition reconfiguration subject to acceptance constraints, and
slow belief adaptation, the algorithm provides an operational realization of
consensus-induced coalition formation under Aumann--Dr\`eze incentives.

\section{Time-Dependent Induced Exit--and--Join Game}
\label{subsec:induced_game_GT}

We now define the noncooperative game induced by coalition formation when
coalition surplus is generated endogenously through tactical consensus dynamics
and evolves jointly with intrinsic agent beliefs over strategic time.
While the overall coalition-formation process is dynamic, the induced game at
each strategic time is a static normal-form game whose payoff structure depends
on the prevailing strategic state.
The resulting formulation is therefore both \emph{state dependent} and
\emph{time indexed}.

\subsection{Induced game at a strategic state}

Let \(\tau \in \mathbb{N}\) denote the strategic time index.
At time \(\tau\), the strategic state of the system is given by
\(
\bigl(T(\tau), \{x_i^0(\tau)\}_{i\in N}\bigr),
\)
where \(T(\tau)\in\Pi(N)\) is a coalition structure and
\(x_i^0(\tau)\in\mathbb{R}^d\) is the intrinsic opinion of agent \(i\).

Conditional on \(T(\tau)\) and \(\{x_i^0(\tau)\}\), we assume that tactical
consensus dynamics within each coalition have fully converged.
Accordingly, each coalition \(S\in T(\tau)\) is represented by a consensus state
\(
\bar x_S(\tau)\in\mathbb{R}^d,
\)
defined as the unique fixed point of the within-coalition consensus dynamics.
This fixed point admits the weighted-average representation
\begin{equation}
\label{eq:GT_consensus_map}
\bar x_S(\tau)
=
\sum_{i\in S} \pi_i^S\, x_i^0(\tau),
\end{equation}
where \(\pi^S\in\Delta^{|S|}\) is the stationary distribution associated with the
interaction matrix of coalition \(S\).

The tactical consensus outcomes induce a transferable-utility characteristic
function at strategic time \(\tau\),
\begin{equation}
\label{eq:GT_characteristic}
v(S;\tau)
=
V\!\left(\bar x_S(\tau)\right),
\qquad
v(\varnothing;\tau)=0,
\end{equation}
where \(V:\mathbb{R}^d\to\mathbb{R}\) is a coalition performance functional.
Because the consensus states \(\bar x_S(\tau)\) depend on both the coalition
structure \(T(\tau)\) and intrinsic opinions \(\{x_i^0(\tau)\}\), the
characteristic function \(v(\cdot;\tau)\) evolves endogenously over strategic
time.

The player set is the agent set
\(
\mathcal N := N.
\)
For each agent \(i\in N\), let \(C_{T(\tau)}(i)\in T(\tau)\) denote the unique
coalition containing \(i\).
At coalition structure \(T(\tau)\), the admissible action set of agent \(i\) is
\begin{equation}
\label{eq:GT_action_set}
A_i\bigl(T(\tau)\bigr)
:=
\bigl(T(\tau)\setminus\{C_{T(\tau)}(i)\}\bigr)\;\cup\;\{\varnothing\}.
\end{equation}
An action \(D\in A_i(T(\tau))\) represents a unilateral exit--and--join decision,
with \(D=\varnothing\) corresponding to the formation of a singleton coalition.

Fix a strategic time $\tau$ and a coalition structure
$T(\tau) \in \Pi(N)$. For each agent $i \in N$, let
\(
C_{T(\tau)}(i) \in T(\tau)
\)
denote the unique coalition containing $i$, and let the admissible
destination set be
\(A_i\bigl(T(\tau)\bigr):=
\bigl(T(\tau)\setminus\{C_{T(\tau)}(i)\}\bigr)\cup\{\varnothing\}\).

For $D \in A_i\bigl(T(\tau)\bigr)$, the unilateral exit--and--join
operator is defined by
\begin{equation}
\label{eq:GT_exit_join_operator_clean}
F_i\bigl(T(\tau),D\bigr)
=
\bigl(T(\tau)\setminus\{C_{T(\tau)}(i),D\}\bigr)
\;\cup\;
\{C_{T(\tau)}(i)\setminus\{i\}\}
\;\cup\;
\{D\cup\{i\}\},
\end{equation}
with empty coalitions removed by convention.

Thus $F_i(T,D)$ denotes the coalition structure obtained when agent $i$
exits its current coalition $C_{T}(i)$ and joins the destination
coalition $D$. Only the origin and destination coalitions are modified;
all other coalitions remain unchanged.

The consensus state under a proposed deviation is updated locally. Let
$C := C_{T(\tau)}(i)$. Under the deviated structure
$T' = F_i(T(\tau),D)$, only the coalitions $C\setminus\{i\}$ and
$D\cup\{i\}$ require recomputation of their consensus states. These are
given by
\begin{equation}
\label{eq:GT_consensus_update_clean}
\begin{aligned}
\bar x_{C\setminus\{i\}}(\tau)
&=
\sum_{j\in C\setminus\{i\}}
\pi_j^{\,C\setminus\{i\}}\,
x_j^0(\tau),\\
\bar x_{D\cup\{i\}}(\tau)
&=
\sum_{j\in D\cup\{i\}}
\pi_j^{\,D\cup\{i\}}\,
x_j^0(\tau).
\end{aligned}
\end{equation}
For every other coalition $S \in T(\tau)$ with
$S \notin \{C,D\}$, the consensus state remains unchanged.

Let $\Omega_i(v(\cdot;\tau);T)$ denote the Aumann--Dr\`eze value of
agent $i$ under coalition structure $T$ for the characteristic
function $v(\cdot;\tau)$.

\begin{definition}[Time-dependent induced exit--and--join game]
\label{def:induced_exit_join_game}
Fix a strategic time $\tau$ and a coalition structure
$T(\tau) \in \Pi(N)$.

For each agent $i \in N$ and each destination
$D \in A_i(T(\tau))$, define the unilateral payoff
\begin{equation}
\label{eq:GT_payoff_clean}
u_i(D;T(\tau),\tau)
=
\Omega_i\!\left(
v(\cdot;\tau);
F_i\bigl(T(\tau),D\bigr)
\right)
-
c_i\bigl(T(\tau),D\bigr),
\end{equation}
where $c_i(T,D)\ge 0$ denotes a switching cost.

The induced exit--and--join game at time $\tau$ is the
normal-form game
\begin{equation}
\label{eq:GT_definition_clean}
G_\tau\bigl(T(\tau)\bigr)
=
\Bigl(
N,\{A_i(T(\tau))\}_{i\in N},
\{u_i(\cdot;T(\tau),\tau)\}_{i\in N}
\Bigr).
\end{equation}

Payoffs are evaluated under unilateral deviations:
only the move of agent $i$ is implemented, while all other agents'
coalition memberships remain fixed.
\end{definition}

\begin{remark}[Augmented payoffs, effort, and switching gains]
\label{rem:augmented_payoffs}
The payoff specification \eqref{eq:GT_payoff_clean} separates
coalitional surplus allocation (via the Aumann--Dr\`eze value)
from switching frictions. More generally, the payoff of agent $i$
may incorporate additional effort or operational terms associated
with participation in a coalition.

A convenient formulation is
\(u_i(D;T(\tau),\tau)=
\Omega_i\!\left(v(\cdot;\tau);F_i\bigl(T(\tau),D\bigr)\right)
-\kappa_i\bigl(T(\tau),D\bigr)\),
where the generalized transition term
\(
\kappa_i(T,D)
=
c_i(T,D) + e_i\!\left(F_i(T,D)\right)
\)
may include switching costs such as renegotiation, delay, or loss of trust;
internal effort costs within the destination coalition; coordination or
cognitive burden; and incentive payments.

Importantly, $\kappa_i(T,D)$ need not be nonnegative.
In some environments, switching may generate a net gain
due to external incentives, subsidies, improved technology,
or reduced effort requirements in the destination coalition.
In such cases, $\kappa_i(T,D) < 0$ represents a switching gain.

\end{remark}

Although each game $G_\tau(T(\tau))$ is static, the family
$\{G_\tau\}_{\tau\in\mathbb{N}}$ evolves endogenously as intrinsic
opinions, consensus states, and coalition structures change over time.
Coalition formation therefore unfolds as a sequence of
state-dependent, time-indexed noncooperative games coupled through
fast tactical consensus dynamics and slow belief adaptation.

\subsection{Equilibrium and fixed-point characterization}

The equilibrium concept must jointly account for tactical consensus formation,
strategic coalition stability, and slow belief adaptation across time scales.

\begin{definition}[Joint tactical--strategic equilibrium]\label{def:eqm}
A pair
\(
\bigl(T^\star,\{x_i^{0,\star}\}_{i\in N}\bigr)
\)
is called a \emph{joint tactical--strategic equilibrium} if the following
conditions hold.
For the coalition structure \(T^\star\), the within-coalition consensus
dynamics converge as the tactical horizon tends to infinity, yielding coalition
consensus states \(\bar x_{C_{T^\star}(i)}^\star\), where
\(\bar x_{C_{T^\star}(i)}^\star\) is the \(k\to\infty\) limit of
\(x_i(\tau,k)\) for each \(i\in N\), which are invariant across strategic
iterations.
Given the consensus states
\(\{\bar x_S^\star\}_{S\in T^\star}\), no agent has a profitable unilateral
exit--and--join deviation:
\begin{equation}
\label{eq:GT_joint_equilibrium}
u_i\bigl(C_{T^\star}(i);T^\star\bigr)
\ge
u_i(D;T^\star),
\qquad
\forall i\in N,\;\forall D\in A_i(T^\star).
\end{equation}

Intrinsic opinions are stationary under the slow belief-update dynamics,
\begin{equation}
\label{eq:intrinsic_fixed_point}
x_i^{0,\star}
=
(1-\gamma_i)x_i^{0,\star}
+
\gamma_i \bar x_{C_{T^\star}(i)}^\star,
\qquad \forall i\in N.
\end{equation}
For \(\gamma_i>0\), this condition is equivalent to
\begin{equation}
\label{eq:intrinsic_equals_consensus}
x_i^{0,\star}
=
\bar x_{C_{T^\star}(i)}^\star,
\qquad \forall i\in N.
\end{equation}
\end{definition}

We characterize joint tactical--strategic equilibria as fixed points
of a discrete-time operator defined on an augmented state space.

\begin{definition}[Composite tactical--strategic operator]
\label{def:composite_operator}
Let $\mathcal X := \Pi(N)\times(\mathbb{R}^d)^N$ denote the joint state
space. An element $(T,x^0)\in\mathcal X$ consists of a coalition
structure $T\in\Pi(N)$ and an intrinsic-opinion profile
$x^0=(x_i^0)_{i\in N}\in(\mathbb{R}^d)^N$.

The first component is the tactical consensus map
$\mathcal C:\mathcal X\to\mathcal X$, defined by
\(\mathcal C(T,x^0)=(T,\bar x)\),
where for each coalition $S\in T$ and every $i\in S$,
$\bar x_i=\bar x_S := \sum_{j\in S}\pi_j^S x_j^0$,
and $\pi^S$ denotes the stationary distribution of the primitive
interaction matrix associated with coalition $S$.
Thus $\mathcal C$ replaces intrinsic opinions by coalition-level
consensus states while leaving the partition unchanged.

Given $(T,\bar x)\in\mathcal X$, define coalition values
$v_T(S)=V(\bar x_S)$ for $S\subseteq N$ and compute the corresponding
Aumann--Dr\`eze payoffs.
The strategic exit--and--join map
$\mathcal S:\mathcal X\rightrightarrows\mathcal X$ denotes the
(possibly set-valued) strategic map selecting all states $(T',\bar x')$
such that $T'$ is obtained from $T$ via an admissible unilateral
exit--and--join deviation under payoffs induced by $v_T$, with
$\bar x'=\bar x$ whenever $T'=T$.
If $T$ admits no profitable unilateral deviation, then
$\mathcal S(T,\bar x)=\{(T,\bar x)\}$.
A selection rule (e.g., sequential activation) renders $\mathcal S$
single-valued.

Finally, the slow belief-update map
$\mathcal B:\mathcal X\to\mathcal X$ is defined by
\(\mathcal B(T,x)=(T,x^+)\), where
\(x_i^{+}=(1-\gamma_i)x_i+\gamma_i x_{C_T(i)}\),
for all $i\in N$, with $\gamma_i\in(0,1]$ and $x_{C_T(i)}$ denoting
the common consensus value of the coalition containing $i$.

Fix a selection rule for $\mathcal S$. The one-step evolution of the
system is described by the composite operator
\(\Phi := \mathcal B\circ\mathcal S\circ\mathcal C:\mathcal X\to\mathcal X\),
so that $(T(\tau),x^0(\tau))$ evolves according to
$(T(\tau+1),x^0(\tau+1))=\Phi(T(\tau),x^0(\tau))$.
\end{definition}

\begin{proposition}[Fixed-point characterization]
\label{prop:joint_equilibrium_fixed_point}
A state $(T^\star,x^{0,\star})\in\mathcal X$ is a
joint tactical--strategic equilibrium if and only if it satisfies
$\Phi(T^\star,x^{0,\star})=(T^\star,x^{0,\star})$.
\end{proposition}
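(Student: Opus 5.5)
The plan is to prove both implications by unwinding the composition $\Phi=\mathcal B\circ\mathcal S\circ\mathcal C$ one map at a time. The approach rests on three structural facts. First, $\mathcal C$ and $\mathcal B$ never change the partition. Second, $\mathcal S$ returns $(T,\bar x)$ exactly when $T$ admits no admissible deviation. Third, under $\gamma_i\in(0,1]$, the map $\mathcal B$ fixes a profile only when each entry equals its coalition's common value. I will keep the convention of Definition~\ref{def:composite_operator} that $\mathcal B$ uses the consensus value $\bar x_{C_T(i)}$ produced by $\mathcal C$. I also read the stability condition \eqref{eq:GT_joint_equilibrium} as the absence of an admissible (strictly profitable) deviation, with staying costless, $c_i(T,C_T(i))=0$. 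This reading makes \eqref{eq:GT_joint_equilibrium} and the "no profitable deviation" clause of $\mathcal S$ coincide.

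\emph{($\Rightarrow$)} Let $(T^\star,x^{0,\star})$ be a joint equilibrium.
\begin{enumerate}
\item Apply $\mathcal C$ to get $(T^\star,\bar x^\star)$.
\item By \eqref{eq:GT_joint_equilibrium}, no admissible deviation exists at $T^\star$ under $v_{T^\star}$. Hence $\mathcal S(T^\star,\bar x^\star)=\{(T^\star,\bar x^\star)\}$, and every selection rule returns this point.
\item $\mathcal B$ then yields $x_i^+=(1-\gamma_i)\bar x^\star_{C(i)}+\gamma_i\bar x^\star_{C(i)}$.
\item By \eqref{eq:intrinsic_equals_consensus}, $\bar x^\star_{C(i)}=x_i^{0,\star}$, so $x_i^+=x_i^{0,\star}$ and $\Phi$ fixes the state.
\end{enumerate}
Consistency requires $\bar x_S^\star=\sum_{j\in S}\pi_j^S x_j^{0,\star}$. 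This holds because all $x_j^{0,\star}$ with $j\in S$ equal $\bar x^\star_S$ and $\pi^S$ sums to one, using Proposition~\ref{prop:coalition_consensus_representation}. The invariance of consensus states across iterations follows for the same reason.

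\emph{($\Leftarrow$)} Suppose $\Phi(T^\star,x^{0,\star})=(T^\star,x^{0,\star})$.
\begin{enumerate}
\item Since $\mathcal B$ preserves partitions, the partition output by $\mathcal S\circ\mathcal C$ is $T^\star$.
\item Every admissible deviation changes the partition: moving $i$ to $D\ne C_{T}(i)$ gives $F_i(T,D)\neq T$. For $D=\varnothing$ with $i$ already a singleton, I will treat the move as a null move and exclude it, or else note that it is unprofitable because costs are nonnegative.
\item Hence the selected $\mathcal S$-image equal to $T^\star$ can arise only from the no-deviation clause. This gives \eqref{eq:GT_joint_equilibrium}.
\item This step is where the selection rule matters. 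I must argue that the selection picks a deviation whenever one exists, as Algorithm~\ref{alg:exit_join_ad_multiscale} does up to the choice of active agent. Otherwise the fixed point holds only relative to the activated agent. I expect this to be the main obstacle, and I will state it explicitly as a requirement on the selection rule: it returns the identity only when the no-deviation clause applies.
\end{enumerate}
Then $\mathcal S$ acts as the identity, and $\Phi$-invariance gives $x_i^{0,\star}=(1-\gamma_i)x_i^{0,\star}+\gamma_i\bar x^\star_{C(i)}$, which is \eqref{eq:intrinsic_fixed_point}. Since $\gamma_i>0$, this yields \eqref{eq:intrinsic_equals_consensus}. Finally, the averaging identity above shows that recomputing consensus reproduces $\bar x^\star$. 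So the tactical limits of Proposition~\ref{prop:coalition_consensus_representation} are invariant, which completes the equilibrium conditions.
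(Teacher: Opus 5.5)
Your proposal is correct and follows the same route as the paper: unwind $\Phi=\mathcal B\circ\mathcal S\circ\mathcal C$ and check that each component fixes the state exactly when tactical consistency, absence of admissible deviations, and belief stationarity hold; your handling of the null singleton move, the costless-stay reading of \eqref{eq:GT_joint_equilibrium}, and the recomputed consensus is more careful than the paper's sketch. The selection-rule concern in your step~4 resolves automatically, because any selection of the correspondence $\mathcal S$ must pick from $\mathcal S(T,\bar x)$, and that set contains only partition-changing states whenever an admissible deviation exists (also, under your own convention that $\mathcal B$ acts on the output of $\mathcal C$, invariance yields $x_i^{0,\star}=\bar x^\star_{C(i)}$ directly rather than via $\gamma_i>0$).
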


\begin{proof}
By Definition~\ref{def:composite_operator}, the map \(\Phi\) first applies the
tactical consensus map, then the strategic exit--and--join map, and finally the
slow belief-update map. A fixed point therefore has tactical consistency,
admits no profitable unilateral exit--and--join deviation, and is invariant
under the belief-update rule. These are exactly the requirements in
Definition~\ref{def:eqm}. Conversely, any state satisfying
Definition~\ref{def:eqm} is unchanged by each component of \(\Phi\), and hence
is fixed by their composition.
\end{proof}

At such a state, \(x_i^{0,\star}=\bar x_{C_{T^\star}(i)}\) for all \(i\),
\(T^\star\) admits no profitable unilateral exit--and--join deviation, and
intrinsic opinions are invariant under the belief-update rule.
Hence no further evolution occurs at either time scale.

\subsection{Existence}
\label{subsec:existence_joint_equilibrium}

\begin{assumption}[Finite-improvement setting]
\label{ass:finite_improvement_setting}
The agent set $N$ is finite;
for every coalition $S\subseteq N$, the interaction matrix $P_S$ is
row-stochastic, irreducible, and aperiodic; switching costs satisfy
$c_i(T,D)\ge 0$ for all $i,T,D$; belief-update gains satisfy
$\gamma_i\in(0,1]$ for all $i\in N$; and, for every fixed
intrinsic-opinion profile, the directed graph of implementable strict
exit--and--join improvements on \(\Pi(N)\) is acyclic. Equivalently, the
strategic improvement process has the finite-improvement property.
\end{assumption}

We now give a sufficient condition for the composite operator \(\Phi\) to admit
a fixed point.

\begin{remark}[Consensus convergence]
If $P_S$ is row-stochastic, irreducible, and aperiodic, then by the
Perron--Frobenius theorem the eigenvalue $1$ is simple and there exists
a unique stationary distribution $\pi^S\in\Delta^{|S|}$ satisfying
$(\pi^S)^\top P_S = (\pi^S)^\top$.
Moreover, the DeGroot dynamics satisfy
$P_S^k \to \mathbf 1 (\pi^S)^\top$ as $k\to\infty$,
so opinions within coalition $S$ converge to the unique consensus
$\bar x_S = \sum_{i\in S}\pi_i^S x_i^0$.
\end{remark}

\begin{lemma}
\label{lem:tactical_existence}
For any coalition structure \(T\in\Pi(N)\) and intrinsic-opinion profile
\(\{x_i^0\}_{i\in N}\), the tactical consensus map \(\mathcal C\) is uniquely
defined.
\end{lemma}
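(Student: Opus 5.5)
The plan is to check that each ingredient in the definition of $\mathcal C$ in Definition~\ref{def:composite_operator} exists and is determined uniquely by the pair $(T,x^0)$. The map sends $(T,x^0)$ to $(T,\bar x)$, where $\bar x_i=\bar x_S=\sum_{j\in S}\pi_j^S x_j^0$ for the coalition $S\in T$ containing $i$. Two things must therefore be shown: that the index $S$ attached to each agent is unambiguous, and that the weights $\pi^S$ exist and are unique.

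\emph{Step 1 (well-posed indexing).} Since $T\in\Pi(N)$ is a partition of the finite set $N$, every agent $i$ lies in exactly one block $C_T(i)\in T$. Hence the assignment $i\mapsto \bar x_{C_T(i)}$ is a function on $N$. The partition component is copied unchanged, so the output lies in $\mathcal X=\Pi(N)\times(\mathbb R^d)^N$.

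\emph{Step 2 (unique weights).} Fix a nonempty block $S\in T$. Under Assumption~\ref{ass:finite_improvement_setting}, $P_S$ is row-stochastic, irreducible and aperiodic, so it is primitive. By the Perron--Frobenius argument in Proposition~\ref{prop:coalition_consensus_representation} and the subsequent Remark, the eigenvalue $1$ is simple. It follows that there is a unique $\pi^S\in\Delta^{|S|}$ with $(\pi^S)^\top P_S=(\pi^S)^\top$. Consequently $\bar x_S=\sum_{j\in S}\pi_j^S x_j^0$ is a single, uniquely determined vector in $\mathbb R^d$, since it is a finite convex combination.

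\emph{Step 3 (consistency with the dynamics).} I would also observe that $P_S^k\to\mathbf 1(\pi^S)^\top$. This gives $\lim_k x_i(\tau,k)=\bar x_S$ for every $i\in S$ and for every initial condition. So $\mathcal C$ coincides with the $k\to\infty$ limit of the DeGroot dynamics \eqref{eq:degroot_tau_k}--\eqref{eq:tactical_init} and does not depend on how it is represented. Because the coalitions evolve independently, this identification can be made block by block.

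The only delicate point is singleton blocks, where $P_S=[1]$. This matrix is trivially primitive, so $\pi^S=1$ and $\bar x_S=x_i^0$. The hypothesis applies to every $S\subseteq N$, which guarantees that all blocks of an arbitrary partition are covered. Beyond this, the lemma is a direct assembly of Steps 1--2, and I do not expect a genuine obstacle.
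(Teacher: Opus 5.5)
Your proposal is correct and follows the paper's argument: primitivity of each $P_S$ gives, via Perron--Frobenius, a unique stationary distribution $\pi^S$, and hence a uniquely determined consensus $\bar x_S$ for every block of $T$. Your Steps 1 and 3 and the singleton check supply bookkeeping the paper leaves implicit: well-defined indexing through the partition, and identification of $\mathcal C$ with the DeGroot limit. They do not change the route.
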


\begin{proof}
By primitivity of \(P_S\), the DeGroot consensus dynamics within each coalition
\(S\) converge to a unique stationary distribution \(\pi^S\).
Hence the consensus state \(\bar x_S\) exists and is unique for all
\(S\in T\). \qedhere
\end{proof}
\begin{lemma}
\label{lem:strategic_existence}
For any fixed intrinsic-opinion profile \(\{x_i^0\}_{i\in N}\), there exists at
least one coalition structure admitting no implementable profitable unilateral
exit--and--join deviation.
\end{lemma}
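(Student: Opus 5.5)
With the intrinsic-opinion profile held fixed, every consensus state $\bar x_S$ is determined by Lemma~\ref{lem:tactical_existence}. The plan is therefore to treat the strategic layer as a finite directed graph and extract a sink from the acyclicity hypothesis in Assumption~\ref{ass:finite_improvement_setting}. Concretely, fix $\{x_i^0\}_{i\in N}$. For every $S\subseteq N$ set $\bar x_S=\sum_{j\in S}\pi_j^S x_j^0$, which is well defined and unique because $P_S$ is irreducible and aperiodic. This gives a fixed characteristic function $v(S)=V(\bar x_S)$ on all subsets, so the Aumann--Dr\`eze payoffs $\Omega_i(v;T)$ in \eqref{eq:ad_expansion} are well-defined real numbers for every $T\in\Pi(N)$.

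Next I would define the improvement digraph $\mathcal G$ on the vertex set $\Pi(N)$. There is an arc $T\to F_i(T,D)$ whenever the move of agent $i$ to $D\in\mathcal D_i(T)$ is implementable in the sense of Definition~\ref{def:implementable_move}: the net payoff strictly improves after paying $c_i(T,D)$, and $i\in\mathcal A_D(T)$. One point to verify is that $F_i(T,D)\neq T$ for every such move, so there are no self-loops. This holds because $D\neq C_T(i)$, so $i$ genuinely changes coalition. The only apparent exception is a singleton agent choosing $D=\varnothing$, which reproduces $T$. In that case the payoff comparison is $\Omega_i-c_i>\Omega_i$, which is impossible since $c_i\ge 0$.

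The vertex set is finite because $|\Pi(N)|$ is the Bell number $B_{|N|}$. By Assumption~\ref{ass:finite_improvement_setting}, $\mathcal G$ is acyclic. A finite acyclic digraph always has a sink: starting from any $T_0$, follow outgoing arcs. No vertex can repeat, since a repeat would close a cycle, so the walk stops after at most $B_{|N|}-1$ steps at some $T^\star$ with no outgoing arc. Having no outgoing arc means no agent has an implementable profitable unilateral exit--and--join deviation from $T^\star$, which is the claim. Equivalently, by the finite-improvement property, Algorithm~\ref{alg:exit_join_ad_multiscale} run with frozen opinions terminates at such a $T^\star$.

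Given the assumption, there is no serious obstacle; the proof is a direct unpacking of it. The step needing the most care is making sure the payoffs do not drift along the walk. Within this lemma $x^0$ is frozen and the belief update $\mathcal B$ is not applied, so $v$ is the same function at every vertex and $\mathcal G$ is a fixed graph. If one wanted to avoid the acyclicity hypothesis altogether, one would need a potential function, for example when $v$ admits a Hart--Mas-Colell potential and costs are zero. That is not required here, and I would note it only as a remark.
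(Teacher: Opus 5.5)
Your proof is correct and is the paper's argument: build the digraph of implementable strict improvements on the finite set $\Pi(N)$, use the acyclicity in Assumption~\ref{ass:finite_improvement_setting}, and take a sink. Your extra checks (that $v$ stays fixed when opinions are frozen, that there are no self-loops, and the explicit walk to a sink) are sound. Only your aside about Algorithm~\ref{alg:exit_join_ad_multiscale} needs care: that algorithm halts as soon as the one selected agent has no implementable move, which is not a sink of the whole graph unless every agent is checked.
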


\begin{proof}
The set \(\Pi(N)\) of coalition structures is finite.
Define a directed graph on \(\Pi(N)\) by introducing an edge
\(T\to T^{i\to D}\) whenever agent \(i\) has an implementable strictly profitable
exit--and--join deviation from \(T\).
By the finite-improvement property, this finite directed graph is acyclic.
Every finite acyclic directed graph contains at least one sink node, and any such
sink corresponds to a coalition structure with no implementable profitable
unilateral deviation. \qedhere
\end{proof}
\begin{lemma}
\label{lem:belief_existence}
For any coalition structure \(T\in\Pi(N)\), the belief-update map admits a
nonempty set of fixed points characterized by
\(x_i^{0,\star}=\bar x_{C_T(i)}\) for all \(i\in N\).
Equivalently, intrinsic opinions are constant on each coalition of \(T\).
\end{lemma}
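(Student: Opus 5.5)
The plan is to work directly with the belief-update map $\mathcal B$ of Definition~\ref{def:composite_operator}, composed with the tactical consensus map $\mathcal C$ so that the consensus value is recomputed from the current profile. Fix $T\in\Pi(N)$ and write $\bar x_S(x^0)=\sum_{j\in S}\pi_j^S x_j^0$ for $S\in T$. By Lemma~\ref{lem:tactical_existence} this quantity is well defined. A profile $x^{0,\star}$ is a fixed point exactly when
\[
x_i^{0,\star}=(1-\gamma_i)x_i^{0,\star}+\gamma_i\,\bar x_{C_T(i)}(x^{0,\star})\quad\text{for all } i\in N .
\]
Since $\gamma_i\in(0,1]$ under Assumption~\ref{ass:finite_improvement_setting}, we may cancel $\gamma_i$. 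This turns the fixed-point condition into $x_i^{0,\star}=\bar x_{C_T(i)}(x^{0,\star})$ for every $i$, which is the characterization claimed in the statement; it is the same equivalence as \eqref{eq:intrinsic_fixed_point}$\Leftrightarrow$\eqref{eq:intrinsic_equals_consensus}.

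The next step shows that this condition is equivalent to $x^{0,\star}$ being constant on each coalition. For one direction, suppose $x_i^{0,\star}=\bar x_S$ for all $i\in S$. Then every member of $S$ holds the same vector, so the profile is constant on $S$. For the converse, suppose $x_j^{0,\star}=c_S$ for all $j\in S$. Because $\pi^S$ is a probability vector, $\sum_j\pi_j^S=1$, and therefore $\bar x_S=\sum_j\pi_j^S c_S=c_S=x_i^{0,\star}$. This gives the stated equivalence with coalition-wise constant profiles.

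Nonemptiness then follows by an explicit construction. Choose any vectors $c_S\in\mathbb R^d$, one for each $S\in T$, and set $x_i^{0,\star}=c_{C_T(i)}$. By the previous step this profile is a fixed point. In fact the fixed-point set is the linear subspace $\{x^0: x^0 \text{ constant on each } S\in T\}$, which is isomorphic to $(\mathbb R^d)^{|T|}$ and hence nonempty.

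The only real obstacle is a notational ambiguity. It must be made explicit that $\bar x_{C_T(i)}$ is evaluated at the fixed point itself, rather than being treated as an exogenous vector. If $\bar x$ were held fixed, the fixed point would still exist: it is $x_i^{0}=\bar x_{C_T(i)}$, and it would be unique. However, the "equivalently constant" clause would then require consistency, namely that $\bar x$ is itself the consensus of that profile. I will state the self-consistent interpretation and note that both readings agree through the convexity identity $\sum_j\pi_j^S=1$.
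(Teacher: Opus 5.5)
Your proposal is correct and follows the paper's proof: you cancel $\gamma_i>0$ in the fixed-point equation to get $x_i^{0,\star}=\bar x_{C_T(i)}$, check that coalition-wise constant profiles satisfy it, and get nonemptiness by assigning an arbitrary vector to each coalition. You also make explicit the identity $\sum_{j\in S}\pi_j^S=1$ that the paper leaves implicit, and your self-consistent reading of $\bar x_{C_T(i)}$ is the one the paper's converse step relies on.
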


\begin{proof}
For each agent \(i\), fixed-point consistency of
\eqref{eq:intrinsic_update} gives
\(\gamma_i\bigl(\bar x_{C_T(i)}-x_i^{0,\star}\bigr)=0\).
Since \(\gamma_i>0\), this is equivalent to
\(x_i^{0,\star}=\bar x_{C_T(i)}\). Conversely, any profile that is constant on
each coalition satisfies this equality and is therefore fixed by the
belief-update map. Such profiles exist, for example by assigning an arbitrary
vector in \(\mathbb R^d\) to each coalition. \qedhere
\end{proof}
\begin{theorem}[Existence under finite improvement]
\label{thm:existence_joint_equilibrium}
Under Assumption~\ref{ass:finite_improvement_setting}, there exists at least one joint
tactical--strategic equilibrium.
\end{theorem}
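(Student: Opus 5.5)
The plan is to build a fixed point of $\Phi=\mathcal B\circ\mathcal S\circ\mathcal C$ directly and then invoke Proposition~\ref{prop:joint_equilibrium_fixed_point}. The three lemmas just proved supply the pieces, but they cannot simply be applied in sequence. Lemma~\ref{lem:strategic_existence} gives a stable partition for a \emph{fixed} opinion profile, while Lemma~\ref{lem:belief_existence} forces the profile to be constant on the coalitions of that partition. The whole argument therefore rests on choosing the opinion profile in a way that is compatible with both requirements.

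I will first fix a coalition-constant profile and only then choose the partition. The simplest choice is a \emph{globally} constant profile: pick any $a\in\mathbb R^d$ and set $x_i^{0}=a$ for all $i\in N$. This profile is trivially constant on every coalition of every partition. Since each $\pi^S$ is a probability vector, the tactical map gives $\bar x_S=\sum_{j\in S}\pi^S_j a=a$ for every nonempty $S$. Lemma~\ref{lem:tactical_existence} guarantees that this is well defined. Consequently the characteristic function $v(S)=V(a)$ for $S\neq\varnothing$ does not depend on $T$, so the induced game is a fixed finite game.

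Next I apply Lemma~\ref{lem:strategic_existence} to this fixed profile. The finite-improvement clause of Assumption~\ref{ass:finite_improvement_setting} makes the improvement graph on $\Pi(N)$ acyclic, so it has a sink $T^\star$. At $T^\star$ no implementable strictly profitable deviation exists. Hence $\mathcal S(T^\star,\bar x)=\{(T^\star,\bar x)\}$, and under any selection rule $\mathcal S$ acts as the identity at this point. Finally $\mathcal B$ leaves $(T^\star,\bar x)$ unchanged, because $x_i^{+}=(1-\gamma_i)a+\gamma_i a=a$. Combining the three steps gives $\Phi(T^\star,x^{0,\star})=(T^\star,x^{0,\star})$ with $x^{0,\star}\equiv a$. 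Proposition~\ref{prop:joint_equilibrium_fixed_point} then yields a joint tactical--strategic equilibrium. In the process I will also check the conditions of Definition~\ref{def:eqm} directly: the consensus states are invariant across iterations, \eqref{eq:intrinsic_equals_consensus} holds, and the no-deviation condition \eqref{eq:GT_joint_equilibrium} is the sink property.

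The main obstacle is the gap between the two notions of ``no deviation.'' Inequality \eqref{eq:GT_joint_equilibrium} quantifies over all $D$ with a weak inequality, whereas the sink property only excludes moves that are \emph{implementable}, meaning strictly profitable \emph{and} accepted. I will resolve this by reading the equilibrium condition, as the paper does in Proposition~\ref{prop:joint_equilibrium_fixed_point} and Lemma~\ref{lem:strategic_existence}, as the absence of implementable profitable deviations. I will note explicitly that this is the sense in which $\mathcal S$ fixes $T^\star$.

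A secondary point deserves care: a general coalition-constant profile would make $v$ depend on $T$, which is why the globally constant profile is used. If a less degenerate equilibrium is wanted, I would instead run $\Phi$ from an arbitrary start. I would argue that the finite-improvement property eventually freezes $T$, and that $\mathcal B\circ\mathcal C$ then converges geometrically to a coalition-wise constant profile. That route needs extra continuity of the payoffs to keep $T$ frozen in the limit, so I would only mention it as a remark.
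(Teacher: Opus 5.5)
Your proposal is correct and follows the paper's proof essentially step for step. The paper likewise fixes a globally constant profile $x_i^{0,\star}=z$, observes that every coalition consensus then equals $z$ so that $\mathcal C$ and $\mathcal B$ act trivially, takes $T^\star$ to be a sink of the acyclic improvement graph from Lemma~\ref{lem:strategic_existence}, and concludes that $(T^\star,x^{0,\star})$ is a fixed point of $\Phi$ and hence an equilibrium; your explicit caveat that the no-deviation condition must be read as the absence of implementable (strictly profitable and accepted) deviations is left implicit in the paper, and it is the right reading.
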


\begin{proof}
Fix any vector \(z\in\mathbb R^d\) and set
\(x_i^{0,\star}=z\) for all \(i\in N\).
For every coalition structure \(T\), this profile is already tactically
consistent: each coalition consensus is \(\bar x_S=z\), so the tactical
consensus map and the belief-update map both leave the profile unchanged.

Apply Lemma~\ref{lem:strategic_existence} to this fixed intrinsic-opinion
profile. There exists a coalition structure \(T^\star\in\Pi(N)\) with no
implementable profitable unilateral exit--and--join deviation. Therefore the
strategic map leaves \(T^\star\) unchanged, while the tactical and belief maps
leave \(\{x_i^{0,\star}\}_{i\in N}\) unchanged. Hence this state is a fixed
point of \(\Phi\).
By Definition~\ref{def:eqm}, this fixed point is a joint
tactical--strategic equilibrium.
\end{proof}

\begin{remark}
The existence result relies only on finiteness of the agent set, convergence of
tactical consensus dynamics, positive belief assimilation, and the
finite-improvement property at the strategic level. No convexity or
superadditivity assumptions on the coalition performance functional \(V\) are
required. Convexity becomes relevant only for uniqueness, efficiency, or global
convergence properties.
\end{remark}

\section{Emergent Behavior}

\subsection{Tactical and Strategic Unanimity}

Let $(T^\star,\{x_i^{0,\star}\}_{i\in N})$
be a joint tactical--strategic equilibrium.
For each coalition $S \in T^\star$, define its consensus value
\(\bar x_S^\star=\sum_{i\in S}\pi_i^S x_i^{0,\star}\).

\begin{definition}[Tactical unanimity]
The equilibrium exhibits \emph{tactical unanimity} if
\(x_i^{0,\star}=\bar x_S^\star\) for all \(S\in T^\star\) and \(i\in S\).
That is, tactical states coincide within every coalition.
\end{definition}

\begin{definition}[Strategic unanimity]
The equilibrium exhibits \emph{strategic unanimity} if
\(
T^\star = \{N\}.
\)
That is, all agents form the grand coalition.
\end{definition}

\subsubsection{Strategic Unanimity}
Consider the frictionless case with zero switching costs and automatic
acceptance. If the grand coalition $N$ is strategically stable, then no
agent $i\in N$ prefers to exit and form a singleton. Hence
$\Omega_i(v;N)\ge v(\{i\})$ for all $i$.

Under the Shapley allocation, $\Omega_i(v;N)$ is a convex combination
of marginal contributions $v(S\cup\{i\})-v(S)$ over
$S\subseteq N\setminus\{i\}$, with strictly positive weights.
In particular,
\(\Omega_i(v;N)\le v(N)-v(N\setminus\{i\})\). Therefore stability of $N$ requires
\(v(N)\ge v(N\setminus\{i\})+v(\{i\})\) for all \(i\in N\).
Equivalently, each agent's marginal contribution to the grand coalition
must be at least as large as its standalone value.

\begin{theorem}[Convexity and strategic unanimity]
\label{thm:convex_dominance_unanimity}
Let $N$ be finite. Define
\(\bar x_S=|S|^{-1}\sum_{j\in S}x_j^0\) and
\(v(S)=V(\bar x_S)-\kappa(|S|)\), where
$V:\mathbb{R}\to\mathbb{R}$ is continuously differentiable and
$\kappa:\mathbb{N}\to\mathbb{R}$ is convex.
Define
\(M:=\sup_x|V'(x)|\) and \(D:=\max_{i,j}|x_i^0-x_j^0|\). Assume that for all $s<t$,
\begin{equation}
\label{eq:convex_dominance_condition}
\big[\kappa(t+1)-\kappa(t)\big]
-
\big[\kappa(s+1)-\kappa(s)\big]
\;\le\;
M D
\left(
\frac{1}{s+1}
+
\frac{1}{t+1}
\right).
\end{equation}
Then $v$ is supermodular.
If, in addition, $V$ is strictly concave,
switching costs are zero, and acceptance is automatic,
then the grand coalition $\{N\}$ is strategically stable.
\end{theorem}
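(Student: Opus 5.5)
The plan is to split supermodularity of $v$ into a consensus part and a size-cost part. For $S\subseteq T\subseteq N\setminus\{i\}$ with $s=|S|$ and $t=|T|$, supermodularity means
\[
\bigl[v(T\cup\{i\})-v(T)\bigr]-\bigl[v(S\cup\{i\})-v(S)\bigr]\ge 0 .
\]
Writing $\Delta\kappa(m):=\kappa(m+1)-\kappa(m)$, the left-hand side equals
\[
\Bigl(\bigl[V(\bar x_{T\cup\{i\}})-V(\bar x_T)\bigr]-\bigl[V(\bar x_{S\cup\{i\}})-V(\bar x_S)\bigr]\Bigr)-\bigl(\Delta\kappa(t)-\Delta\kappa(s)\bigr).
\]
Because $\kappa$ is convex and $s<t$, the second bracket is nonnegative, so the $\kappa$-part always works against supermodularity. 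The $V$-part has to absorb it. The case $S=T$ is trivial, so only $s<t$ needs treatment, which is exactly the range in \eqref{eq:convex_dominance_condition}.

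\textbf{Step 1 (perturbation of the average).} Adding $i$ to a nonempty $S$ moves the uniform average by
\[
\bar x_{S\cup\{i\}}-\bar x_S=\frac{x_i^0-\bar x_S}{s+1}.
\]
Since $\bar x_S$ lies in the convex hull of the opinions, $|x_i^0-\bar x_S|\le D$. The mean value theorem with $|V'|\le M$ then gives
\[
\bigl|V(\bar x_{S\cup\{i\}})-V(\bar x_S)\bigr|\le \frac{MD}{s+1},
\]
and the same bound holds with $t$ in place of $s$. Subtracting the two marginals, the $V$-part lies in $\bigl[-MD\bigl(\tfrac1{s+1}+\tfrac1{t+1}\bigr),\,MD\bigl(\tfrac1{s+1}+\tfrac1{t+1}\bigr)\bigr]$. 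The quantity $MD\bigl(\frac1{s+1}+\frac1{t+1}\bigr)$ is exactly the right-hand side of \eqref{eq:convex_dominance_condition}.

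\textbf{Step 2 (the empty set).} The case $S=\varnothing$ needs separate handling, since $v(\varnothing)=0$ while $\bar x_\varnothing$ is undefined. There the marginal is $v(\{i\})=V(x_i^0)-\kappa(1)$. I would treat it under the normalization $\kappa(0)=0$ and compare directly with $V(\bar x_{T\cup\{i\}})-V(\bar x_T)$. For $T$ nonempty, this compares an absolute level $V(x_i^0)$ with a difference of $V$-values, so the bounds of Step 1 do not apply and a separate argument (or an added hypothesis) is needed.

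\textbf{Step 3 (combining the bounds) — the main obstacle.} Supermodularity needs the $V$-part to be at least $\Delta\kappa(t)-\Delta\kappa(s)$. Step 1 only guarantees the $V$-part is at least $-MD(\cdots)$. So the usable sufficient condition is
\[
\Delta\kappa(t)-\Delta\kappa(s)\le -MD\Bigl(\frac1{s+1}+\frac1{t+1}\Bigr).
\]
Since the left-hand side is nonnegative by convexity of $\kappa$, this forces $MD=0$ and $\Delta\kappa(t)=\Delta\kappa(s)$, i.e. identical opinions and affine $\kappa$. Hypothesis \eqref{eq:convex_dominance_condition} is stated with a plus sign, and with a plus sign the inequality does not close. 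I would first check whether the intended sign convention is $v(S)=V(\bar x_S)+\kappa(|S|)$, where $\kappa$ acts as a convex synergy. In that case the $\kappa$-part $\Delta\kappa(t)-\Delta\kappa(s)$ enters with a plus sign. The natural sufficient condition is then the reverse inequality $\Delta\kappa(t)-\Delta\kappa(s)\ge MD\bigl(\frac1{s+1}+\frac1{t+1}\bigr)$, and the argument closes in one line. If that reading fails, I would look for a sharper lower bound on the $V$-difference, for instance by exploiting that $\bar x_{T\cup\{i\}}$ and $\bar x_{S\cup\{i\}}$ are shifted toward the same point $x_i^0$. I expect this to be the genuinely delicate point.

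\textbf{Step 4 (stability of the grand coalition).} Given supermodularity, $v$ is a convex TU game. By Shapley's theorem its Shapley value lies in the core, so $\Omega_i(v;\{N\})=\phi_i(v)\ge v(\{i\})$ for all $i$. Under $T=\{N\}$, the only destination in $A_i(\{N\})$ is $\varnothing$. With zero switching costs and automatic acceptance, the deviation payoff is $u_i(\varnothing;\{N\})=v(\{i\})$, so no agent has a strictly profitable move and $\{N\}$ is strategically stable. Strict concavity of $V$ is not needed for this step. I would note it only as a possible ingredient in Step 3, where it could supply a favorable sign in the $V$-part via Jensen when $\bar x_S$ and $\bar x_T$ are compared.
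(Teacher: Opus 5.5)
Your Step 3 diagnosis is correct, and it is decisive rather than a weakness of your estimates. The proposal cannot be completed because the statement is false as written. No sharper lower bound on the \(V\)-part (exploiting the common target \(x_i^0\), or Jensen via strict concavity) will close it. Since \(\kappa\equiv 0\) is convex and satisfies \eqref{eq:convex_dominance_condition} for every \(V\), the first claim would make \(S\mapsto V(\bar x_S)\) supermodular for every \(C^1\) function \(V\) and every opinion profile.

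Concretely, take \(V(x)=x\) (so \(M=D=1\)) and three agents with \(x_i^0=1\), \(x_j^0=x_k^0=0\). For \(S=\{j\}\subset T=\{j,k\}\), the marginals of \(i\) are \(v(S\cup\{i\})-v(S)=\tfrac12>\tfrac13=v(T\cup\{i\})-v(T)\), so supermodularity fails.

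The stability claim fails as well. Take \(V(x)=10-\sqrt{1+x^2}\) (strictly concave, \(M=1\)), all \(x_j^0=0\) (so \(D=0\)), and \(\kappa\equiv 0\). Every nonempty coalition then has value \(9\). The Shapley value in \(\{N\}\) gives each agent \(9/n\), while exiting to a singleton yields \(9\). This game satisfies every supermodularity inequality whose smaller coalition is nonempty. So it also shows that your Step 2 is not a technicality: stability of \(\{N\}\) rests precisely on the comparisons with the empty coalition, \(v(\{i\})\le v(R\cup\{i\})-v(R)\) for \(R\subseteq N\setminus\{i\}\). That is exactly the bound the paper's stability step invokes.

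The paper's proof gets past your Step 3 only through an invalid inference. It writes \([v(S\cup\{i\})-v(S)]-[v(T\cup\{i\})-v(T)]\) as the \(V\)-difference plus \(\Delta\kappa(t)-\Delta\kappa(s)\). It then bounds each summand below by \(-MD\bigl(\tfrac1{s+1}+\tfrac1{t+1}\bigr)\); for the \(\kappa\)-term this actually comes from convexity, since \eqref{eq:convex_dominance_condition} is an upper bound. Finally it concludes that the sum is nonnegative, which does not follow. Moreover, the inequality it targets, \(v(S\cup\{i\})-v(S)\ge v(T\cup\{i\})-v(T)\) for \(S\subseteq T\), is the submodular direction, whereas its stability step uses increasing marginals. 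The case \(S=\varnothing\) is never treated.

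Your proposed repair is sound where you claim it is. With \(v(S)=V(\bar x_S)+\kappa(|S|)\) and \(\Delta\kappa(t)-\Delta\kappa(s)\ge MD\bigl(\tfrac1{s+1}+\tfrac1{t+1}\bigr)\) for \(1\le s<t\), Step 1 closes the nonempty case in one line. Step 4 is also correct: the Shapley value of a convex game lies in the core, and \(\varnothing\) is the only destination from \(\{N\}\). Strict concavity is indeed superfluous there.

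A separate hypothesis at \(s=0\) is still needed, because \(v(\{i\})=V(x_i^0)+\kappa(1)\) is a level rather than a difference. For example, \(\kappa(t+1)-\kappa(t)\ge\kappa(1)+\max_j V(x_j^0)+\tfrac{MD}{t+1}\) for all \(t\ge1\) would suffice. Without such a condition, the second example above again defeats stability.
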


\begin{proof}
Fix $S\subseteq T\subseteq N$ and $i\notin T$.
Let $s=|S|$ and $t=|T|$. Write
\(\Delta_S^V=V(\bar x_{S\cup\{i\}})-V(\bar x_S)\) and
\(\Delta_T^V=V(\bar x_{T\cup\{i\}})-V(\bar x_T)\). By the mean value theorem,
\(\Delta_S^V=\frac{1}{s+1}V'(\xi_S)(x_i^0-\bar x_S)\),
for some $\xi_S$ between $\bar x_S$ and $\bar x_{S\cup\{i\}}$.
Hence
\(|\Delta_S^V|\le\frac{M}{s+1}|x_i^0-\bar x_S|\le\frac{MD}{s+1}\).
Similarly,
\(|\Delta_T^V|\le\frac{MD}{t+1}\). Therefore,
\(|\Delta_S^V-\Delta_T^V|\le
MD\!\left(\frac{1}{s+1}+\frac{1}{t+1}\right)\). Compute the discrete marginal difference:
\[
\begin{aligned}
&\big[v(S\cup\{i\})-v(S)\big]
-
\big[v(T\cup\{i\})-v(T)\big]
\\
&=
\Delta_S^V-\Delta_T^V
-
\Big(
\kappa(s+1)-\kappa(s)
-
\kappa(t+1)+\kappa(t)
\Big).
\end{aligned}
\]
Rewriting,
this equals
\(\Delta_S^V-\Delta_T^V+
\big(\kappa(t+1)-\kappa(t)-\kappa(s+1)+\kappa(s)\big)\).
By the dominance condition \eqref{eq:convex_dominance_condition},
\(\kappa(t+1)-\kappa(t)-\kappa(s+1)+\kappa(s)
\ge -MD\left(\frac{1}{s+1}+\frac{1}{t+1}\right)\).
Since
\(\Delta_S^V-\Delta_T^V\ge
-MD\left(\frac{1}{s+1}+\frac{1}{t+1}\right)\),
we obtain
\(\big[v(S\cup\{i\})-v(S)\big]-\big[v(T\cup\{i\})-v(T)\big]\ge0\).
Thus $v$ is supermodular. Assume now that $V$ is strictly concave.
Then aggregation strictly increases total surplus unless all
intrinsic opinions coincide.
Hence the grand coalition uniquely maximizes total surplus. Supermodularity implies increasing marginal contributions:
for all $S\subseteq N\setminus\{i\}$,
\(v(S\cup\{i\})-v(S)\le v(N)-v(N\setminus\{i\})\).
In particular, \(v(\{i\})\le v(S\cup\{i\})-v(S)\). The Shapley payoff is
\[
\Omega_i(v;N)
=
\sum_{S\subseteq N\setminus\{i\}} w_S
\big(v(S\cup\{i\})-v(S)\big),
\quad w_S>0.
\]
Since every marginal term is at least $v(\{i\})$,
\(\Omega_i(v;N)\ge v(\{i\})\). Under zero switching costs and automatic acceptance,
no agent benefits from deviating to singleton.
Hence the grand coalition is strategically stable.
\end{proof}

\subsubsection{Tactical Unanimity}

The tactical dynamics defined in \eqref{eq:degroot_tau_k}--\eqref{eq:intrinsic_update}
can be embedded into a single global switching-system representation. Recall that at each strategic time $\tau$ the tactical dynamics
\eqref{eq:degroot_tau_k} evolve for $k=0,\dots,K_\tau-1$.
Define cumulative switching times
\(T_0:=0\) and \(T_{\tau+1}:=T_\tau+K_\tau\).
Introduce a global discrete time index
\(
t = 0,1,2,\dots
\)
such that the tactical phase corresponding to strategic time $\tau$
occupies the interval
\(
t \in \{T_\tau, \dots, T_{\tau+1}-1\}.
\) Let
\(
x(t) := \operatorname{col}\bigl(x_i(t)\bigr)_{i\in N}
\in \mathbb{R}^{dn}.
\) 
At the switching instant $t=T_\tau$, we identify
\(
x(T_\tau)=x^0(\tau),
\)
where $x^0(\tau)$ is the intrinsic profile defined in
\eqref{eq:tactical_init}.

Fix a strategic time $\tau$ and let $T(\tau)\in\Pi(N)$ be the
current coalition structure. Recall that for each coalition
$S\in T(\tau)$, the within-coalition tactical dynamics
are governed by the DeGroot update
\eqref{eq:degroot_tau_k} with interaction matrix
\(P_S=[p_{ij}^S]_{i,j\in S}\), where \(p_{ij}^S\ge0\) and
\(\sum_{j\in S}p_{ij}^S=1\).
Thus $P_S$ is a row-stochastic matrix defined on the index set $S$.
The coalition structure $T(\tau)$ induces a block-diagonal matrix
\[
\widetilde P(\tau)
=
\operatorname{blkdiag}\bigl(P_S\bigr)_{S\in T(\tau)}
\in\mathbb{R}^{n\times n},
\]
where each block $P_S$ occupies the rows and columns
corresponding to the agents in $S$, and all off-block entries are zero.

Equivalently, for $i,j\in N$,
\[
\widetilde p_{ij}(\tau)
=
\begin{cases}
p_{ij}^S, & \text{if } i,j\in S \text{ for some } S\in T(\tau),\\
0, & \text{otherwise}.
\end{cases}
\]

Because each block $P_S$ is row-stochastic,
$\widetilde P(\tau)$ is also row-stochastic:
\(\widetilde p_{ij}(\tau)\ge0\) and
\(\sum_{j\in N}\widetilde p_{ij}(\tau)=1\) for all \(i\in N\).
Hence $\widetilde P(\tau)\mathbf 1=\mathbf 1$,
where $\mathbf 1$ denotes the all-ones vector in $\mathbb{R}^n$.

The associated directed graph
$\mathcal G(\tau)=(N,E(\tau))$ has edge $(j,i)$ whenever
$\widetilde p_{ij}(\tau)>0$.
Then $\mathcal G(\tau)$ is the disjoint union of the
subgraphs induced by the coalitions in $T(\tau)$.
There are no inter-coalition edges, reflecting the fact that
tactical interaction occurs only within coalitions.

If each $P_S$ is primitive,
then for each coalition $S$ there exists a unique stationary
distribution $\pi^S\in\Delta^{|S|}$ satisfying
\(
(\pi^S)^\top P_S = (\pi^S)^\top.
\)
The spectrum of $\widetilde P(\tau)$ is the union of the spectra
of its blocks.
In particular, the eigenvalue $1$ has multiplicity equal to
$|T(\tau)|$, and the corresponding right eigenspace is spanned by
vectors that are constant on each coalition.
Consequently,
\[
\widetilde P(\tau)^k
\;\longrightarrow\;
\operatorname{blkdiag}
\bigl(
\mathbf 1 \, (\pi^S)^\top
\bigr)_{S\in T(\tau)}
\quad \text{as } k\to\infty,
\]
which recovers the coalition-wise consensus limit
\eqref{eq:consensus_limit}.

The switching signal is
\(
\sigma(t)=\tau
\quad
\text{whenever }
t\in[T_\tau,T_{\tau+1}-1].
\) Then \eqref{eq:degroot_tau_k} can be written globally as the
switching linear system
\begin{equation}
\label{eq:global_switching}
x(t+1)
=
\widetilde P(\sigma(t))\,x(t).
\end{equation}

For $t\in[T_\tau,T_{\tau+1}-1]$,
\(
x(t)
=
\widetilde P(\tau)^{\,t-T_\tau} x^0(\tau).
\)
Under the primitivity assumption on each $P_S$,
the limit \eqref{eq:consensus_limit} becomes
\begin{equation}
\label{eq:global_consensus}
\lim_{t\uparrow T_{\tau+1}}
x(t)
=
\bar x(\tau),
\end{equation}
where $\bar x(\tau)$ stacks the coalition consensus values
\(\bar x_S(\tau)\) defined in \eqref{eq:consensus_limit}.

At $t=T_{\tau+1}$, intrinsic opinions are updated according to
\eqref{eq:intrinsic_update}, which in stacked form reads
\begin{equation}
\label{eq:global_intrinsic_update}
x^0(\tau+1)
=
(I-\Gamma)x^0(\tau)
+
\Gamma \bar x(\tau),
\end{equation}
where $\Gamma=\operatorname{diag}(\gamma_i)$.

\begin{definition}[Hybrid fast--slow switching system]
\label{def:hybrid_fast_slow_system}

Consider the consensus-induced exit--and--join dynamics
defined by \eqref{eq:degroot_tau_k}--\eqref{eq:intrinsic_update}.
Let $T(\tau)\in\Pi(N)$ denote the coalition structure at
strategic time $\tau$, and let $K_\tau\in\mathbb{N}$
be the corresponding tactical horizon.
Define cumulative switching times
\(T_0:=0\) and \(T_{\tau+1}:=T_\tau+K_\tau\).

The overall evolution of the system is called the
\emph{hybrid fast--slow switching system}
associated with $(V,\{P_S\},\Gamma)$
if it consists of the following coupled updates. For global time
$t\in[T_\tau,T_{\tau+1}-1]$, fast tactical dynamics follow
\begin{equation}
\label{eq:hybrid_fast}
x(t+1)
=
\widetilde P(\tau)\,x(t),
\end{equation}
where $\widetilde P(\tau)$ is the block-diagonal
row-stochastic matrix induced by $T(\tau)$.

Under primitivity of each block $P_S$, coalition-wise consensus satisfies
\begin{equation}
\label{eq:hybrid_consensus_def}
\lim_{t\uparrow T_{\tau+1}}
x(t)
=
\bar x(\tau),
\end{equation}
where $\bar x(\tau)$ stacks the coalition consensus states
defined in \eqref{eq:consensus_limit}.

At switching instants $T_{\tau+1}$, intrinsic opinions evolve according to
\begin{equation}
\label{eq:hybrid_slow}
x^0(\tau+1)
=
(I-\Gamma)x^0(\tau)
+
\Gamma \bar x(\tau),
\end{equation}
which coincides with \eqref{eq:intrinsic_update}
in stacked form.

Given $x^0(\tau+1)$, coalition values are evaluated,
Aumann--Dr\`eze payoffs are computed,
and admissible exit--and--join deviations
determine the next coalition structure $T(\tau+1)$,
which induces the next switching matrix
$\widetilde P(\tau+1)$.

\end{definition}

\begin{proposition}[Sufficient condition for tactical unanimity with reset]
\label{prop:tactical_unanimity_union_connectivity}
Consider the hybrid fast--slow switching system of
Definition~\ref{def:hybrid_fast_slow_system}.
Let $T(\tau)$ denote the coalition structure at strategic time $\tau$,
and let $\mathcal G(\tau)$ denote the directed graph induced by the
block-diagonal matrix $\widetilde P(\tau)$.

Assume that each $\widetilde P(\tau)$ is row-stochastic; that there exists
$\alpha>0$ such that $\widetilde p_{ij}(\tau)\ge\alpha$ whenever
$\widetilde p_{ij}(\tau)>0$; that there exists $L\ge1$ such that, for every
$\tau$, the union graph
\(\bigcup_{s=\tau}^{\tau+L-1}\mathcal G(s)\) is strongly connected; and that
$\gamma_i\in(0,1]$ for all $i\in N$.
Then the intrinsic opinions satisfy
\(\lim_{\tau\to\infty}x_i^0(\tau)=x^\star\) for all \(i\in N\),
for some $x^\star\in\mathbb{R}^d$.
Consequently,
\(\lim_{t\to\infty}x_i(t)=x^\star\) for all \(i\in N\),
and tactical unanimity holds.
\end{proposition}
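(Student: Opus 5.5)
The plan is to reduce the hybrid system to a slow-time linear recursion $x^0(\tau+1)=W(\tau)\,x^0(\tau)$ with row-stochastic $W(\tau)$. Then I will invoke the standard consensus theorem for products of stochastic matrices with uniformly positive diagonals, a uniform lower bound on positive entries, and uniformly jointly strongly connected graphs (Jadbabaie--Lin--Morse, Moreau). Under the standing primitivity assumption, Proposition~\ref{prop:coalition_consensus_representation} gives $\bar x(\tau)=Q(\tau)\,x^0(\tau)$ with $Q(\tau)=\operatorname{blkdiag}\bigl(\mathbf 1(\pi^S)^\top\bigr)_{S\in T(\tau)}$, the limit of $\widetilde P(\tau)^k$. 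Substituting into \eqref{eq:hybrid_slow} yields
\[
W(\tau)=(I-\Gamma)+\Gamma\,Q(\tau),
\]
acting componentwise on each of the $d$ coordinates, so it suffices to treat $d=1$. $W(\tau)$ is row-stochastic as a convex combination of $I$ and the stochastic $Q(\tau)$.

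Next I check the hypotheses of the consensus theorem for $\{W(\tau)\}$. The graph of $Q(\tau)$ is complete on each coalition: $\pi^S_j>0$ for all $j\in S$ by Perron--Frobenius, since $P_S$ is irreducible. Because $\gamma_i>0$, the graph of $W(\tau)$ therefore contains every edge of $\mathcal G(\tau)$, since edges of $\mathcal G(\tau)$ lie within coalitions. Hence the union over any window of length $L$ is strongly connected by hypothesis. For the diagonal, $W_{ii}(\tau)\ge(1-\gamma_i)+\gamma_i\pi^{C(i)}_i>0$. For the uniform lower bound on positive entries, the key point is that there are only finitely many coalitions $S\subseteq N$, so $\beta:=\min_S\min_{j\in S}\pi^S_j>0$. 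Every positive entry of $W(\tau)$ is then at least $\min\{\min_i\gamma_i\,\beta,\ \min_i(1-\gamma_i)+\gamma_i\beta\}>0$, independent of $\tau$. The hypothesis $\alpha$ on $\widetilde P$ is not even needed, because finiteness of $\Pi(N)$ already gives uniformity.

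With these conditions in place, the products $W(\tau+L-1)\cdots W(\tau)$ are scrambling, with ergodicity coefficient bounded away from $1$ uniformly in $\tau$. This is the step I expect to need the most care. Positive diagonals plus strong connectivity of the union over $L$ steps give a positive column after $(n-1)L$ steps, with a minimum entry of at least $\eta^{(n-1)L}$ where $\eta$ is the uniform bound above. Consequently the spread $\max_i x_i^0(\tau)-\min_i x_i^0(\tau)$ contracts geometrically over blocks of $(n-1)L$ steps. Meanwhile $\max_i x^0_i(\tau)$ is nonincreasing and $\min_i x^0_i(\tau)$ is nondecreasing, so both converge to a common limit $x^\star$.

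Finally, the tactical states converge as well. For $t\in[T_\tau,T_{\tau+1}-1]$, $x(t)=\widetilde P(\tau)^{t-T_\tau}x^0(\tau)$ is a stochastic image of $x^0(\tau)$, so every $x_i(t)$ lies in $[\min_j x^0_j(\tau),\max_j x^0_j(\tau)]$. Both endpoints converge to $x^\star$, so $x_i(t)\to x^\star$ for all $i$. Then each $\bar x_S\to x^\star$ and $x_i^0-\bar x_{C(i)}\to0$, which is tactical unanimity in the limit.
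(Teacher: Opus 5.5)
Your proposal is correct and takes the paper's route. You rewrite the slow update as $x^0(\tau+1)=A(\tau)x^0(\tau)$ with $A(\tau)=(I-\Gamma)+\Gamma\,\mathcal C(T(\tau))$, verify positive diagonals, a uniform positive-entry bound and joint strong connectivity over windows of length $L$, conclude via the scrambling/Wolfowitz--Hajnal consensus argument, and pass to the tactical states by row-stochasticity. Your explicit spread-contraction argument, and your observation that finiteness of the coalition family already gives the uniform lower bound (so $\alpha$ is not needed), only sharpen steps that the paper cites or asserts.
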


\begin{proof}
At the end of tactical window $\tau$, the coalition-wise
consensus can be written in stacked form as
\(\bar x(\tau)=\mathcal C(T(\tau))\,x^0(\tau)\),
where
\[
\mathcal C(T(\tau))
=
\operatorname{blkdiag}
\bigl(
\mathbf 1 (\pi^S)^\top
\bigr)_{S\in T(\tau)}
\]
is a block rank-one row-stochastic projection matrix.
The intrinsic update \eqref{eq:intrinsic_update} becomes
\(x^0(\tau+1)=(I-\Gamma)x^0(\tau)+
\Gamma\mathcal C(T(\tau))x^0(\tau)\),
where $\Gamma=\operatorname{diag}(\gamma_i)$.
Define
\(A(\tau):=(I-\Gamma)+\Gamma\mathcal C(T(\tau))\).
Then
\(x^0(\tau+1)=A(\tau)x^0(\tau)\).
Each $A(\tau)$ is row-stochastic and has strictly positive diagonal
entries since $\gamma_i>0$. Moreover, whenever agents $i$ and $j$
belong to the same coalition at time $\tau$, the entry
$A_{ij}(\tau)$ is bounded below by a positive constant depending on
$\alpha$ and $\min_i \gamma_i$.
Because the union of coalition graphs over every window of length $L$
is strongly connected, the same holds for the graphs induced by
$A(\tau)$ (see e.g., \cite{jadbabaie2003coordination}). Uniform positivity then
implies that the products \(A(\tau+L-1)\cdots A(\tau)\) are scrambling
stochastic matrices.
By the classical Wolfowitz--Hajnal theorem for products of
row-stochastic matrices with repeated joint connectivity, the products
\(A(\tau)\cdots A(0)\) converge to a rank-one matrix \(\mathbf 1\pi^\top\)
for some stochastic vector \(\pi\).
Therefore,
\(
x^0(\tau)
\to
(\pi^\top x^0(0))\,\mathbf 1
=
x^\star \mathbf 1,
\)
for some $x^\star\in\mathbb{R}^d$,
which establishes convergence of intrinsic opinions to unanimity.
Within each tactical window,
\(
x(t)
=
\widetilde P(\tau)^{\,t-T_\tau}x^0(\tau).
\)
Since $x^0(\tau)\to x^\star\mathbf 1$ and each
$\widetilde P(\tau)$ is row-stochastic,
it follows that $x(t)\to x^\star\mathbf 1$ as $t\to\infty$.
\end{proof}

\begin{remark}[Strategic unanimity implies tactical unanimity]
If the grand coalition $T(\tau)=\{N\}$ is formed from some
strategic time onward and the corresponding interaction matrix
$P_N$ is strongly connected (equivalently, primitive),
then tactical unanimity follows without requiring
joint connectivity over time.

Indeed, in that case the switching matrix
$\widetilde P(\tau)$ reduces to a single primitive
row-stochastic matrix on $N$.
The tactical dynamics \eqref{eq:degroot_tau_k}
converge to a unique consensus state within each window,
and the intrinsic update preserves unanimity once achieved.
Hence strategic unanimity together with connectivity of the
grand coalition implies tactical unanimity.
\end{remark}

\begin{proposition}[Characterization of the unanimity limit]
\label{prop:limit_characterization}
Under the assumptions of
Proposition~\ref{prop:tactical_unanimity_union_connectivity},
the intrinsic opinions evolve according to
\(x^0(\tau+1)=A(\tau)x^0(\tau)\), with
\(A(\tau)=(I-\Gamma)+\Gamma\mathcal C(T(\tau))\),
where each $A(\tau)$ is row-stochastic.
Then there exists a unique stochastic vector
$\pi\in\Delta^n$ such that
\begin{equation}
\label{eq:rank_one_limit}
\lim_{\tau\to\infty}
A(\tau)A(\tau-1)\cdots A(0)
=
\mathbf 1 \pi^\top.
\end{equation}

Consequently,
\begin{equation}
\label{eq:limit_value}
x^0(\tau)
\longrightarrow
(\pi^\top x^0(0))\,\mathbf 1,
\end{equation}
and the unanimity limit is
\(
x^\star = \pi^\top x^0(0).
\)
The vector $\pi$ depends on the entire switching sequence
$\{T(\tau)\}_{\tau\ge0}$ and the belief gains $\Gamma$.
In general, $x^\star$ need not coincide with the consensus
associated with any fixed coalition structure.
Moreover, the unanimity limit $x^\star$ corresponds to the
intrinsic component of a joint tactical--strategic equilibrium
if and only if there exists a coalition structure
$T^\star \in \Pi(N)$ and a profile
$x^{0,\star}\in\mathbb{R}^{dn}$
such that the strategic structure is eventually stationary, meaning that
there exists $\tau_0$ with \(T(\tau)=T^\star\) for all
\(\tau\ge\tau_0\). The intrinsic profile must satisfy
\(x^{0,\star}=(I-\Gamma)x^{0,\star}+
\Gamma\mathcal C(T^\star)x^{0,\star}\),
equivalently,
\(x_i^{0,\star}=\bar x_{C_{T^\star}(i)}(x^{0,\star})\) for all \(i\in N\).
Finally, $T^\star$ admits no profitable unilateral exit--and--join deviation
under the intrinsic profile $x^{0,\star}$.

In this case,
\(
x^\star \mathbf 1 = x^{0,\star},
\)
and the unanimity value $x^\star$ equals the
coalition consensus associated with $T^\star$.
\end{proposition}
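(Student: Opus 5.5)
The plan is to take the products $\Phi(\tau,0):=A(\tau)\cdots A(0)$ as the central object. The rank-one limit \eqref{eq:rank_one_limit} comes from the argument in the proof of Proposition~\ref{prop:tactical_unanimity_union_connectivity}; the remaining claims then follow by linearity and by inspecting a joint equilibrium.

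\emph{Step 1: uniform positivity and connectivity.} Each $A(\tau)=(I-\Gamma)+\Gamma\mathcal C(T(\tau))$ is row-stochastic. Its diagonal entries are at least $\min_i\gamma_i\min_S\pi^S_i$ when $i$ is in a coalition with others, and equal to $1$ for singletons. Its off-diagonal entries are $\gamma_i\pi^S_j>0$ exactly when $i,j$ share a coalition. Since $N$ is finite and each $P_S$ is primitive, $\min_{S,j}\pi^S_j=:\beta>0$. Hence every nonzero entry of every $A(\tau)$ is at least $\eta:=\min\{\min_i\gamma_i\,\beta,\;\min_i(1-\gamma_i)\ \text{if positive}\}>0$, uniformly in $\tau$. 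The graph of $A(\tau)$ contains every edge of $\mathcal G(\tau)$, because $\mathcal C(T(\tau))$ is complete on each block. Therefore the joint strong connectivity over windows of length $L$ transfers to the $A(\tau)$.

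\emph{Step 2: convergence of backward products.} Combining nonzero diagonals, a uniform lower bound $\eta$, and strong connectivity of the union graph over each window, the block product over $(n-1)$ consecutive windows has a strictly positive column, and every entry of that column is at least $\eta^{(n-1)L}$. This is the standard Jadbabaie/Moreau argument: with self-loops, reachability sets grow by at least one node per window. Such a block product is scrambling with coefficient of ergodicity $\delta\le 1-\eta^{(n-1)L}<1$. Since the coefficient is submultiplicative, the ``spread'' $\max_{i,i'}\|\text{row}_i-\text{row}_{i'}\|$ of $\Phi(\tau,0)$ decays geometrically. The columns of $\Phi(\tau,0)$ have monotone maxima and minima: left-multiplying by a stochastic matrix contracts each column's range. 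So each column converges, and the rows converge to a common limit $\pi^\top$. The limit $\pi$ is stochastic because row sums are preserved. It is unique because the limit of the product is unique. I expect this step to be the main obstacle, since the window-by-window chaining of reachability with uniform constants needs care. I would cite Wolfowitz/Hajnal or \cite{jadbabaie2003coordination,moreau2005stability} rather than reprove it.

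\emph{Step 3: limit value and dependence.} Linearity gives $x^0(\tau+1)=\Phi(\tau,0)x^0(0)\to\mathbf 1\pi^\top x^0(0)$, applied componentwise in $\mathbb R^d$, so $x^\star=\pi^\top x^0(0)$. To show that $\pi$ generally depends on $\{T(\tau)\}$ and $\Gamma$, and need not equal any fixed-structure consensus, I would give a small example. Take $n=3$ and alternate $\{\{1,2\},\{3\}\}$ with $\{\{1\},\{2,3\}\}$. Computing $\pi$ shows it differs from every $\pi^{N}$, and changing $\gamma$ changes it.

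\emph{Step 4: the equivalence with a joint equilibrium.} ($\Leftarrow$) Suppose $T(\tau)=T^\star$ for $\tau\ge\tau_0$, $x^{0,\star}$ is fixed by $A^\star:=(I-\Gamma)+\Gamma\mathcal C(T^\star)$, and $T^\star$ admits no profitable deviation. Then $(T^\star,x^{0,\star})$ satisfies Definition~\ref{def:eqm} and is a fixed point of $\Phi$ by Proposition~\ref{prop:joint_equilibrium_fixed_point}. Eventual stationarity means the tail of the trajectory is generated by the powers $(A^\star)^k$. Since $x^0(\tau)\to x^\star\mathbf 1$, the limit must be a fixed point of $A^\star$ in the set of states reached. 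Identifying it with $x^{0,\star}$ yields $x^\star\mathbf 1=x^{0,\star}$. The equivalent form $x_i^{0,\star}=\bar x_{C_{T^\star}(i)}$ follows from Lemma~\ref{lem:belief_existence} because $\gamma_i>0$. ($\Rightarrow$) If the limit state is a joint equilibrium with partition $T^\star$, then Definition~\ref{def:eqm} directly gives the fixed-point identity and the no-deviation condition. Along the trajectory, eventual stationarity of $T(\tau)$ follows by taking the equilibrium partition to be the one reached, which the statement builds into the hypothesis. Finally, a unanimous profile $x^\star\mathbf 1$ has consensus $\bar x_S=x^\star\sum_j\pi^S_j=x^\star$ for every $S$. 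So $x^\star$ equals the coalition consensus under $T^\star$.
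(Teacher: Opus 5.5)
\textbf{Steps 1--3 are sound.} Steps 1--2, and the limit-value part of Step 3, follow the paper's route; the paper simply cites Wolfowitz--Hajnal. Your Dobrushin bound on blocks of $n-1$ windows, combined with the monotone column extremes of the backward product, is a correct way to get actual convergence to $\mathbf 1\pi^\top$.

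\textbf{The gap is Step 4, which does not prove the equivalence.}

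In $(\Rightarrow)$, eventual stationarity of $T(\tau)$ is part of the conclusion. So ``taking the equilibrium partition to be the one reached'' assumes what must be shown. Read literally, the hypothesis is that some $T^\star$ makes $(T^\star,x^\star\mathbf 1)$ a joint equilibrium. That says nothing about the realized partitions, which are selected at profiles $x^0(\tau)\neq x^\star\mathbf 1$. In fact, with $c\ge 0$ the hypothesis holds for every $x^\star$:
\begin{itemize}
\item at a unanimous profile $z\mathbf 1$, every nonempty coalition is worth $V(z)$, so $\Omega_i=V(z)/|S|$;
\item every profitable move then changes $\sum_{S\in T}|S|^2$ strictly, in a direction fixed by the sign of $V(z)$;
\item hence a stable partition exists. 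This is precisely the equilibrium the paper's existence proof builds.
\end{itemize}
A proof of $(\Rightarrow)$ would therefore have to show that every admissible trajectory freezes. It would necessarily freeze at $\{N\}$, because joint strong connectivity over every window rules out any other constant tail. Neither you nor the paper (``otherwise the hybrid operator would not be at a fixed point'') shows this. If you instead mean that $(T(\tau),x^0(\tau))$ converges in $\mathcal X$ to a joint equilibrium, stationarity is immediate from finiteness of $\Pi(N)$. That reading must be stated explicitly, and it makes the equivalence essentially definitional.

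In $(\Leftarrow)$, your continuity argument correctly shows that $x^\star\mathbf 1$ is fixed by $A^\star$. But ``identifying it with $x^{0,\star}$'' is not justified. $A^\star$ fixes every profile that is constant on the blocks of $T^\star$, and the no-deviation hypothesis is known only at $x^{0,\star}$. So neither $x^\star\mathbf 1=x^{0,\star}$ nor stability of $T^\star$ at $x^\star\mathbf 1$ follows. The statement only makes sense with $x^{0,\star}:=\lim_\tau x^0(\tau)$, and then $(\Leftarrow)$ and the final clause are immediate. The paper's ``therefore the intrinsic component equals $x^\star\mathbf 1$'' makes the same unexplained identification.

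\textbf{Separately, your $n=3$ example cannot show dependence on the switching sequence.} Suppose $\pi^S_j=\nu_j/\sum_{k\in S}\nu_k$ for a single positive vector $\nu$. This holds automatically for the coalitions $\{1,2\},\{2,3\}$ and singletons, and for uniform $P_S$. Then $\nu^\top\Gamma^{-1}A(\tau)=\nu^\top\Gamma^{-1}$ for every partition, so $\pi\propto\Gamma^{-1}\nu$ for every admissible sequence. With uniform weights and a common $\gamma$ this gives $\pi=\mathbf 1/n$, so changing $\gamma$ changes nothing. Sequence dependence needs cyclically inconsistent weights, e.g.\ $\pi^{\{1,2\}}_1\pi^{\{2,3\}}_2\pi^{\{1,3\}}_3\neq\pi^{\{1,2\}}_2\pi^{\{2,3\}}_3\pi^{\{1,3\}}_1$. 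You would then compare alternating $\{1,2\},\{2,3\}$ with alternating $\{2,3\},\{1,3\}$.
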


\begin{proof}
Under the assumptions of
Proposition~\ref{prop:tactical_unanimity_union_connectivity},
the matrices $A(\tau)$ are row-stochastic, have uniformly positive
diagonal entries, and satisfy repeated joint strong connectivity.
Moreover, there exists $\beta>0$ such that whenever
$A_{ij}(\tau)>0$, we have $A_{ij}(\tau)\ge \beta$.
By the Wolfowitz--Hajnal theorem \cite{wolfowitz1963products,hajnal1958weak} for products of stochastic matrices
with repeated joint connectivity and uniform positivity,
the infinite product
\(
A(\tau)A(\tau-1)\cdots A(0)
\)
converges to a rank-one stochastic matrix.
Hence there exists a stochastic vector
$\pi\in\Delta^n$ such that
\(
\lim_{\tau\to\infty}
A(\tau)A(\tau-1)\cdots A(0)
=
\mathbf 1 \pi^\top.
\)
Uniqueness of $\pi$ follows from the uniqueness of the
rank-one limit of such products.
Multiplying the limit relation by $x^0(0)$ yields
\(x^0(\tau)=A(\tau-1)\cdots A(0)x^0(0)\to
(\pi^\top x^0(0))\,\mathbf 1\).
This proves \eqref{eq:limit_value} and establishes
\(
x^\star = \pi^\top x^0(0).
\)
We now characterize when this unanimity limit corresponds to a
joint tactical--strategic equilibrium.
($\Rightarrow$)
Suppose $x^\star$ corresponds to the intrinsic component of
a joint tactical--strategic equilibrium.
By definition of equilibrium, the coalition structure must be
stationary from some $\tau_0$ onward; otherwise the hybrid
operator would not be at a fixed point.
Hence there exists $T^\star$ such that
$T(\tau)=T^\star$ for all $\tau\ge\tau_0$.
Since $A(\tau)$ is constant and equal to
\(
A^\star=(I-\Gamma)+\Gamma\mathcal C(T^\star)
\)
for all $\tau\ge\tau_0$,
the limit $x^{0,\star}=x^\star \mathbf 1$
must satisfy the fixed-point condition
\(
x^{0,\star}=A^\star x^{0,\star},
\)
which is equivalent to
\(x_i^{0,\star}=\bar x_{C_{T^\star}(i)}(x^{0,\star})\) for all \(i\).
Strategic stability of $T^\star$ follows from the
equilibrium assumption.
($\Leftarrow$)
Conversely, suppose there exist $T^\star$ and
$x^{0,\star}$ satisfying (i)--(iii).
Then for all $\tau\ge\tau_0$,
\(
A(\tau)=A^\star
=
(I-\Gamma)+\Gamma\mathcal C(T^\star),
\)
and $x^{0,\star}$ satisfies
\(
x^{0,\star}=A^\star x^{0,\star}.
\)
Hence $x^{0,\star}$ is a fixed point of the intrinsic update,
and since $T^\star$ is strategically stable,
the pair $(T^\star,x^{0,\star})$ is a fixed point of the
hybrid operator.
Therefore the intrinsic component equals
$x^\star \mathbf 1$.
\end{proof}

\subsection{Polarization and Segregation}

\subsubsection{Segregation}

\begin{definition}[Consensus-induced game and segregated equilibrium]
Let $N$ be finite. For each coalition $S \subseteq N$, define
\(\bar x_S=\sum_{i\in S}\pi_i^Sx_i^0\),
where $\pi^S$ is the stationary distribution of a primitive
row-stochastic matrix $P_S$.
Let $V:\mathbb{R}^d \to \mathbb{R}$ be $C^2$, and define
\(v(S)=V(\bar x_S)\) and \(v(\emptyset)=0\).

A pair $(T^\star,\{x_i^{0,\star}\})$ is a joint tactical--strategic
equilibrium if (i) each $S\in T^\star$ attains consensus
$\bar x_S^\star$, (ii) no agent has a profitable unilateral
exit--and--join deviation under the Aumann--Dr\`eze payoff,
and (iii) $x_i^{0,\star}=\bar x_{C_{T^\star}(i)}^\star$ for all $i$.

Define the set of equilibrium consensus states
\(
\mathcal X^\star := \{\bar x_S^\star : S\in T^\star\}.
\)
The equilibrium $T^\star$ is said to be \emph{segregated} if
\(
|\mathcal X^\star| \ge 2.
\)
It is \emph{$k$-segregated} if $|\mathcal X^\star| = k \ge 2$.
\end{definition}

\begin{theorem}[Equilibrium integral balance and geometric obstruction]
\label{thm:integral_balance}
Assume $V\in C^1(\mathbb{R}^d)$, switching costs are zero, and acceptance is automatic.
Let $(T^\star,\{x_i^{0,\star}\})$ be a joint tactical--strategic equilibrium. Fix distinct coalitions $S_1,S_2\in T^\star$ and an agent $i\in S_1$.
Define the displacement vector
\(
\Delta := \bar x_{S_1}^\star - \bar x_{S_2}^\star,
\)
and the affine path
\(\gamma(t):=\bar x_{S_2}^\star+t\Delta\), \(t\in[0,1]\).

Let
\[
w_R := \frac{|R|!\,(|S_2|-|R|)!}{(|S_2|+1)!},
\quad
\alpha_{i,R} := \pi_i^{R\cup\{i\}},
\qquad R\subseteq S_2,
\]
and
\[
\tilde w_R := \frac{|R|!\,(|S_1|-|R|-1)!}{|S_1|!},
\quad
\tilde\alpha_{i,R} := \pi_i^{R\cup\{i\}},
\qquad R\subseteq S_1\setminus\{i\}.
\]

Then strategic stability of $T^\star$ implies the integral balance condition
\begin{equation}
\sum_{R\subseteq S_2} w_R
\int_{0}^{\alpha_{i,R}}
\langle \nabla V(\gamma(t)),\Delta\rangle dt
\;\le\;
\sum_{R\subseteq S_1\setminus\{i\}} \tilde w_R
\int_{0}^{\tilde\alpha_{i,R}}
\langle \nabla V(\gamma(1-t)),\Delta\rangle dt.
\label{eq:integral_balance}
\end{equation}

Moreover, if the equilibrium is segregated, i.e.
\(
\bar x_{S_1}^\star \neq \bar x_{S_2}^\star, (\Delta \neq 0),
\)
then there exists $t^\star\in(0,1)$ such that
\begin{equation}
\langle \nabla V(\gamma(t^\star)),\Delta\rangle = 0.
\label{eq:directional_zero}
\end{equation}
\end{theorem}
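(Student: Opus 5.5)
The plan is to compute explicitly the two Aumann--Dr\`eze payoffs that enter the stability comparison. The first is agent $i$'s payoff in its current coalition $S_1$. The second is its payoff after the trial move $F_i(T^\star,S_2)$, i.e.\ inside $S_2\cup\{i\}$. With zero switching costs and automatic acceptance, strategic stability of $T^\star$ gives exactly
\[
\Omega_i\bigl(v;F_i(T^\star,S_2)\bigr)\le \Omega_i(v;T^\star).
\]

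The key simplification comes from condition (iii) of the equilibrium, $x_j^{0,\star}=\bar x^\star_{C_{T^\star}(j)}$. Every $j\in S_2$ therefore carries the opinion $\gamma(0)=\bar x^\star_{S_2}$, and every $j\in S_1$, including $i$, carries $\gamma(1)=\bar x^\star_{S_1}$. Hence, for a nonempty $R\subseteq S_2$, the consensus of $R$ is $\gamma(0)$. Since $\pi^{R\cup\{i\}}$ sums to one, the consensus of $R\cup\{i\}$ is
\[
(1-\alpha_{i,R})\gamma(0)+\alpha_{i,R}\gamma(1)=\gamma(\alpha_{i,R}).
\]
The marginal contribution is then $V(\gamma(\alpha_{i,R}))-V(\gamma(0))=\int_0^{\alpha_{i,R}}\langle\nabla V(\gamma(t)),\Delta\rangle\,dt$ by the fundamental theorem of calculus along the affine path ($V\in C^1$). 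On the $S_1$ side, the analogous consensus points lie on the reversed path $t\mapsto\gamma(1-t)$, which produces the right-hand integrals with weights $\tilde w_R$. Substituting these marginal contributions into the Shapley expansion \eqref{eq:ad_expansion}, with the weights $w_R$ for $|S_2\cup\{i\}|=|S_2|+1$ and $\tilde w_R$ for $S_1$, yields \eqref{eq:integral_balance}.

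The step I expect to be the main obstacle is the bookkeeping of the empty-coalition terms ($R=\varnothing$). There $v(\varnothing)=0$, so the marginal contribution is $V(\gamma(1))$ rather than an increment along the path. On the $S_1$ side, moreover, all nonempty subcoalitions share the same consensus. I would isolate these boundary terms explicitly. For $R=\varnothing$ the weights are $w_\varnothing=1/(|S_2|+1)$ and $\tilde w_\varnothing=1/|S_1|$, and the standalone value $V(\gamma(1))$ appears on both sides. My first attempt is to write $V(\gamma(1))=V(\gamma(0))+\int_0^1\langle\nabla V(\gamma(t)),\Delta\rangle dt$, which converts the boundary contributions into path integrals (noting $\alpha_{i,\varnothing}=\tilde\alpha_{i,\varnothing}=1$). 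I would then check that the residual constants either cancel or are absorbed by normalizing $V$ so that $V(\bar x^\star_{S_2})=0$; this normalization is harmless because the Shapley allocation is covariant under additive shifts of nonempty-coalition values only up to this boundary effect. If the constants do not cancel exactly, I would state the inequality with the explicit additive correction.

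For the geometric obstruction \eqref{eq:directional_zero}, I would argue by contradiction using the intermediate value theorem. Set $g(t):=\langle\nabla V(\gamma(t)),\Delta\rangle$, which is continuous on $[0,1]$. If $g$ had no zero in $(0,1)$, it would have a strict sign there; say $g>0$, the other case being symmetric. Then $V$ is strictly increasing along $\gamma$. In that case the marginal contributions an $S_1$-agent would bring to subcoalitions of $S_2$ are strictly positive, while its marginals inside $S_1$ are essentially zero. I would also invoke the symmetric stability condition for an agent $j\in S_2$ deviating to $S_1$, obtained by exchanging the roles of $S_1$ and $S_2$ and replacing $\Delta$ by $-\Delta$. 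Adding the two balance inequalities should force a weighted integral of the strictly positive $g$ to be $\le 0$, which is the contradiction. The delicate point is again matching the weights of the two inequalities; if exact cancellation fails, I would use positivity of all $w_R,\tilde w_R$ and of $\alpha_{i,R}\in(0,1]$ to lower-bound one side strictly.
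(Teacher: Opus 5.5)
Your route is essentially the paper's: the fundamental theorem of calculus along $\gamma$ for both Aumann--Dr\`eze payoffs, followed by a sign argument. The two points you flag as delicate are exactly where that route fails. The paper's proof passes over both without justification: it treats $v(\varnothing)$ as if it were $V(\bar x^\star_{S_2})$, and it writes the $S_1$ marginals as $V(\bar x^\star_{S_1})-V(\bar x^\star_{S_1}-\tilde\alpha_{i,R}\Delta)$.

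\textbf{The $S_1$-side identification.} Your sentence ``on the $S_1$ side, the analogous consensus points lie on the reversed path'' is false, for the reason you give yourself one paragraph later. Under (iii) every member of $S_1$, including $i$, holds $\gamma(1)$, so $\bar x_R=\bar x_{R\cup\{i\}}=\gamma(1)$ for every nonempty $R\subseteq S_1\setminus\{i\}$. Those marginals are therefore zero, the $R=\varnothing$ marginal is $V(\gamma(1))$, and $\Omega_i(v;T^\star)=V(\bar x^\star_{S_1})/|S_1|$. The right-hand terms $\int_0^{\tilde\alpha_{i,R}}\langle\nabla V(\gamma(1-t)),\Delta\rangle\,dt=V(\gamma(1))-V(\gamma(1-\tilde\alpha_{i,R}))$ match none of these.

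\textbf{The normalization.} Setting $V(\bar x^\star_{S_2})=0$ is not harmless. Replacing $V$ by $V+c$ raises the current payoff by $c/|S_1|$ but the deviation payoff by $c/(|S_2|+1)$, so stability is not shift-invariant. Done correctly, your bookkeeping shows that stability against the move $i\to S_2$ is exactly
\[
\sum_{R\subseteq S_2} w_R\int_0^{\alpha_{i,R}}\langle\nabla V(\gamma(t)),\Delta\rangle\,dt\;\le\;\frac{V(\bar x^\star_{S_1})}{|S_1|}-\frac{V(\bar x^\star_{S_2})}{|S_2|+1}.
\]
This contains level values of $V$ and is not \eqref{eq:integral_balance}. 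An ``additive correction'' would therefore be a different theorem.

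\textbf{Both conclusions are false as stated.} Take $d=1$, $N=\{i,j\}$, $T^\star=\{\{i\},\{j\}\}$, $x_i^{0,\star}=1$, $x_j^{0,\star}=0$, and $\pi^{\{i,j\}}=(\tfrac12,\tfrac12)$. Condition (iii) is automatic for singletons, and $\Delta=1$, $\gamma(t)=t$, $w_\varnothing=w_{\{j\}}=\tfrac12$, $\alpha_{i,\{j\}}=\tfrac12$. The right side of \eqref{eq:integral_balance} is the single term with $\tilde w_\varnothing=\tilde\alpha_{i,\varnothing}=1$.

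For $V(x)=1+2x(1-x)$, each agent gets $1$ at $T^\star$ and $\tfrac34$ after moving, so $T^\star$ is an equilibrium. Yet the left side of \eqref{eq:integral_balance} is $\tfrac12\cdot 0+\tfrac12\cdot\tfrac12=\tfrac14$, while the right side is $\int_0^1(4t-2)\,dt=0$.

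For $V(x)=x$, the payoffs are $1$ and $0$ at $T^\star$ versus $\tfrac34$ and $-\tfrac14$ after moving. This is a segregated equilibrium, but $\langle\nabla V(\gamma(t)),\Delta\rangle\equiv1$ has no zero, so \eqref{eq:directional_zero} fails.

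This is why your contradiction step cannot close. Adding the inequalities for $i\to S_2$ and $j\to S_1$ leaves level terms in $V(\bar x^\star_{S_1})$ and $V(\bar x^\star_{S_2})$ whose coefficients do not cancel. Those terms can outweigh any weighted integral of a one-signed $g$. The paper's step ``both sides are positive, hence the deviation is strictly profitable'' fails for the same reason. What you can legitimately prove is the corrected inequality displayed above.
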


\begin{proof}
Let $S_1,S_2 \in T^\star$ be distinct coalitions and fix $i\in S_1$.
Define
\(\Delta:=\bar x_{S_1}^\star-\bar x_{S_2}^\star\) and
\(\gamma(t):=\bar x_{S_2}^\star+t\Delta\).
Because switching costs are zero and acceptance is automatic,
agent $i$ cannot profit by moving from $S_1$ to $S_2$:
\begin{equation}
\Omega_i(v^\star;T^\star)
\ge
\Omega_i(v^\star;F_i(T^\star,S_2)).
\label{eq:stab_main}
\end{equation}
Under the Aumann--Dr\`eze value,
$i$'s payoff inside a coalition equals its Shapley value
in the restricted cooperative game.
Let $S_2^+ := S_2 \cup \{i\}$.
For $R\subseteq S_2$, the marginal contribution of $i$ equals
\(
V(\bar x_{R\cup\{i\}}^\star) - V(\bar x_R^\star).
\)
At equilibrium,
\(
\bar x_R^\star = \bar x_{S_2}^\star,
\)
and when $i$ joins $R$,
\(\bar x_{R\cup\{i\}}^\star=\bar x_{S_2}^\star+\alpha_{i,R}\Delta\), where
\(\alpha_{i,R}=\pi_i^{R\cup\{i\}}\).
Hence
\[
V(\bar x_{R\cup\{i\}}^\star) - V(\bar x_R^\star)
=
V(\bar x_{S_2}^\star + \alpha_{i,R}\Delta)
-
V(\bar x_{S_2}^\star).
\]
Applying the fundamental theorem of calculus along $\gamma$,
\[
V(\bar x_{S_2}^\star + \alpha_{i,R}\Delta)
-
V(\bar x_{S_2}^\star)
=
\int_0^{\alpha_{i,R}}
\langle \nabla V(\gamma(t)), \Delta \rangle dt.
\]
Therefore,
\[
\Omega_i(v^\star;F_i(T^\star,S_2))
=
\sum_{R\subseteq S_2}
w_R
\int_0^{\alpha_{i,R}}
\langle \nabla V(\gamma(t)), \Delta \rangle dt.
\]
Inside $S_1$, the Shapley formula gives
\[
\Omega_i(v^\star;T^\star)
=
\sum_{R\subseteq S_1\setminus\{i\}}
\tilde w_R
\Bigl(
V(\bar x_{S_1}^\star)
-
V(\bar x_{S_1}^\star - \tilde\alpha_{i,R}\Delta)
\Bigr).
\]
Note that
\(\bar x_{S_1}^\star-\tilde\alpha_{i,R}\Delta=\gamma(1-\tilde\alpha_{i,R})\).
Again by the fundamental theorem of calculus,
\[
V(\bar x_{S_1}^\star)
-
V(\bar x_{S_1}^\star - \tilde\alpha_{i,R}\Delta)
=
\int_0^{\tilde\alpha_{i,R}}
\langle \nabla V(\gamma(1-t)), \Delta \rangle dt.
\]
Thus
\[
\Omega_i(v^\star;T^\star)
=
\sum_{R\subseteq S_1\setminus\{i\}}
\tilde w_R
\int_0^{\tilde\alpha_{i,R}}
\langle \nabla V(\gamma(1-t)), \Delta \rangle dt.
\]
Substituting the expressions for both payoffs into
\eqref{eq:stab_main}
yields the claimed inequality
\eqref{eq:integral_balance}.
Assume now that the equilibrium is segregated, so that
$\Delta \neq 0$.
Suppose, for contradiction, that
\(\langle\nabla V(\gamma(t)),\Delta\rangle>0\) for all \(t\in[0,1]\).
Then every integral on the left-hand side of
\eqref{eq:integral_balance}
is strictly positive, and so is every integral on the right-hand side.
Moreover, because the Shapley weights form probability measures,
the deviation payoff strictly exceeds the equilibrium payoff,
contradicting \eqref{eq:stab_main}.
Similarly, if the directional derivative were strictly negative
for all $t\in[0,1]$, the reverse deviation would be profitable. Hence there
must exist $t^\star \in (0,1)$ such that
\(
\langle \nabla V(\gamma(t^\star)), \Delta \rangle = 0.
\) 
This establishes \eqref{eq:directional_zero}.
\end{proof}

\begin{corollary}[Segregation under affine performance functionals]
\label{cor:affine_case}

Suppose switching costs are zero and acceptance is automatic.
Let $(T^\star,\{x_i^{0,\star}\})$ be a joint tactical--strategic equilibrium,
and assume that
\(V(x)=a+\langle b,x\rangle\), with \(b\in\mathbb{R}^d\).

Fix distinct coalitions $S_1,S_2\in T^\star$ and define
\[
\Delta:=\bar x_{S_1}^\star-\bar x_{S_2}^\star,
\qquad
\bar\alpha_2:=\sum_{R\subseteq S_2} w_R \alpha_{i,R},
\qquad
\bar\alpha_1:=\sum_{R\subseteq S_1\setminus\{i\}}
\tilde w_R \tilde\alpha_{i,R}.
\]

Then the equilibrium condition \eqref{eq:integral_balance}
reduces to
\(\langle b,\Delta\rangle(\bar\alpha_2-\bar\alpha_1)\le0\).

In particular, if $\langle b,\Delta\rangle=0$
the condition is automatically satisfied,
while if $\langle b,\Delta\rangle\neq0$
stability depends only on the sign of
$\bar\alpha_2-\bar\alpha_1$.
\end{corollary}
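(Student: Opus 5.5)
The plan is to specialize the integral balance condition \eqref{eq:integral_balance} of Theorem~\ref{thm:integral_balance} to the affine case. The hypotheses of that theorem hold here: $V(x)=a+\langle b,x\rangle$ is $C^1$, switching costs are zero, and acceptance is automatic. So for the fixed agent $i\in S_1$, strategic stability of $T^\star$ already gives \eqref{eq:integral_balance}, and what remains is to evaluate both sides explicitly.

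First, since $\nabla V\equiv b$ is constant, the integrand $\langle\nabla V(\gamma(t)),\Delta\rangle$ equals $\langle b,\Delta\rangle$ for every $t$, along both $\gamma(t)$ and $\gamma(1-t)$. Each integral therefore collapses to its upper limit times this constant:
\[
\int_0^{\alpha_{i,R}}\langle b,\Delta\rangle\,dt=\alpha_{i,R}\langle b,\Delta\rangle,
\qquad
\int_0^{\tilde\alpha_{i,R}}\langle b,\Delta\rangle\,dt=\tilde\alpha_{i,R}\langle b,\Delta\rangle.
\]
Summing against the Shapley weights $w_R$ and $\tilde w_R$ and factoring out $\langle b,\Delta\rangle$, the left side of \eqref{eq:integral_balance} becomes $\langle b,\Delta\rangle\bar\alpha_2$ and the right side becomes $\langle b,\Delta\rangle\bar\alpha_1$. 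Rearranging gives $\langle b,\Delta\rangle(\bar\alpha_2-\bar\alpha_1)\le 0$.

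One bookkeeping point needs attention. The constant $a$ disappears from the marginal differences $V(\bar x_{R\cup\{i\}})-V(\bar x_R)$ only when both terms are genuine $V$-values. For $R=\varnothing$ we have $v(\varnothing)=0$ rather than $V(\bar x_{S_2}^\star)$, so there is a discrepancy in the $R=\varnothing$ term. I would handle it the way the proof of Theorem~\ref{thm:integral_balance} implicitly does, by treating the identification $\bar x_R^\star=\bar x_{S_2}^\star$ as holding for all $R$. Since I am taking \eqref{eq:integral_balance} as given, the corollary is purely a computation on that inequality, and this convention issue is inherited rather than re-opened. I would mention it in a brief remark.

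The two "in particular" statements then follow directly. If $\langle b,\Delta\rangle=0$, the left side of the reduced inequality is $0\le0$, so it always holds. If $\langle b,\Delta\rangle\neq0$, dividing by it shows that the condition is equivalent to a sign requirement on $\bar\alpha_2-\bar\alpha_1$: it must be $\le0$ when $\langle b,\Delta\rangle>0$ and $\ge0$ when $\langle b,\Delta\rangle<0$.

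The only real obstacle is making sure the reduction is applied to \eqref{eq:integral_balance} exactly as stated, including the weights $w_R$ indexed by $R\subseteq S_2$ and $\tilde w_R$ indexed by $R\subseteq S_1\setminus\{i\}$. The rest is linear algebra.
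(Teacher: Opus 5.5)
Your proposal is correct and follows the paper's proof: since $\nabla V\equiv b$, the integrand is the constant $\langle b,\Delta\rangle$, each integral collapses to its upper limit times that constant, and summing against the weights and factoring gives $\langle b,\Delta\rangle(\bar\alpha_2-\bar\alpha_1)\le0$. Your side remark about the $R=\varnothing$ term, where $v(\varnothing)=0$ rather than $V(\bar x_{S_2}^\star)$, is a genuine caveat about how \eqref{eq:integral_balance} is derived in Theorem~\ref{thm:integral_balance}; the paper does not address it, but, as you say, it does not affect the corollary read as a reduction of that stated inequality.
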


\begin{proof}
Since $V$ is affine,
$\nabla V(x)\equiv b$ and $\nabla^2 V\equiv0$.
Thus
\(\langle\nabla V(\gamma(t)),\Delta\rangle=\langle b,\Delta\rangle=:c\)
is constant in $t$.
Each integral in \eqref{eq:integral_balance} therefore equals
$c$ times the corresponding coefficient:
\[
\int_0^{\alpha_{i,R}}
\langle \nabla V(\gamma(t)),\Delta\rangle dt
=
\alpha_{i,R} c,
\qquad
\int_0^{\tilde\alpha_{i,R}}
\langle \nabla V(\gamma(1-t)),\Delta\rangle dt
=
\tilde\alpha_{i,R} c.
\]
Substituting into \eqref{eq:integral_balance}
and factoring out $c=\langle b,\Delta\rangle$
gives
\(\langle b,\Delta\rangle(\bar\alpha_2-\bar\alpha_1)\le0\),
which proves the claim.
\end{proof}

\subsection{Polarization and Performance Landscape}

\begin{definition}[Polarization equilibrium]
Let $(T^\star,\{x_i^{0,\star}\}_{i\in N})$ be a joint
tactical--strategic equilibrium.
For each coalition $S\in T^\star$, define its equilibrium
consensus state
\(
\bar x_S^\star
=
\sum_{i\in S}\pi_i^S x_i^{0,\star}.
\)
Let
\(
\mathcal X^\star
:=
\{\bar x_S^\star : S\in T^\star\}.
\)

The equilibrium is called a \emph{polarization equilibrium} if
$|\mathcal X^\star| \ge 2$ and, for any two distinct coalitions
$S_1,S_2\in T^\star$,
\(v(S_1\cup S_2)<v(S_1)+v(S_2)\).
\end{definition}

\begin{theorem}[Polarization and landscape structure]
\label{thm:polarization_landscape}
Let $(T^\star,\{x_i^{0,\star}\}_{i\in N})$ be a joint
tactical--strategic equilibrium and suppose
$v(S)=V(\bar x_S)$ for some function
$V:\mathbb{R}^d\to\mathbb{R}$.
Let $S_1,S_2\in T^\star$ be distinct coalitions and define
$x_1 := \bar x_{S_1}^\star$ and
$x_2 := \bar x_{S_2}^\star$.
Then \(v(S_1\cup S_2) < v(S_1)+v(S_2)\) is equivalent to
\(V(\bar x_{S_1\cup S_2}^\star)<V(x_1)+V(x_2)\), and also to the existence of
\(\lambda\in(0,1)\) such that
\(V(\lambda x_1+(1-\lambda)x_2)<V(x_1)+V(x_2)\).
Moreover, if $x_1\neq x_2$ at equilibrium,
then $V$ cannot be strictly convex or strictly concave
on $\mathbb{R}^d$.
\end{theorem}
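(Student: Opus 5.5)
The first equivalence is definitional. Since $v(S)=V(\bar x_S)$ for every coalition, I will substitute $S=S_1\cup S_2$, $S_1$ and $S_2$ into the polarization inequality. Using $v(S_1)=V(x_1)$ and $v(S_2)=V(x_2)$ gives $v(S_1\cup S_2)<v(S_1)+v(S_2)\iff V(\bar x^\star_{S_1\cup S_2})<V(x_1)+V(x_2)$.

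For the second equivalence I will compute the consensus of the merged coalition at equilibrium. By condition \eqref{eq:intrinsic_equals_consensus} of Definition~\ref{def:eqm}, every $j\in S_1$ has $x_j^{0,\star}=x_1$ and every $j\in S_2$ has $x_j^{0,\star}=x_2$. Hence
\[
\bar x^\star_{S_1\cup S_2}=\sum_{j\in S_1\cup S_2}\pi^{S_1\cup S_2}_j x_j^{0,\star}=\lambda^\star x_1+(1-\lambda^\star)x_2,
\qquad \lambda^\star:=\sum_{j\in S_1}\pi^{S_1\cup S_2}_j .
\]
By Assumption~\ref{ass:primitive_interactions} and Proposition~\ref{prop:coalition_consensus_representation}, the stationary distribution of the primitive matrix $P_{S_1\cup S_2}$ is strictly positive, so $\lambda^\star\in(0,1)$. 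The forward implication then holds with witness $\lambda=\lambda^\star$.

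The converse is the step I expect to be the main obstacle. A strict inequality at an arbitrary point of the open segment does not by itself transfer to $\lambda^\star$. My plan is to read the existential over the convex combinations that are realized as merged-coalition consensus states. Within the model these are exactly the points $\lambda^\star x_1+(1-\lambda^\star)x_2$ produced by the union stationary distribution, so any admissible witness coincides with $\lambda^\star$, and the inequality then pulls back to $v(S_1\cup S_2)<v(S_1)+v(S_2)$. If this reading cannot be justified from the paper's definitions, I would say so explicitly rather than claim the converse for arbitrary $\lambda$.

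For the ``moreover'' clause, assume $x_1\neq x_2$, let $\Delta=x_1-x_2$ and $\varphi(t)=V(\gamma(t))$ with $\gamma$ as in Theorem~\ref{thm:integral_balance}. I will apply that theorem twice: once to an agent $i\in S_1$ considering a move to $S_2$, and once, with the roles of the two coalitions exchanged, to an agent $j\in S_2$ considering a move to $S_1$. This gives two integral balance inequalities in terms of $\varphi'$ over subintervals adjacent to $t=0$ and to $t=1$.

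\emph{Strictly convex case.} Then $\varphi'$ is strictly increasing. The deviation payoff of $i$ integrates $\varphi'$ over $[0,\alpha_{i,R}]$, while the incumbent payoff integrates it over the mirrored interval $[1-\tilde\alpha_{i,R},1]$. The theorem also supplies $t^\star$ with $\varphi'(t^\star)=0$. I will compare the two sides using the sign of $\varphi'$ on either side of $t^\star$ and the fact that the Shapley weights are probability measures. The aim is to show that at least one of the two moves, $i\to S_2$ or $j\to S_1$, is strictly profitable, contradicting stability.

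\emph{Strictly concave case.} Here I would use Jensen's inequality:
\[
\varphi(\lambda^\star)>\lambda^\star\varphi(1)+(1-\lambda^\star)\varphi(0).
\]
I will combine this with the reversed monotonicity of $\varphi'$ in the same pair of balance inequalities to again produce a strictly profitable unilateral move.

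The delicate point in both cases is matching the Shapley weights $w_R$ and $\tilde w_R$ and the coefficients $\alpha_{i,R}$ and $\tilde\alpha_{i,R}$ across the two inequalities. I would first try to dominate each pair of integrals termwise using monotonicity of $\varphi'$. If that fails, I would compare the averages $\bar\alpha_1$ and $\bar\alpha_2$, as in Corollary~\ref{cor:affine_case}.
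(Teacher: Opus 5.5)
Your first equivalence and the forward direction are correct. Write (i), (ii), (iii) for the three conditions in the order stated. Your route to (ii)\(\Rightarrow\)(iii) is sharper than the paper's ``consensus is affine'': you use the equilibrium identities \(x_j^{0,\star}=x_1\) on \(S_1\) and \(x_j^{0,\star}=x_2\) on \(S_2\) (these need \(\gamma_j>0\)) together with positivity of \(\pi^{S_1\cup S_2}\), and obtain the explicit witness \(\lambda^\star\in(0,1)\). You are also right that (iii)\(\Rightarrow\)(ii) does not follow. The paper's ``and conversely'' tacitly uses the restricted reading you propose, but that reading turns (iii) into (ii) rather than proving it. Since Definition~\ref{def:eqm} allows arbitrary \(c_i\ge0\), large switching costs make any partition with coalition-wise constant opinions an equilibrium. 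Then any \(V\) that drops below \(V(x_1)+V(x_2)\) somewhere on the open segment, but not at \(\lambda^\star\), satisfies (iii) and violates (ii). State this as a correction to the statement, not as an interpretation of it.

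The genuine gap is the ``moreover'' clause. Neither your integral-balance strategy nor any other argument can prove it, because it is false even with \(c_i\equiv0\), automatic acceptance, and the polarization inequality assumed. The reason is that curvature is invariant under \(V\mapsto V+K\), but stability is not. At equilibrium every nonempty subcoalition of \(S_1\) has value \(V(x_1)\), so only the \(R=\varnothing\) term survives and \(\Omega_i(v;T^\star)=V(x_1)/|S_1|\). After joining \(S_2\), agent \(i\) instead gets \(V(x_1)/(|S_2|+1)+\sum_{\varnothing\neq R\subseteq S_2}w_R\bigl[V(x_2+\alpha_{i,R}\Delta)-V(x_2)\bigr]\). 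Adding \(K\) to \(V\) adds \(K/|S_1|\) to the stay payoff, \(K/(|S_2|+1)\) to the deviation payoff, and \(K\) versus \(2K\) to the two sides of the polarization inequality. So for a partition into singletons, a large \(K\) makes every cross move unprofitable while keeping polarization and curvature intact.

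Concretely, take two singletons at \(x_1=0.4\), \(x_2=0.6\), uniform weights, and the strictly concave \(V(x)=10-(x-0.5)^2\). Each agent's payoff is \(9.99\), and joining the other yields \(\tfrac12(9.99)+\tfrac12(10-9.99)=5\). Opinions are stationary, and \(v(\{1,2\})=10<19.98\). The strictly convex \(V(x)=10+(x-0.5)^2\) gives \(10.01\) versus \(5\).

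The tool you rely on has further problems:
\begin{itemize}
\item Theorem~\ref{thm:integral_balance} requires zero costs and automatic acceptance, which this statement does not assume.
\item Its integral representation drops exactly these level terms. It uses \(V(x_2)\) in place of \(v(\varnothing)=0\), and it assigns nonzero integrals to marginal contributions inside \(S_1\) that are in fact zero.
\item Its \(t^\star\) assertion fails: \(V(x)=x\) with singletons at \(1\) and \(2\) is a zero-cost equilibrium (stay \(1\) and \(2\), deviate \(0.25\) and \(1.25\)) with \(\varphi'\equiv-1\).
\end{itemize}
Matching weights across two balance inequalities therefore chases an inequality that does not describe the payoffs.

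The paper takes a different and shorter route. It argues by contradiction with (iii), using Jensen in the concave case and an asserted local maximality of equilibrium consensus values in the convex case. It has the same hole. Jensen contradicts (iii) only when \((1-\lambda)V(x_1)+\lambda V(x_2)\le0\), e.g.\ when \(V\le0\) at the cluster states, as in the paper's examples. Nothing in Definition~\ref{def:eqm} makes consensus values local maximizers of \(V\).

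A provable version needs extra hypotheses. Polarization together with \(V(x_1),V(x_2)\le0\) gives the concave half by Jensen. The convex half fails even then: \(V(x)=x^2-1\) with singletons at \(\pm0.9\) gives stay \(-0.19\), deviate \(-0.5\), and \(v(\{1,2\})=-1<-0.38\).
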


\begin{proof}
Since $v(S)=V(\bar x_S)$, condition (i) is exactly (ii),
so they are equivalent.
Consensus is affine in intrinsic opinions.
Thus there exists $\lambda\in(0,1)$ such that
$\bar x_{S_1\cup S_2}^\star
=\lambda x_1+(1-\lambda)x_2$.
Substituting this identity into (ii) yields (iii),
and conversely.
Hence all three statements are equivalent.
For the final claim, suppose $x_1\neq x_2$.
If $V$ is strictly concave, then for every $\lambda\in(0,1)$,
$V(\lambda x_1+(1-\lambda)x_2)
>
\lambda V(x_1)+(1-\lambda)V(x_2)$,
which contradicts (iii).
If $V$ is strictly convex and both $x_1$ and $x_2$
are equilibrium consensus values (hence local maximizers),
strict convexity implies uniqueness of the maximizer,
so $x_1=x_2$, a contradiction.
Thus polarization requires that $V$
fail to be globally strictly convex or strictly concave.
\end{proof}

\begin{example}[Double-well landscape]
\label{ex:double_well_shifted}
Let \(d=1\) and define
\(V(x)=-\big((x-0.5)^2-0.4^2\big)^2\).
The function has two strict local maxima at \(x_1=0.1\) and \(x_2=0.9\), and a
strict local minimum at \(x=0.5\). Indeed,
\((x-0.5)^2-0.4^2=0\) if and only if \(x=0.5\pm0.4\).
For any \(\lambda\in(0,1)\), the convex combination
\(\lambda x_1+(1-\lambda)x_2=0.9-0.8\lambda\) lies strictly between the two
maximizers, so \(|\lambda x_1+(1-\lambda)x_2-0.5|<0.4\). Consequently,
\(V(\lambda x_1+(1-\lambda)x_2)<V(x_1)=V(x_2)\), and the landscape satisfies the
polarization condition.
\end{example}

\subsubsection{Segregation and Polarization}

Segregation and polarization describe distinct mechanisms
behind multi-cluster equilibria.
An equilibrium is \emph{segregated} if the set of consensus values
$\mathcal X^\star := \{\bar x_S^\star : S\in T^\star\}$
contains at least two distinct elements.
Segregation is therefore a structural property of the equilibrium
partition: different coalitions stabilize at different consensus states.
It does not impose any condition on the geometry of $V$.
Polarization imposes an additional landscape requirement.
For distinct coalitions $S_1,S_2\in T^\star$
with consensus values $x_1$ and $x_2$,
polarization requires
$v(S_1\cup S_2) < v(S_1)+v(S_2)$.
By Theorem~\ref{thm:polarization_landscape},
this is equivalent to the existence of
$\lambda\in(0,1)$ such that
$V(\lambda x_1+(1-\lambda)x_2)
<
V(x_1)+V(x_2)$.
Thus polarization reflects a geometric property of $V$
along the segment connecting cluster consensuses.
Segregation may arise purely from strategic frictions
such as switching costs or acceptance rules,
even when merging would improve aggregate performance.
Polarization, by contrast, requires that the performance landscape
itself discourage merging. Hence polarization implies segregation,
but segregation does not imply polarization.

\subsection{Cognitive Barriers}
\label{subsec:cognitive_barriers}

The previous subsection characterized segregation geometrically,
through directional derivatives of the performance functional $V$
along the line segment connecting equilibrium consensus states.
We now introduce a complementary \emph{payoff-based}
measure of stability.
While equilibrium ensures that no agent has a profitable unilateral deviation,
it does not quantify how robust the segregation is.
An equilibrium may be strictly stable, marginal, or nearly unstable.
To measure the stability margin of a segregated configuration,
we introduce the notion of a \emph{cognitive barrier}:
the minimal compensating incentive required to induce
an agent to abandon its current coalition.
This concept captures rational resistance embedded in the equilibrium.
Even when multiple coalitions coexist,
agents may require strictly positive incentives to reconsider
their alignment.
The magnitude of this resistance quantifies the robustness
of segregation.
Let $(T^\star,\{x_i^{0,\star}\}_{i\in N})$ be a joint
tactical--strategic equilibrium.
For each coalition $S\in T^\star$, denote its equilibrium
consensus state by
\(\bar x_S^\star=\sum_{j\in S}\pi_j^S x_j^{0,\star}\).
Fix distinct coalitions $S,R \in T^\star$.
For agent $i\in S$, define the \emph{switching gain}
\(\Delta_i^{SR}:=\Omega_i\bigl(v;F_i(T^\star,R)\bigr)-\Omega_i(v;T^\star)\),
where $F_i(T^\star,R)$ denotes the partition obtained
by moving $i$ from $S$ to $R$.
The switching gain measures the net payoff advantage
of migrating from $S$ to $R$.
Strategic stability implies
\(\Delta_i^{SR}\le0\) for all \(i\in S\) and \(R\neq S\),
since no profitable unilateral deviation exists.

\begin{definition}[Individual cognitive barrier]
For $i\in S$ and $R\neq S$, define the \emph{cognitive barrier}
\(B_i^{SR}:=-\Delta_i^{SR}=\Omega_i(v;T^\star)-
\Omega_i\bigl(v;F_i(T^\star,R)\bigr)\ge0\).
Equivalently, $B_i^{SR}$ is the smallest transfer
$\tau\ge0$ such that
\(\Omega_i\bigl(v;F_i(T^\star,R)\bigr)+\tau\ge\Omega_i(v;T^\star)\).
\end{definition}

The quantity $B_i^{SR}$ represents the minimal compensating reward
required to induce agent $i$ to abandon its current coalition $S$
and join $R$.
It quantifies the directional stability margin of agent $i$
with respect to migration from $S$ to $R$.
In particular, $B_i^{SR}=0$ if and only if agent $i$
is indifferent between the two coalitions.

\begin{definition}[Directional barriers]
For distinct coalitions $S,R\in T^\star$, define
\(B^{SR}_{\min}:=\min_{i\in S}B_i^{SR}\) and
\(B^{SR}_{\max}:=\max_{i\in S}B_i^{SR}\).
\end{definition}

The quantity $B^{SR}_{\min}$ is the smallest
incentive required to trigger at least one migration
from $S$ to $R$.
The quantity $B^{SR}_{\max}$ is the uniform incentive
required to relocate all members of $S$ to $R$.
Thus $B^{SR}_{\min}$ measures the weakest link
in coalition stability,
while $B^{SR}_{\max}$ measures its strongest resistance.

\begin{definition}[Global cognitive barrier]
The \emph{global cognitive barrier} of the equilibrium $T^\star$ is
\(B_{\min}(T^\star)=\min_{S\neq R}B^{SR}_{\min}\).
\end{definition}

The scalar $B_{\min}(T^\star)$ provides a global
measure of the robustness of segregation.
If $B_{\min}(T^\star)>0$,
the equilibrium partition is \emph{strictly stable}:
every cross-coalition deviation requires
a strictly positive incentive.
If $B_{\min}(T^\star)=0$,
the equilibrium is \emph{marginal}:
at least one agent is indifferent to switching.
If a negative switching gain were to arise
(i.e., $B_i^{SR}<0$),
strategic stability would fail
and segregation would collapse.
Geometrically, let $\Delta = \bar x_S^\star - \bar x_R^\star$
denote the displacement between two equilibrium
consensus states.
When $V$ is differentiable, the switching gain
admits a line-integral representation along the
segment connecting $\bar x_R^\star$ and $\bar x_S^\star$:
\[
\Delta_i^{SR}
=
\int_0^{\alpha_{i}}
\langle
\nabla V(\bar x_R^\star + t\Delta),
\Delta
\rangle dt
-
\int_0^{\tilde\alpha_{i}}
\langle
\nabla V(\bar x_S^\star - t\Delta),
\Delta
\rangle dt,
\]
where $\alpha_i$ and $\tilde\alpha_i$
are influence coefficients determined by the consensus weights.

Thus cognitive barriers are governed by directional
derivatives of $V$ along the segregation vector.
When the directional derivative vanishes along the
connecting segment, barriers shrink.
When $V$ bends sharply against $\Delta$,
barriers grow.

If $V$ is strictly concave,
the directional derivative
$\langle\nabla V(\gamma(t)),\Delta\rangle$
is strictly monotone along the segment.
In this case segregation is incompatible
with strict stability,
and the global barrier \(B_{\min}(T^\star)\) vanishes.

Conversely, large positive barriers require
sufficient curvature of $V$
to penalize motion toward alternative consensus states.
Hence cognitive barriers encode second-order
information about the geometry of the performance landscape.

\section{Numerical Experiments}
\label{sec:numerical_experiments}

This section provides a computational illustration of the mechanisms developed
above. The experiments are not intended as a separate proof of convergence.
Rather, they show how the strategic exit--and--join rule, the tactical
consensus map, switching frictions, and slow belief adaptation interact along
representative trajectories.

\subsection{Baseline protocol}

\begin{example}[Baseline numerical protocol]
\label{ex:baseline_numerics}
The baseline experiment uses a population of $N=10$ agents initialized as
singleton coalitions.
Intrinsic opinions $x_i^0(0)$ are drawn independently from the uniform
distribution on $[0,1]$.
Within-coalition tactical dynamics follow DeGroot averaging with a uniform
interaction matrix, iterated for $K=10$ steps at each strategic time.
Coalition performance is evaluated by the single-peaked score
\(V(x)=-(x-0.5)^2\), whose maximizer is the target consensus value \(x=0.5\).
Strategic updates use Aumann--Dr\`eze payoffs, Pareto non-decrease acceptance
for destination coalitions, and a switching cost \(c\ge0\).
Intrinsic opinions are updated slowly according to
\(x_i^0(\tau+1)=(1-\gamma)x_i^0(\tau)+\gamma\bar x_{C_T(i)}(\tau)\),
with $\gamma=0.3$ fixed throughout the experiments.
All simulations are run for $\tau_{\max}=40$ strategic iterations.
\end{example}

The baseline comparison varies only the switching cost
\(c\in\{0,0.01,0.02\}\). The plotted diagnostics are the coalition-label
trajectory, the aggregate coalition score
\(W(\tau):=\sum_{S\in T(\tau)}v(S;\tau)\), and the intrinsic-opinion
trajectory. Together these quantities separate the strategic state, the
coalitional objective, and the slow belief state.

\subsection{Nonnegative switching costs}

Figure~\ref{fig:coalition_evolution_panels} shows the evolution of coalition
membership under the three nonnegative switching-cost regimes. Rows correspond
to agents, columns to strategic time, and colors indicate coalition labels.

\begin{figure}[t]
\centering
\begin{subfigure}{0.32\textwidth}
    \centering
    \includegraphics[width=\linewidth]{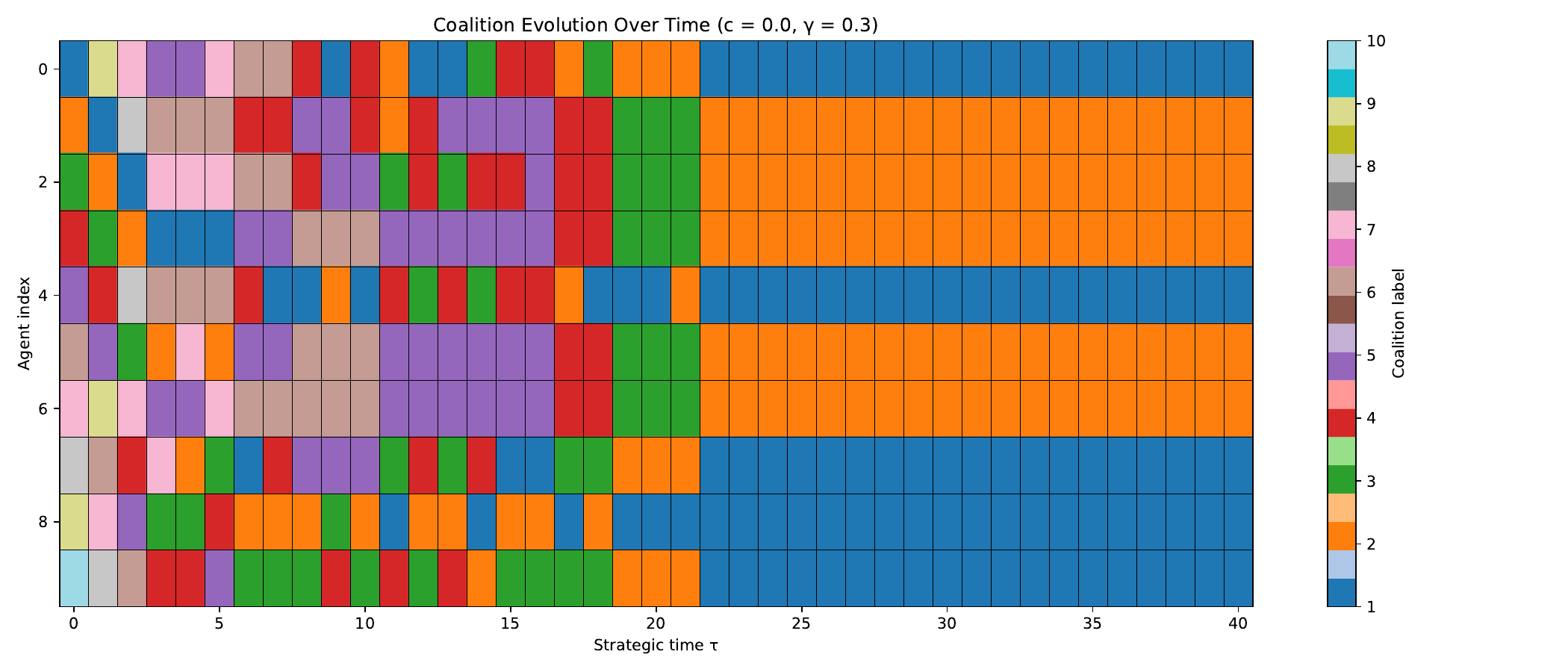}
    \caption{$c = 0$}
\end{subfigure}
\hfill
\begin{subfigure}{0.32\textwidth}
    \centering
    \includegraphics[width=\linewidth]{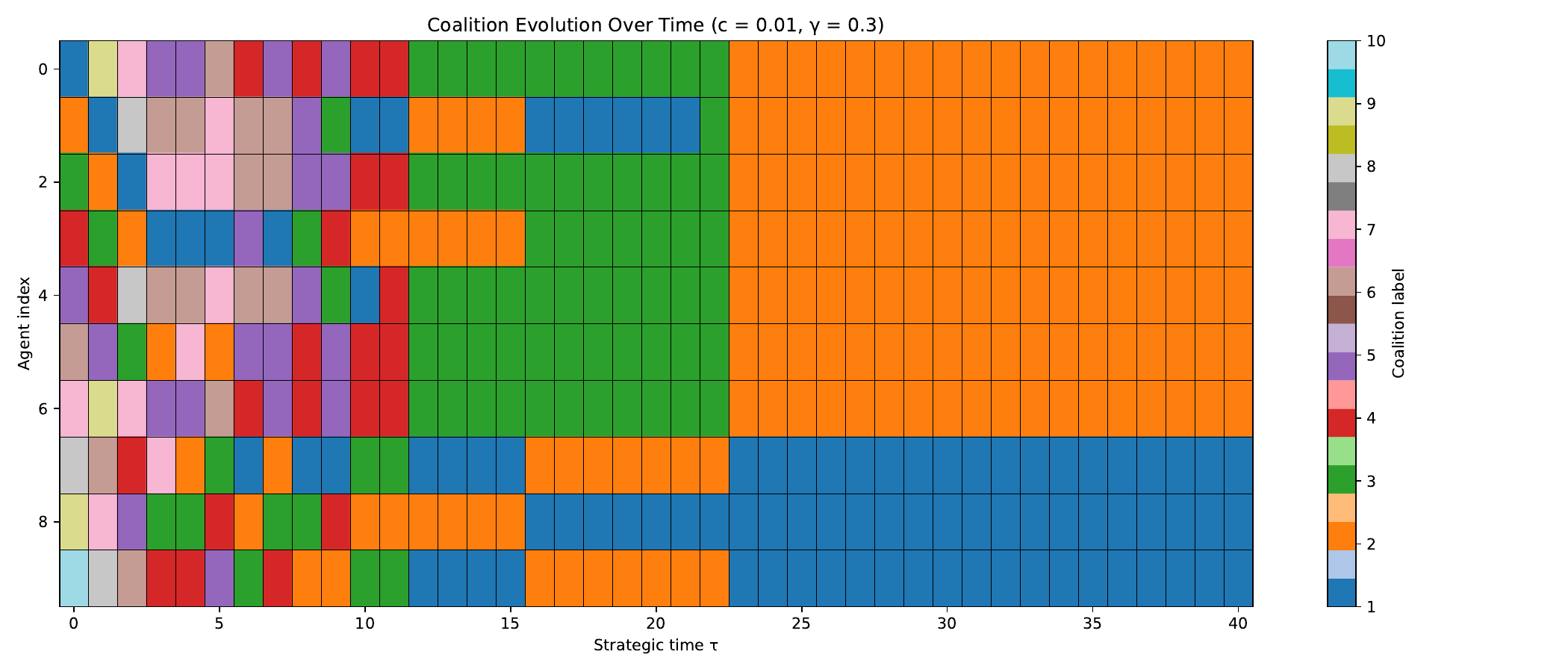}
    \caption{$c = 0.01$}
\end{subfigure}
\hfill
\begin{subfigure}{0.32\textwidth}
    \centering
    \includegraphics[width=\linewidth]{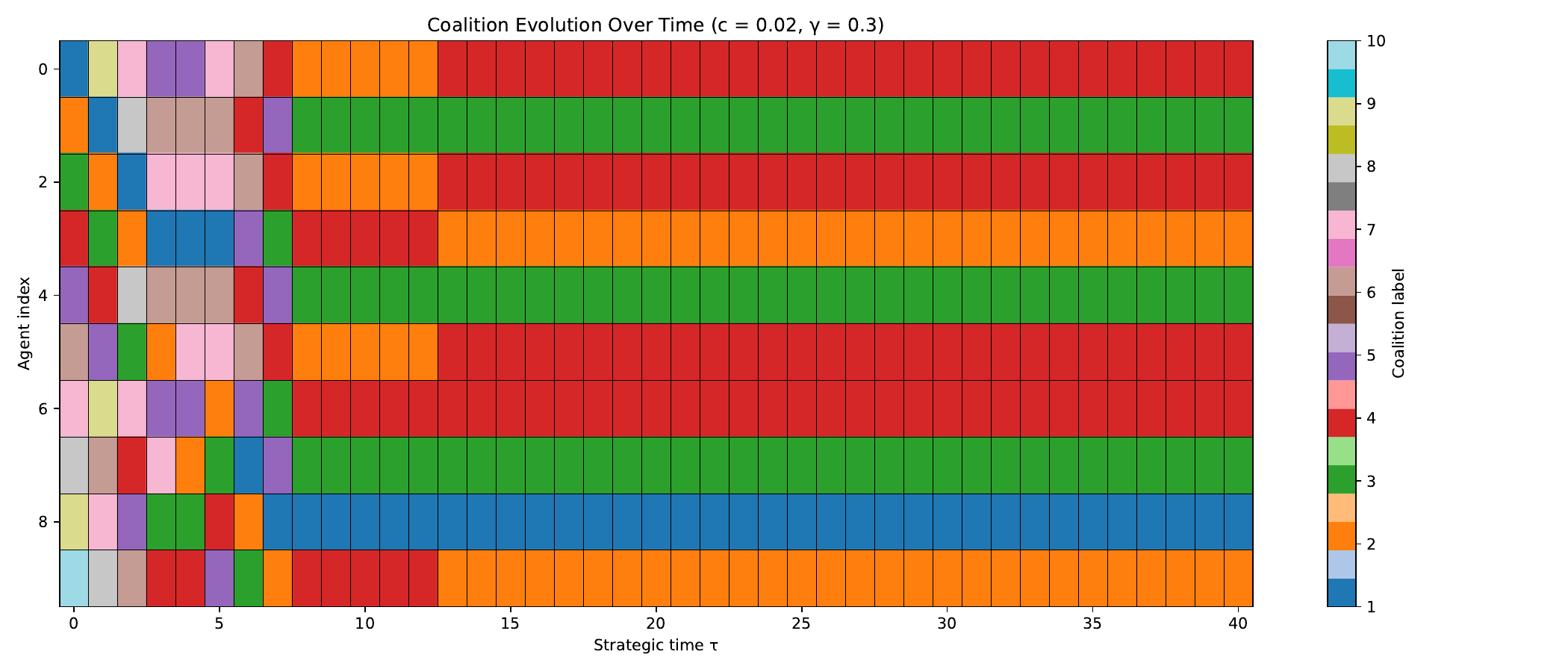}
    \caption{$c = 0.02$}
\end{subfigure}
\caption{Coalition evolution under increasing switching costs
for the baseline protocol in Example~\ref{ex:baseline_numerics}.}
\label{fig:coalition_evolution_panels}
\end{figure}

The costless regime exhibits frequent early reconfiguration: agents enter and
leave coalitions until the payoff-improving moves are exhausted. Positive
switching costs shrink the admissible transition set because a move must now
generate enough Aumann--Dr\`eze improvement to offset the friction. As \(c\)
increases, the observed partitions become more persistent and the transient
reorganization phase shortens. This is the computational analogue of the
cognitive-barrier interpretation in Subsection~\ref{subsec:cognitive_barriers}: a
larger switching cost raises the payoff threshold required to cross from one
coalitional arrangement to another.

Figure~\ref{fig:lyapunov_panels} reports the total coalition surplus
\(W(\tau)=\sum_{S\in T(\tau)}v(S;\tau)\) as a function of strategic time.
The curves are best read as an empirical potential diagnostic for the displayed
trajectories. Implemented moves are payoff-improving after switching costs and
must be accepted by the destination coalition; in these simulations, the
aggregate score increases during the active reconfiguration phase and then
plateaus once the partition becomes strategically stable.

\begin{figure}[t]
\centering
\begin{subfigure}{0.32\textwidth}
    \centering
    \includegraphics[width=\linewidth]{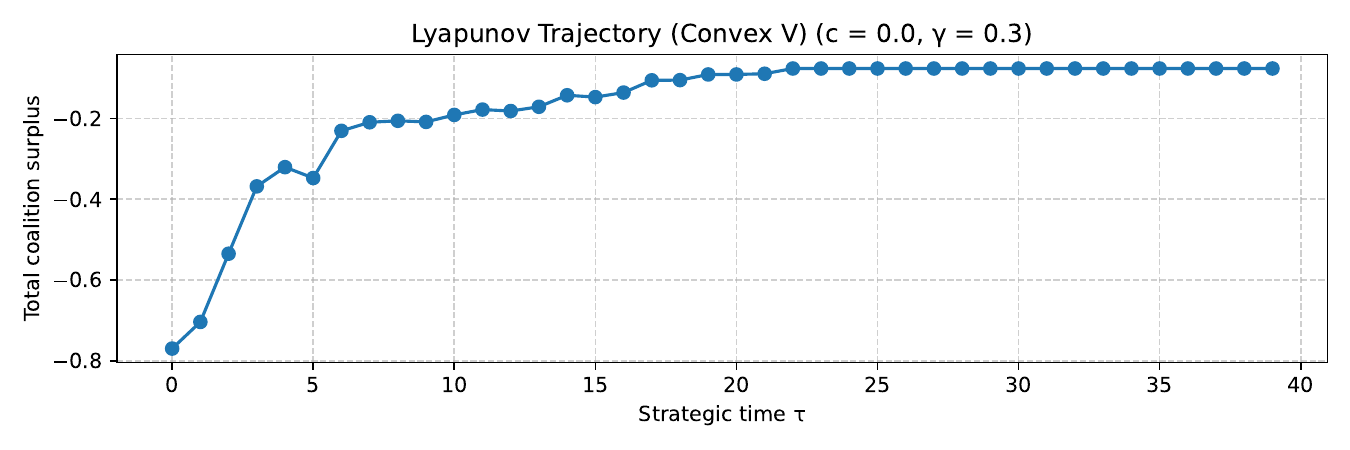}
    \caption{$c = 0$}
\end{subfigure}
\hfill
\begin{subfigure}{0.32\textwidth}
    \centering
    \includegraphics[width=\linewidth]{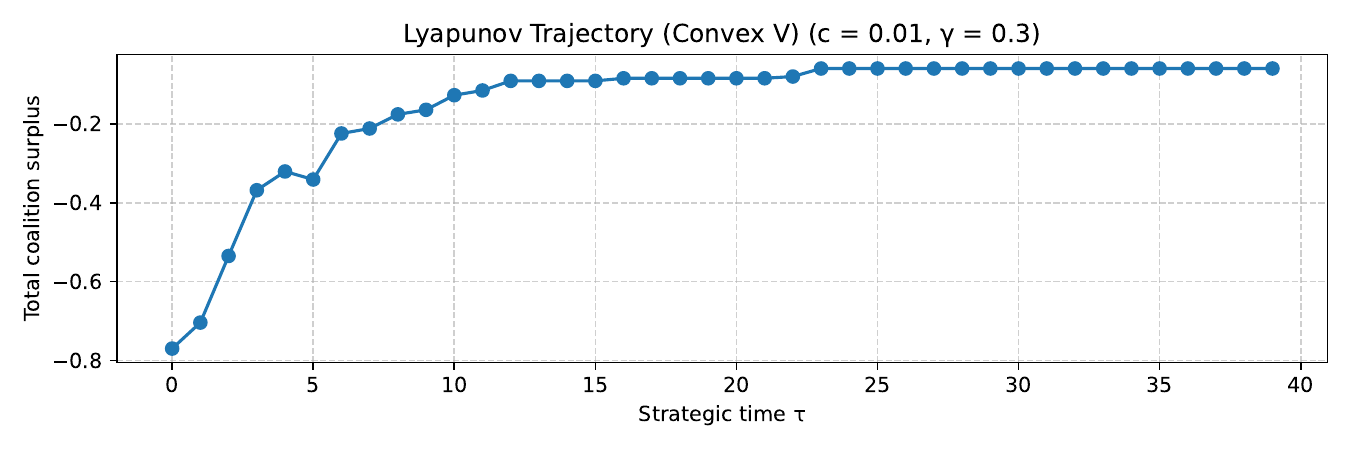}
    \caption{$c = 0.01$}
\end{subfigure}
\hfill
\begin{subfigure}{0.32\textwidth}
    \centering
    \includegraphics[width=\linewidth]{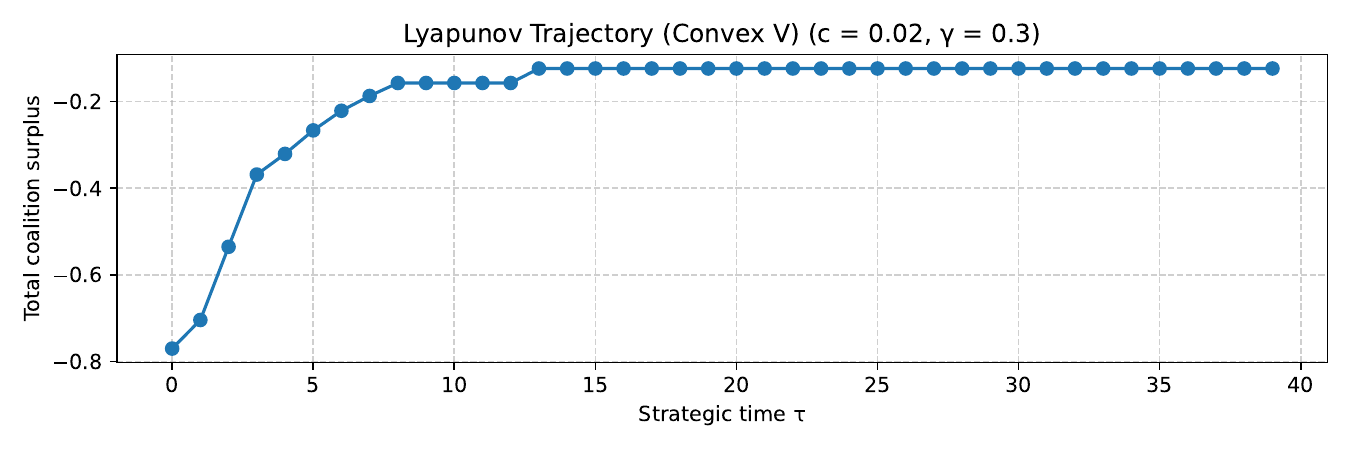}
    \caption{$c = 0.02$}
\end{subfigure}
\caption{Aggregate coalition score \(W(\tau)=\sum_{S\in T(\tau)}v(S;\tau)\)
under nonnegative switching costs.}
\label{fig:lyapunov_panels}
\end{figure}

Higher switching costs reduce the number of strategic moves and therefore
produce flatter surplus trajectories. This does not mean that the tactical
layer is inactive; rather, after the strategic partition stabilizes, additional
tactical averaging occurs within a fixed coalitional architecture.
Figure~\ref{fig:opinion_panels} depicts the evolution of intrinsic opinions for
all agents. Despite heterogeneous initial conditions and changing coalition
memberships, opinions contract toward a common range because each slow update
pulls an agent toward the consensus state of its current coalition.

\begin{figure}[t]
\centering
\begin{subfigure}{0.32\textwidth}
    \centering
    \includegraphics[width=\linewidth]{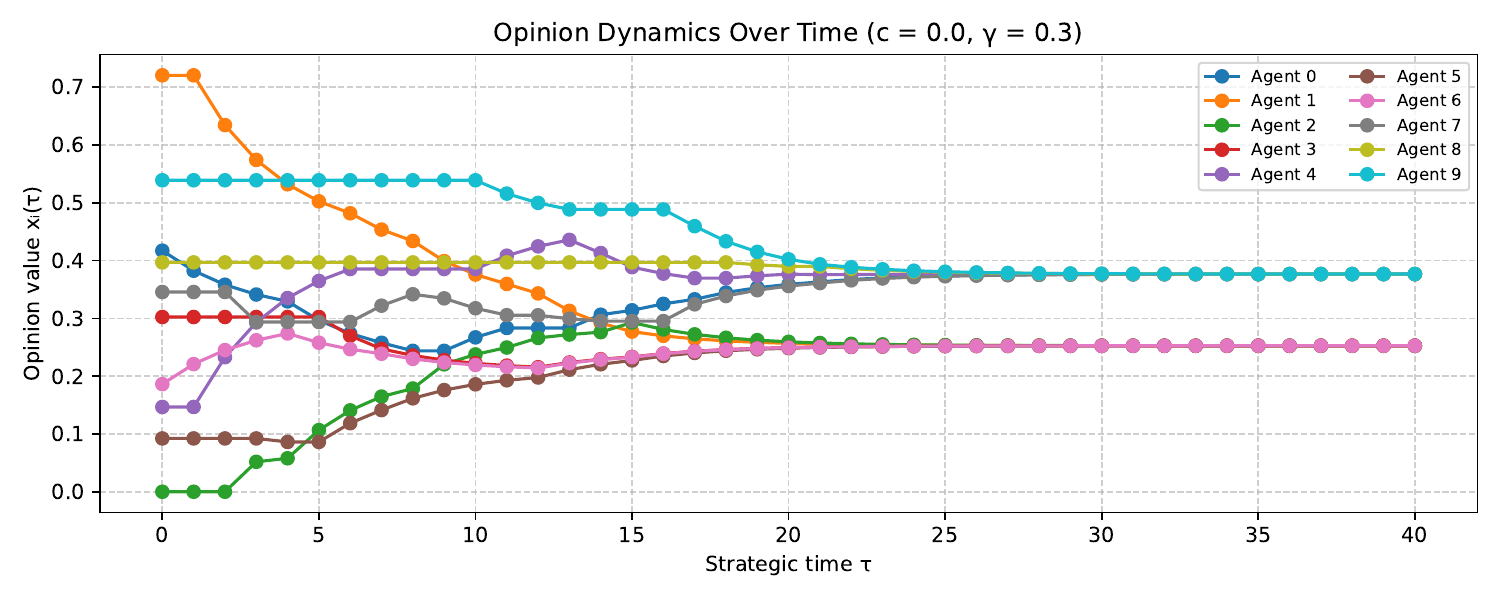}
    \caption{$c = 0$}
\end{subfigure}
\hfill
\begin{subfigure}{0.32\textwidth}
    \centering
    \includegraphics[width=\linewidth]{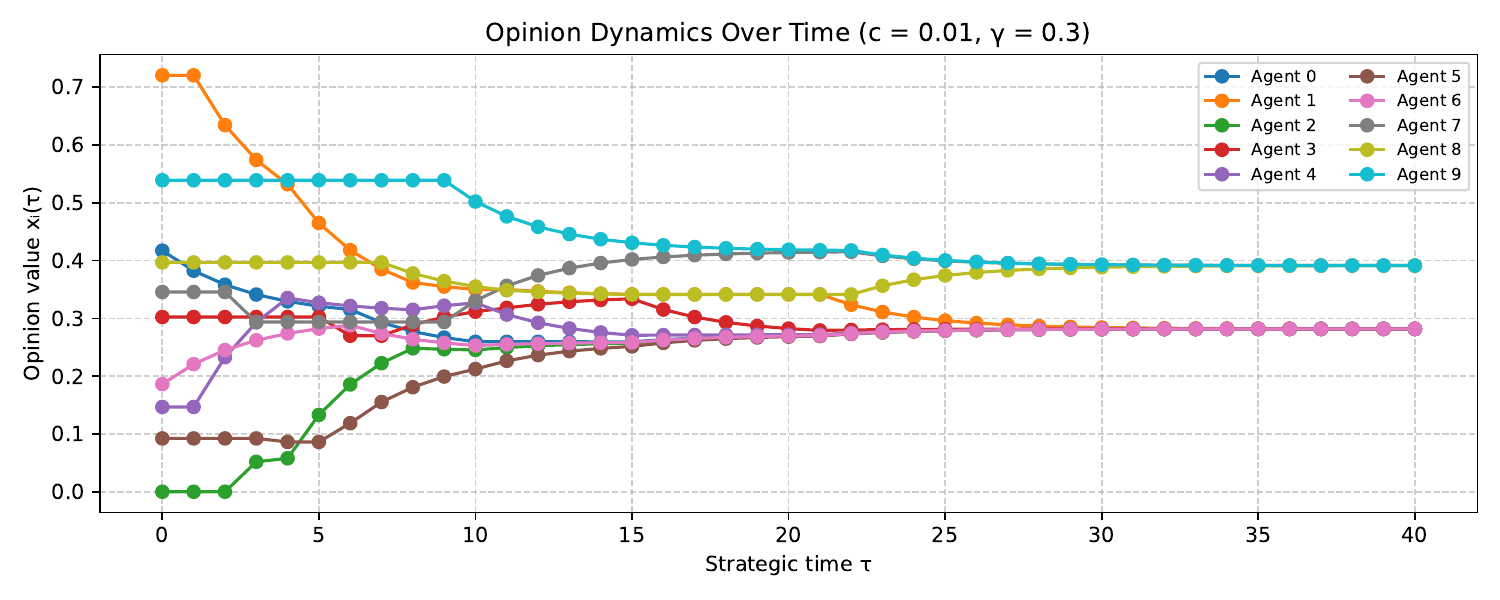}
    \caption{$c = 0.01$}
\end{subfigure}
\hfill
\begin{subfigure}{0.32\textwidth}
    \centering
    \includegraphics[width=\linewidth]{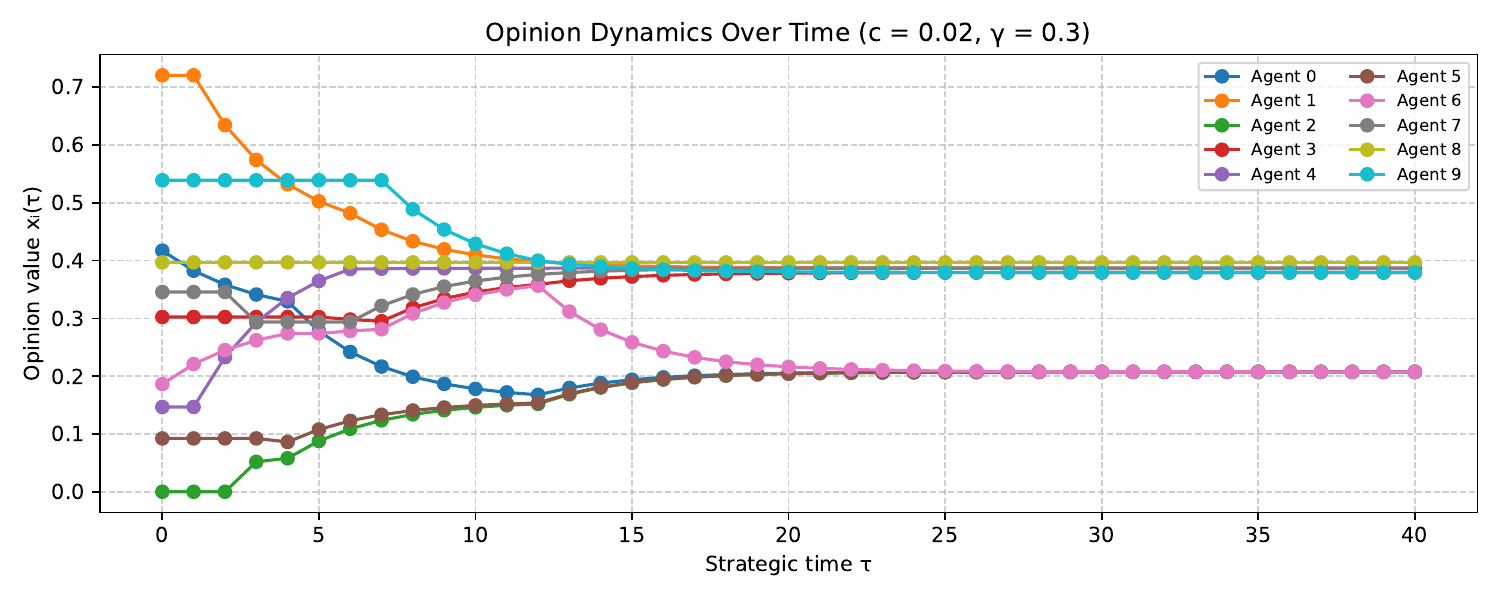}
    \caption{$c = 0.02$}
\end{subfigure}
\caption{Intrinsic opinion trajectories under slow belief adaptation in the
baseline protocol.}
\label{fig:opinion_panels}
\end{figure}

The opinion panels clarify the fast--slow structure of the model. Tactical
consensus acts inside the currently formed coalitions, while the slow update
transfers coalition-level information back into intrinsic opinions. Larger
switching costs indirectly slow global opinion mixing because agents encounter
fewer coalition partners, but the baseline regimes still display contraction of
opinion dispersion.

\subsection{Mobility subsidy and instability-induced mixing}
\label{subsec:negative_cost_exploration}

The preceding experiments restrict \(c\) to be nonnegative, so switching is
never subsidized. We next reverse the sign of this friction and set
\(c_{\mathrm{switch}}=-0.01\), with \(\gamma=0.6\) and the nonconcave performance landscape
\(V(x)=-(x^2-0.5)^2\). A negative switching cost is a mobility subsidy. The
individual improvement condition becomes
\(\Omega_i(T_{\mathrm{new}})>\Omega_i(T)+c_{\mathrm{switch}}
=\Omega_i(T)-0.01\), so a move may be implemented even when it slightly lowers
the agent's immediate Aumann--Dr\`eze payoff. The admissible transition set is
therefore enlarged, and the strategic layer acquires an exploratory component.

Figure~\ref{fig:coalition_negative_cost} displays the coalition evolution.
Coalition labels fluctuate persistently over time and agents repeatedly
enter and leave different coalitions; no stationary partition emerges.

\begin{figure}[t]
\centering
\includegraphics[width=\linewidth]{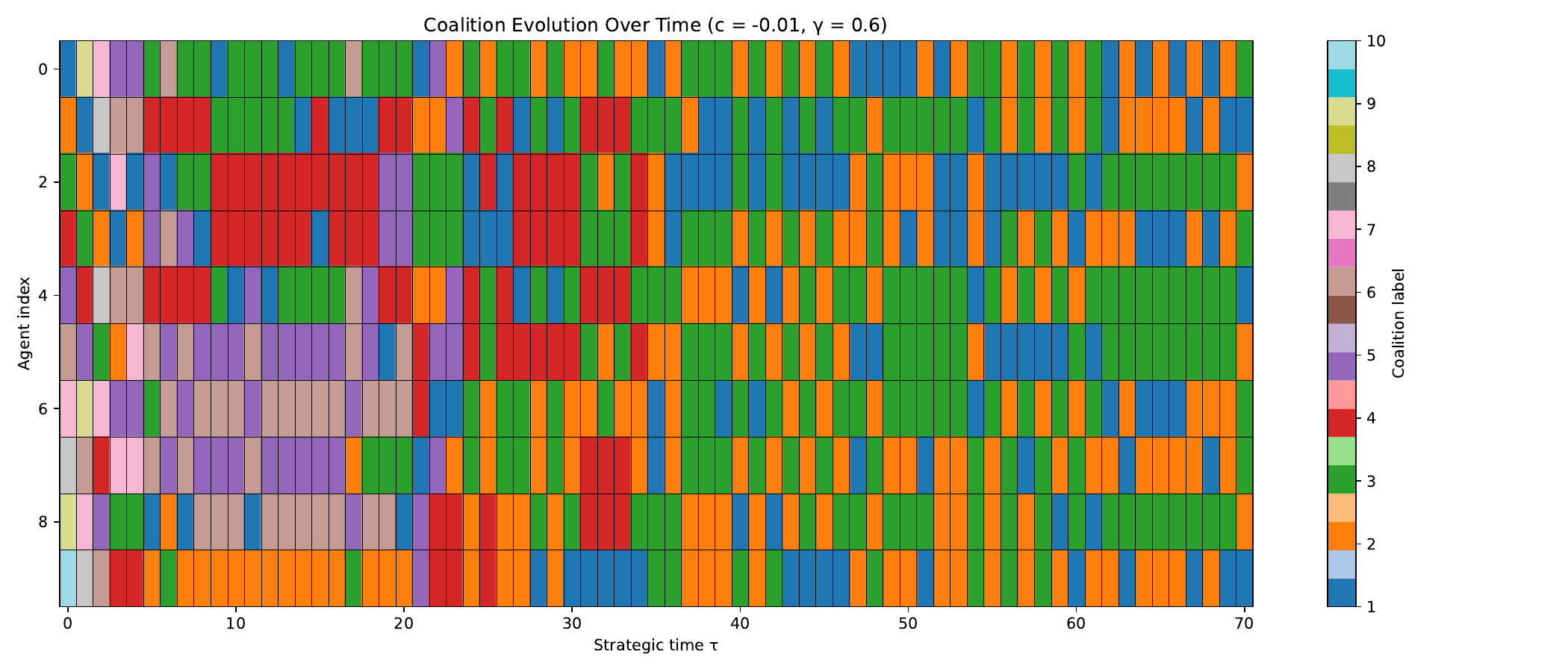}
\caption{Coalition evolution under the mobility subsidy
\(c_{\mathrm{switch}}=-0.01\) and \(\gamma=0.6\). Strategic reconfiguration
persists rather than settling into a fixed partition.}
\label{fig:coalition_negative_cost}
\end{figure}

\begin{figure}[t]
\centering
\includegraphics[width=\linewidth]{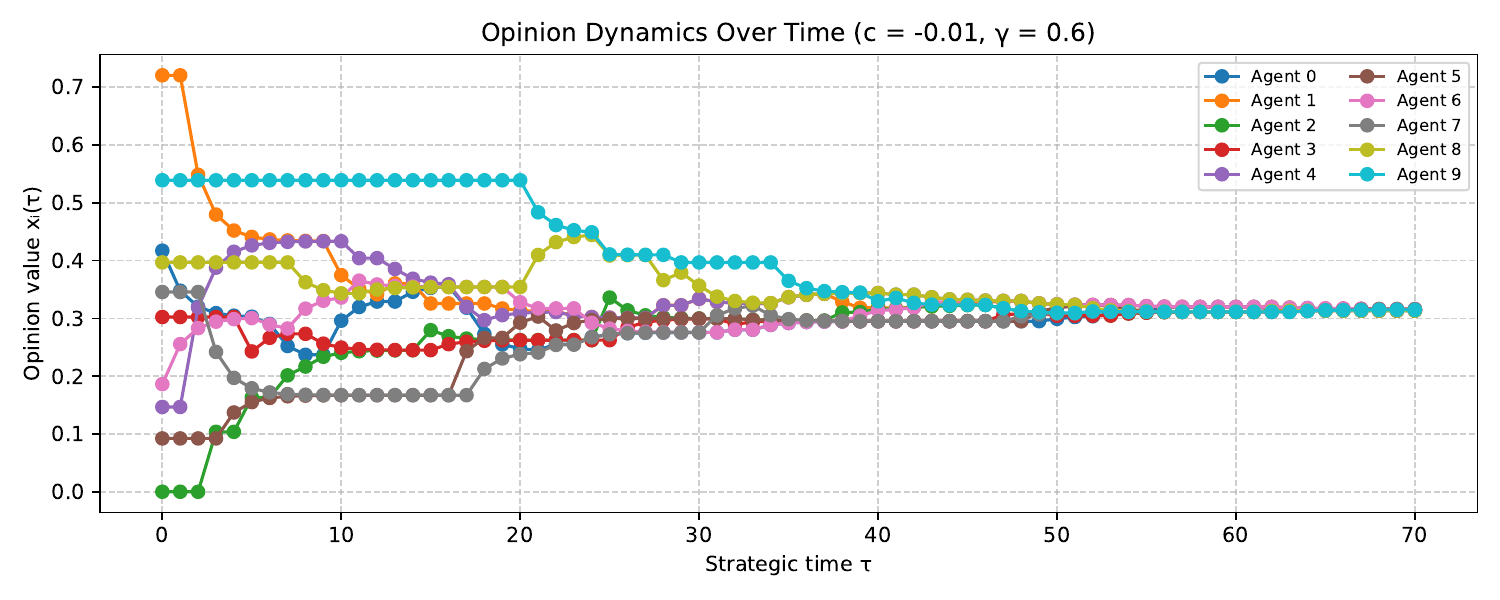}
\caption{Opinion evolution under the mobility subsidy. Despite strategic
nonconvergence, intrinsic opinions converge through repeated coalition-level
mixing.}
\label{fig:opinion_negative_cost}
\end{figure}

The contrast between Figures~\ref{fig:coalition_negative_cost} and
\ref{fig:opinion_negative_cost} is the main numerical point of this regime.
The coalition partition does not converge, yet the intrinsic opinions approach
a common value. This is not a contradiction. At each strategic time, the
partition \(T(\tau)\) induces an averaging operator
\(A(\tau)=(I-\Gamma)+\Gamma\mathcal C(T(\tau))\), so
\(x^0(\tau+1)=A(\tau)x^0(\tau)\). A fixed partition may average only within
disconnected coalitions, but persistent reshuffling changes the interaction
graph over time. When the union of these graphs is repeatedly connected, the
standard consensus mechanism for products of stochastic matrices applies, and
the intrinsic profile contracts toward unanimity.

\begin{observation}[Instability--consensus separation]
\label{obs:instability_consensus_numerical}
In the subsidized-switching regime, strategic nonconvergence can coexist with
tactical convergence. The negative switching cost disrupts the monotone
improvement structure of the coalition process, but the induced sequence of
coalition memberships increases temporal connectivity. In this sense,
strategic instability becomes a mechanism for global opinion mixing rather than
an obstruction to consensus.
\end{observation}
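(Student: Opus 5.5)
The natural route is to reduce the observation to Proposition~\ref{prop:tactical_unanimity_union_connectivity}, applied along the realized trajectory of the subsidized regime. The observation has two parts: strategic nonconvergence can coexist with tactical convergence, and the mechanism is temporal connectivity. So I would separate the argument into a structural part and a mixing part. The structural part shows why the finite-improvement property of Assumption~\ref{ass:finite_improvement_setting} fails when $c_{\mathrm{switch}}<0$. The mixing part shows why intrinsic opinions nonetheless converge.

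\emph{Structural part.} With $c_{\mathrm{switch}}=-0.01$, the admissibility test \eqref{eq:admissibility} accepts moves with $\Omega_i(T_{\mathrm{new}})>\Omega_i(T)-0.01$. As opinions contract, the Aumann--Dr\`eze payoff differences across coalitions shrink; by the line-integral representation of $\Delta_i^{SR}$ in Subsection~\ref{subsec:cognitive_barriers}, they are of order $|\Delta|$. Eventually every agent can move to a destination in which it is nearly indifferent. A move $i\colon S\to R$ and its reversal $i\colon R\to S$ then both become admissible. This exhibits a two-cycle in the improvement graph, so no sink argument like Lemma~\ref{lem:strategic_existence} applies, and the partition sequence need not become stationary.

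Acceptance is the point to check here, since Pareto non-decrease of incumbents could still block such moves. I would either argue that incumbents are also nearly indifferent, or note that the acceptance rule is a modelling choice that can be relaxed. For a rigorous statement, I would present nonconvergence as possibility, via an explicit small example, rather than as a general theorem.

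\emph{Mixing part.} Write $x^0(\tau+1)=A(\tau)x^0(\tau)$ with $A(\tau)=(I-\Gamma)+\Gamma\mathcal C(T(\tau))$, exactly as in Proposition~\ref{prop:limit_characterization}. Each $A(\tau)$ has diagonal entries at least $1-\gamma$. Its positive off-diagonal entries are at least $\gamma\min_{S,j}\pi_j^S$, which is uniformly bounded below because $N$ is finite and there are finitely many $P_S$. The remaining hypothesis is repeated joint strong connectivity: a uniform $L$ such that $\bigcup_{s=\tau}^{\tau+L-1}\mathcal G(s)$ is strongly connected. Given that, the products $A(\tau+L-1)\cdots A(\tau)$ are scrambling, and the Wolfowitz--Hajnal argument of Proposition~\ref{prop:tactical_unanimity_union_connectivity} yields $x^0(\tau)\to x^\star\mathbf 1$. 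Tactical states then converge within each window, as in that proof.

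\emph{Main obstacle.} The hardest step is verifying joint connectivity from the strategic dynamics themselves. Nothing in the exit--and--join rule forces an agent ever to share a coalition with a given other cluster, so I would not try to derive this in general. Instead I would try to show that if the graph union over a tail were disconnected into components, the opinions within each component would converge to distinct cluster values. That would make cross-component moves either strictly profitable or subsidy-admissible, forcing an inter-component edge.

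This contradiction argument needs a nondegeneracy of $V$ along the segment between cluster values. It would also give only eventual connectivity, not a uniform $L$; I expect to need a bounded-waiting-time selection rule for the active agent. Failing that, I would state connectivity as a hypothesis and present the observation as the conditional consequence of Proposition~\ref{prop:tactical_unanimity_union_connectivity}, supported numerically by Figures~\ref{fig:coalition_negative_cost} and~\ref{fig:opinion_negative_cost}.
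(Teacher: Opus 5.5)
Your fallback matches the paper's justification. You take uniform joint connectivity as a hypothesis, write $x^0(\tau+1)=A(\tau)x^0(\tau)$, invoke Proposition~\ref{prop:tactical_unanimity_union_connectivity}, and rest the coexistence claim on Figures~\ref{fig:coalition_negative_cost} and~\ref{fig:opinion_negative_cost}. The paper does exactly this, and it derives neither connectivity nor the loss of finite improvement from the dynamics.

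\textbf{The step that fails.} Your structural heuristic says Aumann--Dr\`eze payoff differences become $O(|\Delta|)$ as opinions contract. This is false. Because $v(\varnothing)=0$, the $R=\varnothing$ term of \eqref{eq:ad_expansion} is $V(x_i^0)/|S|$, not a difference of $V$-values; the line-integral representation you cite from Subsection~\ref{subsec:cognitive_barriers} overlooks this term. At a near-unanimous profile $x_j^0\approx x^\star$, every nonempty $R$ has $v(R)\approx V(x^\star)$, so $\Omega_i\approx V(x^\star)/|C_T(i)|$. A move from $C$ to $D$ therefore changes $i$'s payoff by about $V(x^\star)\bigl(\tfrac{1}{|D|+1}-\tfrac{1}{|C|}\bigr)$, which does not vanish. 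For example, leaving the grand coalition changes the payoff by about $(1-\tfrac1n)V(x^\star)$. So it is not true that eventually every agent has a nearly indifferent destination.

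Your two-cycle mechanism survives in corrected form. With $c<0$, a move and its reversal are both admissible for a fixed profile if and only if $|\Omega_i(F_i(T,D))-\Omega_i(T)|<|c|$. Near consensus this holds for size-preserving moves ($|D|+1=|C|$), and for all moves once $|V(x^\star)|<|c|$.

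The same computation answers your acceptance question. Incumbents of $D$ go from about $V(x^\star)/|D|$ to about $V(x^\star)/(|D|+1)$, which Pareto acceptance permits when $V(x^\star)<0$. That is the case for $V(x)=-(x^2-0.5)^2$ away from its maximizers. The computation also shows why your contradiction sketch for connectivity cannot work in general: cross-component moves can change payoffs by more than $|c|$ in the wrong direction. Keeping connectivity as a hypothesis is therefore the right call. Build your explicit nonconvergence example from the size-dependent payoff formula above, not from the shrinking-differences heuristic.

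\textbf{A smaller point, shared with the paper's proof of Proposition~\ref{prop:tactical_unanimity_union_connectivity}.} Strong connectivity of the union over a single window does not make $A(\tau+L-1)\cdots A(\tau)$ scrambling. Take $n=4$, $L=3$, and successive partitions $\{\{1,2\},\{3\},\{4\}\}$, $\{\{3,4\},\{1\},\{2\}\}$, $\{\{2,3\},\{1\},\{4\}\}$. The union is the path $1$--$2$--$3$--$4$, yet rows $1$ and $4$ of the product are supported on $\{1,2\}$ and $\{3,4\}$. The fix is easy. Coalition graphs are symmetric with self-loops, so extending the product by one earlier window adds at least one node to every row support. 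Hence products over $(n-1)L$ consecutive steps are entrywise positive, and that is the block length the Wolfowitz--Hajnal step should use.
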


The numerical experiments therefore support the conceptual separation developed
in the theory. Nonnegative switching costs operate as cognitive barriers that
stabilize coalition structures and slow cross-coalition information exchange.
Negative switching costs remove this barrier, producing persistent strategic
motion; nevertheless, the tactical layer can still converge because repeated
coalition reshuffling supplies the temporal connectivity required for global
averaging.

\section{Conclusion}
\label{sec:conclusion}

This paper has developed a multiscale model of coalition formation in which
strategic exit--and--join decisions are coupled to tactical consensus dynamics
inside coalitions. The framework treats coalition value as an endogenous object:
agents first aggregate information through within-coalition DeGroot dynamics,
and the resulting consensus state determines the transferable utility that
drives Aumann--Dr\`eze payoffs, acceptance decisions, and switching incentives.
This construction connects two levels of collective behavior that are often
studied separately. The tactical layer describes how agents coordinate and
produce a coalition-level state, while the strategic layer describes how agents
reconfigure the coalitional architecture in response to the payoff implications
of that state.

The analysis identifies conditions under which the coupled process admits
stable configurations and clarifies when tactical agreement and strategic
coalition formation reinforce or obstruct one another. Switching costs play a
particularly central role. When they are nonnegative, they act as cognitive
barriers that restrict admissible moves, enlarge viability regions, and promote
strategic persistence. Under additional convexity or supermodularity conditions,
the induced coalition game supports stronger forms of strategic unanimity, while
nonconcave performance landscapes and acceptance constraints can sustain
segregation, local stability, or persistent disagreement. These results show
that consensus at the tactical level is neither automatic nor equivalent to
grand-coalition formation at the strategic level; the two notions of unanimity
are linked through the induced value function, payoff allocation rule, and
frictions governing coalition mobility.

The numerical experiments illustrate the same separation in finite
trajectories. Increasing nonnegative switching costs stabilizes coalition
membership and slows cross-coalition information exchange, whereas a mobility
subsidy can destroy strategic convergence while still promoting global tactical
consensus through repeated temporal mixing. This instability--consensus
separation suggests that strategic volatility is not always detrimental: when it
creates sufficiently rich interaction patterns, it can improve information
aggregation even in the absence of a stable partition. Future work may extend
the model to heterogeneous learning rates, stochastic acceptance rules,
network-constrained mobility, and mechanism-design problems in which switching
frictions or subsidies are chosen to balance institutional stability against
global information integration.

\bibliographystyle{abbrv}
\bibliography{ref}

\end{document}